\theoremstyle{definition}
\newtheorem{definition}{Definition}[section]
\newtheorem{theorem}{Theorem}[section]
\newtheorem{lemma}[theorem]{Lemma}
\theoremstyle{remark}
\def\tilX{\tilde{X}}
\def\ceil{\lfloor{\frac{n}{2}}\rfloor}
\title{Towards Positive Geometries of Massive Scalar field theories}
\author[a]{Mrunmay Jagadale,}
\author[b]{Alok Laddha,}
\affiliation[a]{California Institute of Technology, Pasadena, CA 91125, USA}
\affiliation[b]{Chennai Mathematical Institute, Siruseri, Chennai, India.}
\emailAdd{mjagadal@caltech.edu}
\emailAdd{aladdha@cmi.ac.in}
\date{}
\abstract{Building on the prior work in \cite{Jagadale:2021iab} we locate a family of positive geometries in the kinematic space which are a specific class of  convex realisations of the associahedron. These realisations are obtained by scaling and translating the kinematic space associahedron discovered by by Arkani-Hamed, Bai, He and Yan (ABHY). We call the resulting polytopes, deformed realisations of the associahedron.


The deformed realisations shed new light on the CHY formula. One of the striking discoveries in \cite{Arkani-Hamed:2017mur} was the fact that the CHY scattering equations generate diffeomorphism between the (compactified) CHY moduli space ${\cal M}_{0,n}(\mathbb{R})$ and the ABHY associahedron. As we argue, the deformed realisation of the associahedron can also be interpreted as an diffeomorphic image of ${\cal M}_{0,n}(\mathbb{R})$ under a class of scattering equations that we call deformed scattering equations. The canonical form in the kinematic space  is thus once again the push-forward of the Parke-Taylor form . A natural off-shoot of our analysis is the universality of the Parke-Taylor form as a CHY Integrand for a class of (tree-level and planar) multi-scalar field amplitudes.  

These ideas help us in proving the existence of positive geometries for certain specific multi-scalar interactions. We prove that in a field theory with a massless and a massive ($\phi_{1},\, \phi_{2}$) bi-adjoint scalar fields which interact via, $\lambda_{1}\phi_{1}^{3}\, +\, \lambda_{2}\phi_{1}^{2}\phi_{2}\, +\, \lambda_{3}\phi_{1}\phi_{2}^{2}$ interaction, the tree-level S-matrix with massless external states expanded up to $\lambda_{2}^{2}$ is a weighted sum over Canonical forms defined by certain deformed realisations of the associahedron. Finally, we show that these ideas admit an extension to one-loop. In particular, the one loop S-matrix integrand upto $O(\lambda_{2}^{2}, \lambda_{3})$ is a weighted sum over canonical forms of a family of deformed realisations of the type-D cluster polytope,  discovered in \cite{Arkani-Hamed:2019vag, afpst}.}
\begin{document}
\maketitle

\section{Introduction}

The Amplituhedron program (\cite{Arkani-Hamed:2013jha}, \cite{Arkani-Hamed:2013kca}, \cite{Ferro:2020ygk} and references therein) is an attempt to classify S-matrix in terms of differential forms associated to a specific class of polytopes such that the irreducible postulates of the old S matrix program, such as unitarity and locality are derived from certain deeper principles which are intrinsically tied to the geometry and combinatorics of these polytopes. In the context of planar Super-Yang-Mills and massless bi-adjoint scalar field theories, there is a strikingly coherent picture that continues to emerge. 

From a minimal structure of kinematic space (constrained only by Poincare invariance of flat space), and under the assumption of color-ordering,  one discovers a class of polytopes located inside the kinematic space. All of these polytopes belong to a class known as  positive geometries and each positive geometry induces a unique differential form on the (projectivized) kinematic space. In the seminal paper \cite{Arkani-Hamed:2017mur},  Arkani-Hamed, Bai, He and Yan (ABHY) discovered specific  kinematic space realisation of a combinatorial polytope known as associahedron such that when the canonical form defined by the associahedron is restricted onto the ABHY realisation, one obtains scattering amplitudes for a bi-adjoint scalar theory with cubic interaction.

Even at this relatively early stage of the program, following questions naturally emerge. 
\begin{itemize}
\item What are the class of positive geometries whose associated forms are the S-matrix of some theory ? 
\item Given a combinatorial polytope, why does a very specific convex realisation of the polytope generates scattering amplitude. A combinatorial polytope such as associahedron has an infinitude of convex realisations, and how many of them are associated to QFT S-matrix? 
\end{itemize}
These questions are tied to a remarkable result proven in \cite{Bazier-Matte:2018rat, Padrol2019AssociahedraFF} which can be summarised as follows. Out of an infinity of convex realisations of an associahedron (and in general any accordiohedron) in ${\mathbb{R}}^{\frac{n(n-3)}{2}}$, there is a family of convex realisations which are polytopal realisations of quiver of type $A_{n}$ (or in general a dissection quiver).  And it is precisely these realisations that generate S-matrix of massless scalar field theories. It is this striking connection between algebraic combinatorics (and more specifically the theory of gentle algebras), associahedron (Accordiohedron) polytope and S-matrix of a local QFT which forms the basic edifice of the S-matrix program . 

The robustness of this edifice makes us wonder if there is any ``space" to obtain a S-matrix of massless scalar particles with generic scalar interactions involving a spectra of scalar fields with different masses. However one juncture at which an ``additional" input has been put in these constructions is the identification of $\mathbb{R}^{\frac{n(n-3)}{2}}$,  the embedding space in which associahedron and accordiohedron is realised from the polytopal fan construction \cite{Arkani-Hamed:2017mur,Bazier-Matte:2018rat,Padrol2019AssociahedraFF}, with the kinematic space ${\cal K}_{n}$. In this paper, we explore some of the simplest consequences of scrutinizing and relaxing this input. That is, we consider a class of linear maps from the embedding space in which an associahedron (or more generally an accordiohedron is realised) to the kinematic space of scalar particles. As we show, the result is a class of positive geometries in ${\cal K}_{n}$ which are deformed realisations of the associahedron (accordiohedron) and for a class of deformations these are positive geometries for S-matrix of multi-scalar field theories in which the scalars have unequal masses. 

More in detail, if the map  between the embedding space and kinematic space is not identity,  then  the ABHY realisation of an associahedron  in the embedding space appears deformed in the kinematic space. We will refer to all such realisations as deformed associahedron. Each deformed realisation is a simple polytope and defines a canonical form in the (projectivized) Kinematic space. In this paper, we initiate an investigation into the conditions under which the canonical form defined by the deformed associahedron is the (color-ordered) tree-level amplitude of a local quantum field theory. Although a complete analysis of the relationship between deformations and QFT amplitudes is beyond the scope of this work, for several classes of deformations, we write down the interactions whose scattering amplitudes are the canonical forms.  In a nut-shell the picture that emerges can be summarised as follows.
{\setlength{\arrayrulewidth}{0.5mm}
\begin{center}
\begin{tabular}{ |c  c   c| }
\hline
\textbf{Massive poles} & $\xrightarrow{\hspace{1.2cm}}$   & \textbf{Shift the positive wedge} \\
\textbf{Multiple fields} & $\xrightarrow{\hspace{1.2cm}}$ & \textbf{Rotate the hyper-planes} \\
\textbf{Higher-point interactions} &  $\xrightarrow{\hspace{1.2cm}}$ & \textbf{Project the polytope} \\
\hline
\end{tabular}
\end{center}}
Next, we ask the ``reverse" question. Given a Lagrangian consisting of several bi-adjoint scalars with distinct masses and cubic interactions between different fields, what are the deformations of ABHY associahedron which generate the corresponding S-matrix. Once again, we analyze this question in the simplest possible example. 

Following our earlier work \cite{Jagadale:2021iab}, we consider a Lagrangian consisting of two scalars $\phi_{1},\, \phi_{2}$ with unequal masses and a cubic interaction $\lambda_{1}\phi_{1}^{3}\, +\, \lambda_{2}\phi_{1}^{2}\phi_{2}$. We prove that a specific deformation of the ABHY associahedron is the positive geometry of this perturbative S-matrix. This example is motivated by our previous work \cite{Jagadale:2021iab} in which we had argued that a family of associahedra denoted as $A_{n-3}^{(i,j)}$  whose co-dimension one facets are located at
\begin{flalign}\label{introeq}
X_{kl}\, =\, m^{2}\, \textrm{if}\ (k,l)\, \in\, \{\, (i,j),\, \dots,\, (i + \vert j - i\vert\, - 1, j + \vert j - i\vert - 1)\, \}
\end{flalign}
generate S-matrix of the two-scalar field theory to $O(\lambda_{2}^{2})$. More in detail, it was proved that a certain weighted sum over $A_{n-3}^{(i,j)}\ \forall\, (i,j)$ along with. $A_{n-3}$ (the ABHY associahedron where all facets correspond to massless poles) produces the desired S-matrix.  In \cite{Jagadale:2021iab}, such translated associahedron was referred to as the colorful associahedron. 

Colorful associahedron has two kinds of vertices. One class of vertex is adjacent only to $X_{ij}\, =\, 0$ facets and another class is adjacent to $n-3$ facets out of which precisely one facet is at $X_{kl}\, =\, m^{2}$.  One drawback of our previous analysis was that residue of the canonical form defined by $A_{n-3}^{(i,j)}$ on both type of vertices was one. This is rather unnatural as the vertex of the first type is associated to a channel with purely massless poles and contributes to the S-matrix at order $\lambda_{1}^{n-2}$ and the second type of vertex which corresponds to a channel in which precisely one pole is massive should contribute at $\lambda_{1}^{n-4}\, \lambda_{2}^{2}$. 

This drawback is naturally resolved as our deformations of ABHY realisation include a scaling in addition to the translations parametrized in eqn.(\ref{introeq}). We show that given any $\frac{\lambda_{1}}{\lambda_{2}}$, there is a unique choice of $\frac{\alpha_{1}}{\alpha_{2}}$ which generates the perturbative S matrix. 

The simple observation  of exploring space of isomorphisms between embedding space for the ABHY associahedron and the kinematic space has interesting ramifications for the CHY formula. The non-trivial identification between kinematic space and the embedding space for polytopes can also be interpreted as deforming the CHY scattering equations so that the ``deformed" diffeomorphism between worldsheet associahedron and the kinematic space associahedron can be used to push forward the \emph{park-taylor} form to the kinematic space. Hence the Parke-Taylor form on worldsheet can generate a wide class of scalar field amplitudes with cubic interactions where information about the internal as well as the external masses and couplings is inside the scattering equations. The CHY formula for a massive $\phi^{3}$ amplitude that was obtained by Dolan and Goddard is the simplest such example of deformed scattering equations.

This paper is organised as follows.

In section \ref{review} , we review the positive geometry of bi-adjoint scalar field theories with mass $m$ and monomial interactions $\phi^{p}$. 

In section \ref{dpkv} we parametrize the embedding space ${\mathbb{R}}^{\frac{n(n-3)}{2}}$ which admit the type cone realisations of associahedron and in general accordiohedron in terms of the action of ${\cal G}$ on the kinematic space ${\cal K}_{n}$ of massless particles.\footnote{Throughout this paper, ${\cal K}_{n}$ will be  the ``physical" kinematic space as we will  analyse S-matrix of massless external particles with different interactions}. For a generic action of ${\cal G}$ we obtain a deformed realisation of the associahedron in the kinematic space. 

In section \ref{draa} we show how the canonical form defined by such a deformed realisation is the push-forward of Parke-Taylor form by scattering equations which generate diffeomorphism between worlsheet and the embedding space. In section \ref{multiparadef}, we illustrate how deformations can lead to S-matrix of a local QFT. In particular, we analyse a specific class of deformed realisation $A_{n-3}^{\{\alpha\}}$ where the deformation parameters are not arbitrary but are labelled by the ``shortest distance" between two vertices of the polygon. As we show $A_{n-3}^{\{\alpha\}}$ is a positive geometry for an scattering amplitude involving $n$ massless states and an interaction parametrized by the deformation parameters $\{\alpha\}$. 

In the following section, \ref{fldr}, we start with a specific interaction and prove that there exists a family of deformed realisations such that a weighted sum over the corresponding canonical forms is the S-matrix upto a given order in perturbation theory. 

Finally in sections \ref{drolcp}-\ref{drdncp}, we show how these results naturally extend to one-loop. In particular after reviewing the construction of the so-called $\hat{D}_{n}$ cluster polytope in section \ref{drolcp}, which is a positive geometry for 1-loop bi-adjoint scalar $\phi^{3}$ planar integrand, in the subsequent sections we extend the results of \ref{fldr} to one loop case. We finally end with a discussion of the results in section \ref{diss}.

\section{Review of the ABHY associahedron, massless S-matrix and accordiohedron projections}\label{review}
In this section, we give a brief overview of the amplituhedron picture for planar scalar field theories at tree level. For more details and gerenral overview of the subject we refer to \cite{Arkani-Hamed:2017mur,Banerjee:2018tun,Raman:2019utu,Aneesh:2019ddi,Aneesh:2019cvt,Jagadale:2021iab,Yang:2019esm,Herrmann:2022nkh}. The poles of tree level scattering amplitude in a planar theory are of the form $\frac{1}{X_{ij}-m_{ij}}$, where $X_{ij} = (p_{i}+ p_{i+1} + \cdots + p_{j-1} )^{2}$, with $p_{i}$ being the momenta of the $i$th external particle. Not all poles can occur in a single term of the amplitude. For example, at 5-point, the poles $\frac{1}{X_{13}-m_{13}}$ and $\frac{1}{X_{24}-m_{24}}$ can not occur together in the same term. This gives the set of all poles possible in a tree level $n$-point scattering amplitude of a planar theory with one scalar field, a structure of a combinatorial polytope. This combinatorial polytope is called associahedron. 

Not all poles that are possible in a tree level $n$-point scattering amplitude of a planar theory can occur in all theories. For example, the pole $\frac{1}{X_{13}-m_{13}}$ would never occur in a theory with only quartic interaction. On the other hand all poles that are possible in a tree level $n$-point scattering amplitude of a planar theory do occur in the theories with a cubic interaction. Thus positive geometry of a theory with a cubic interaction play a central role in the world of positive geometries of scalar theories. 

We will now focus on planar massless $\phi^{3}$ theory. To get the scattering amplitude of this theory we have to realise this combinatorial polytope in the kinematic space. We represent the kinematic space of  scattering momenta for $n$-particle scattering amplitudes in which all the external states are scalars by ${\cal K}_{n}$ \cite{Arkani-Hamed:2017mur}. In $D > n$ dimensions ${\cal K}_{n}$ is co-ordinatized by linearly independent planar kinematic variables $X_{ij}$ with $|i-j|\geq 2$. That is, the set $\{\, X_{ij}\, \vert\, \vert i - j\vert\, \geq\, 2\, \}$ spans ${\cal K}_{n}$.

There is a one-to-one correspondence between the planar variables $X_{ij}$ and the diagonals of an $n$-gon. Under this correspondence each triangulation corresponds to a term in the tree-level amplitude of planar $\phi^{3}$ theory. This correspondence provides us a convenient way to talk about combinatorics of poles of scattering amplitudes. 

To get the geometric realization we have to first consider a positive wegde whose boundaries correspond to the poles of scattinging amplitudes. This wedge is given by 
\begin{equation}
    X_{ij} \geq 0 \hspace{1cm} \forall (i,j) \text{ such that } |i-j|>1.
\end{equation}
Then, we consider a set of hyper-planes that capture the combinatorics of the poles. For every triangulation of $n$-gon we have a set of hyper-planes. Given any triangulation $T$ consider the intersection of set of hyper-planes given by, 
\begin{equation}
    X_{i \, j} + X_{i+1 \, j+1}-X_{i \, j+1}-X_{i+1 \, j} = c_{i \, j} \hspace{ 1cm } \text{ for all } (i,j) \notin T^{c},
\end{equation}
where $T^{c}$ is the triangulation that is obtained by rotating $T$ by $\frac{2 \pi }{n}$ in counter-clockwise direction. The intersection of the positive wedge and these hyper-planes gives us a ABHY realisation of associahedron. 

There is a projective form associated with associahedron. This form is given by 
\begin{equation}
    \Omega\left[\mathcal{A}\right]= \sum_{v\in \mathcal{A}} \mathrm{sgn} (v) \bigwedge_{ (i,j) \in v} \mathrm{d}\log(X_{ij})
\end{equation}
where the sum is over vertices of the associahedron which correspond to triangulations of $n$-gon, and the wedge product is over diagonals of the triangulations. This form when restricted to the ABHY realisation of associahedron in the kinematic space gives us the scattering amplitude of planar $\phi^{3}$ theory.

Let's move on to tree-level scattering amplitudes of planar theories with single scalar field but now with higher-point interactions. They are given by polytopes called accordiohedra. Unlike in $\phi^{3}$ theory we need more than one polytope, we need a family of polytopes to capture the scattering amplitude of this theory. However they all come from the associahedron. 

For each dissection $D$ of an $n$-gon there is a accordiohedron. To get the realisation of this accordiohedron we first consider a triangulation $T$ such that $D \subset T$ and consider the realisation of associahedron associated with $T$. We then project this associahedron onto the space spanned by $X_{ij}$ such that $(i,j)\in D$. This projection is a realisation of accordiohedron associated with $D$. For example, in figure \ref{projectionofasso}, the green square is the realisation of accordiohedron associated with the dissection $(13,46)$, while the red pentagon is the realisation of accordiohedron associated with the dissection $(13,14)$. 

\begin{figure}[ht]
    \centering
    \includegraphics[scale=0.42]{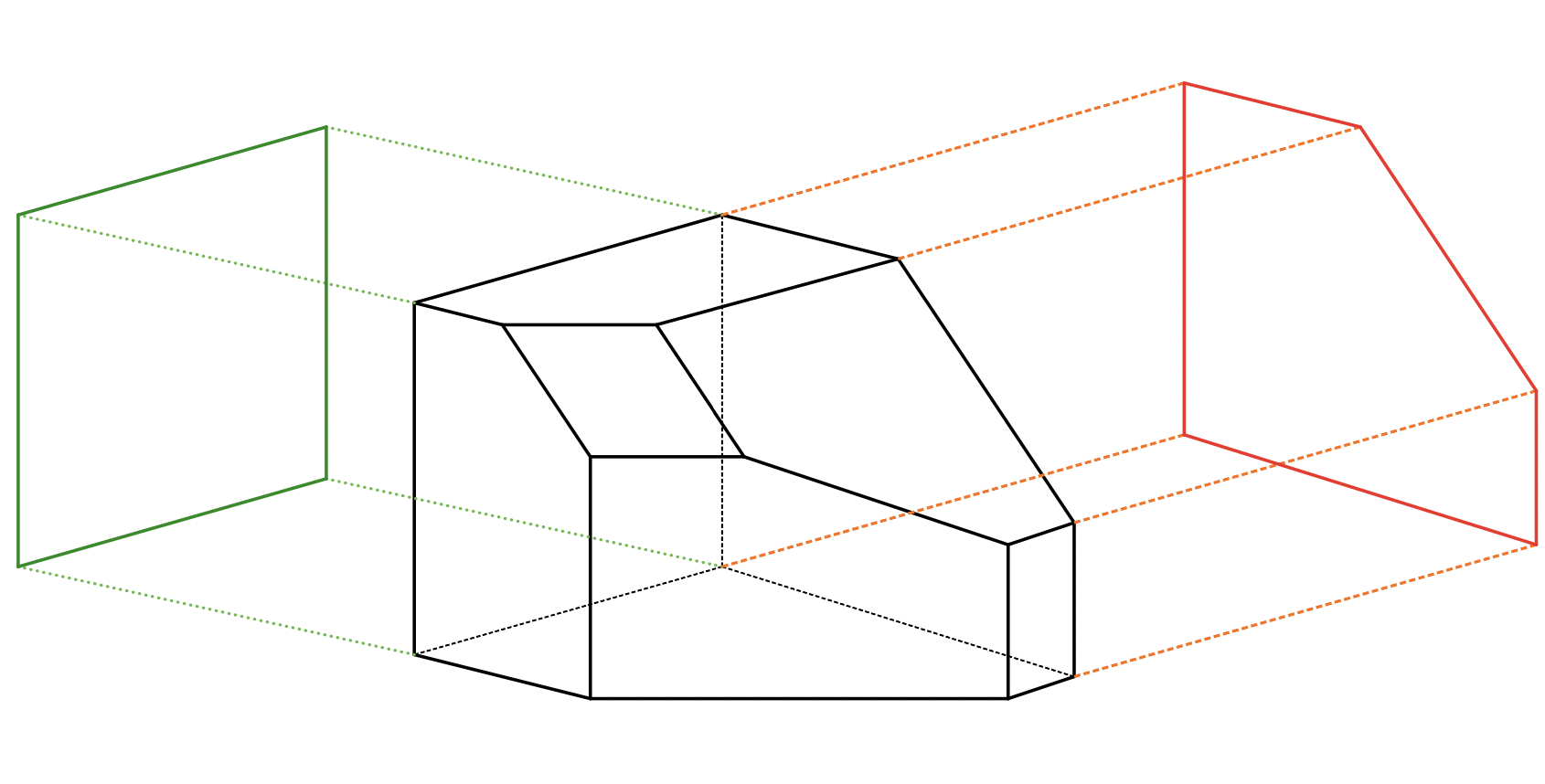}
    \caption{Projection of associahedron}
    \label{projectionofasso}
\end{figure}

Given the realisation of accordiohedron we consider the canonical form associated with it and then restrict it on the realisation to get contribution towards the scattering amplitude. The weighted sum of contribution from families of accordiohedra associated with dissections made of $p$-angulations with $p\in\{ p_{1}, p_{2},\ldots, p_{r} \}$ gives us the scattering amplitude of theory in which the interaction term has the operators $\phi^{p_{1}},\phi^{p_{1}},\ldots, \phi^{p_{r}} $.

\section{Deforming the Planar Kinematic Variables}\label{dpkv}
The kinematic space $\mathcal{K}_{n}$ only contains the kinematical information of the external particles. As discussed in section \ref{review}, the two main ingredients that gave us the associahedron in kinematic space (whose projective canonical form in turn gives the scattering amplitude) were the positive wedge and the hyper-planes. 

\begin{center}
\begin{tabular}{ |c | c |  c| }
\hline
Physical input &   Algebra & Geometry \\
\hline
Location of poles &  $X_{ij}\geq 0 $ & Positive wedge \\
\hline
Combinatorics of Scattering &  $X_{ij} +X_{k\ell } -X_{i \ell} -X_{kj} = c_{ijk \ell}$ & Hyper-planes \\
\hline
\end{tabular}
\end{center}

The equations $X_{ij}\geq 0 $ tells us that in a massless scalar theory the poles of scattering amplitude are located at $X_{ij}=0$. Therefore, this result can be trivially extended to the case when the scalar field is massive. Given a triangulation $T$, we simply consider the realisation of associahedron given by the intersection of following regions,
\begin{equation}
 X_{ij} - m^{2}  \geq 0 \hspace{1cm} \forall (i,j) \text{ such that } |i-j|>1.   
\end{equation}
\begin{equation}
    (X_{ij}- m^{2}) + (X_{i+1  j+1}-m^{2})-(X_{i  j+1}-m^{2})-(X_{i+1  j}-m^{2}) = c_{i  j} \hspace{ 0.5cm } \text{ for all } (i,j) \notin T^{c}.
\end{equation}
This is equivalent to simply shifting $X_{ij} \longrightarrow X_{ij}-m^{2}$. Thus the polytope is simply the rigid translate of associahedron all of whose facets correspond to massless poles.

Given $X_{ij}\geq 0 $, and $c_{ijk \ell}\geq 0$, the equations $X_{ij} +X_{k\ell } =X_{i \ell}+X_{kj}+ c_{ijk \ell}$, capture the fact that, in a massless planar theory the poles $\frac{1}{X_{ij}}$ and $\frac{1}{X_{k \ell}}$ with $i< k< j<l$, can not come together in the same term of the scattering amplitude. However, the same combinatorics would be captured if we had instead consider hyper-planes given by the following equations,
\begin{equation}
    \alpha_{ij} X_{ij} + \alpha_{k \ell} X_{k \ell } =  \alpha_{i\ell} X_{i\ell} + \alpha_{k j} X_{k j } + c_{ijk\ell}.
\end{equation}
With $\alpha_{ij},\alpha_{k\ell}, \alpha_{i \ell} $ and $\alpha_{k j}$  being positive. Heuristically, one could argue that since the pole $\frac{1}{X_{ij}}$ is ``no different" than $\frac{1}{X_{k \ell}}$ or any other pole, all the $\alpha$s should be same and the over all positive $\alpha$ can be absorbed into $c_{ijk\ell}$. However this begs the questions : What happens when we take the $\alpha$s to be different? We claim that taking $\alpha$s to be different gives us a realisation of associahedron that gives the scattering amplitude of theory with multiple fields.

Let us now consider the specific scenario where all the external particles have the same mass $m$.\footnote{However we believe that ideas explored below can be generalised to S-matrix for scalar particles with arbitrary masses.} We now define a new set of planar kinematic variables which are obtained from $\{\, X_{ij}\,\}$ by a linear action composed of a translation and a diagonal subgroup of $GL(\frac{n(n-3)}{2})$ with positive entries.  That is consider, 
\begin{flalign}
{\cal G}\, =\, ({\mathbb{R}}_{+})^{\frac{n(n-3)}{2}}\, \cross\, {\mathbb{R}}^{\frac{n(n-3)}{2}}
\end{flalign}
Given a kinematic space co-ordinatized by the variables $X_{ij}$ with $X_{i,i+1}\, =\, X_{1,n}\, =\, m^{2}$ we define the following kinematic variables. 
\begin{flalign}
\begin{array}{lll}
\kappa_{ij}\, :=\, \alpha_{ij}\, \tilde{X}_{ij}\, =\, \alpha_{ij}(\, X_{ij}\, -\, m_{ij}^{2})
\end{array}
\end{flalign}
Where $m_{i,i+1}^{2}\, =\, m^{2}$, but for $\vert j - i\vert\, \geq\, 2$, $\alpha_{ij},\, m_{ij}^{2}$ are  arbitrary and independent positive parameters. 
We note that $\kappa_{i,i+1}\, =\, 0$ for massive as well as mass less external particles.
We now define ``deformed" Mandelstam variables $\tilde{s}_{ij}$  as, 
\begin{flalign}
\tilde{s}_{ij}\, =\, \kappa_{i,j+1}\, +\, \kappa_{i+1, j}\, -\, \kappa_{ij}\, -\, \kappa_{i+1,j+1}
\end{flalign}
Given any triangulation $T$, consider a set of constraints,
\begin{equation}
    \tilde{s}_{ij}  =  -  c_{ij} \hspace{0.7cm} \forall  (i,j)  \notin  T^{c}
\end{equation}
where $T^{c}$ is the triangulation that is obtained by rotating $T$ by $\frac{2\pi}{n}$ in counter-clockwise direction. 

If we consider the ABHY realisation as a realisation of the combinatorial associahedron in an abstract embedding space ${\mathbb{R}}^{\frac{n(n-3)}{2}}$. Then the realisation that gives the $\phi^{3}$ amplitude is given by identification of this embedding space with the kinematic space such that unit vectors in ${\mathbb{R}}^{\frac{n(n-3)}{2}}$ get identified with $X_{ij}$. On the other hand if we identify the embedding space unit vectors with $\kappa_{ij}$ then we get the realisation above. However, now, as the embedding space is not isometric to the kinematic space (unless all the $\alpha_{ij}$ parameters are equal and non-zero), the ABHY realisation is ``deformed" in ${\cal K}_{n}^{+}$. Here, by deformed we simply mean that the ABHY realisation is stretched or compressed in some directions. We will denote the deformed realisation as $A_{n-3}^{\{ \alpha\}}$. 

\begin{center}
\begin{tabular}{ |c | c |  c| }
\hline
Physical input &   Algebra & Geometry \\
\hline
Location of poles &  $\kappa_{ij}\geq 0 $ & Shifted positive wedge \\
\hline
Combinatorics of Scattering &  $\kappa_{ij} +\kappa_{k\ell } -\kappa_{i \ell} -\kappa_{kj} = c_{ijk \ell}$ & Rotated hyper-planes \\
\hline
\end{tabular}
\end{center}

Let us consider an explicit example of such a deformed realisation with $A_{2}^{\{\alpha\}}$ with $n\, =\, 5$. If we choose the reference $T_{0}\, =\, \{(13),\, (14)\}$ then,
\begin{flalign}
\begin{array}{lll}
\tilde{X}_{13}\, \geq\,  0,\, \tilde{X}_{14}\, \geq\, 0\\
\tilde{X}_{35}\, =\, \frac{1}{\alpha_{35}}\, (\, c_{14} + c_{24}\, +\, \alpha_{14}m_{14}^{2}\, -\, \alpha_{14}X_{14} )\, \geq\, 0\\
\tilde{X}_{25}\, =\, \frac{1}{\alpha_{25}}\, (c_{13} + c_{14} + \alpha_{13}m_{13}^{2} - \alpha_{13}X_{13}\, )\, \geq\, 0\\
\tilX_{24}\, =\, \frac{1}{\alpha_{24}}\, (\, c_{13} - \alpha_{14}m_{14}^{2} + \alpha_{13}m_{13}^{2} - \alpha_{13}X_{13} + \alpha_{14}X_{14}\, )\, \geq\, 0
\end{array}
\end{flalign}
This implies that $\tilX_{13}, \tilX_{14}$ are bounded by, 
\begin{flalign}
\begin{array}{lll}
0\, \leq\, \tilX_{14}\, \leq\, \frac{1}{\alpha_{14}}(c_{14} + c_{24})\\
0\, \leq\, \tilX_{13}\, \leq\, \frac{1}{\alpha_{13}}(c_{13} + c_{14})\\
\end{array}
\end{flalign}
Hence for any $0\, <\, \alpha_{ij}\, <\, \infty$, one has a convex realisation of $A_{2}$ in ${\cal K}_{5}^{+}$. 

The range of $\tilX_{13},\, \tilX_{14}$ is compressed as compared to their range for ABHY realisations, where the suppression factor is $\alpha_{13}, \alpha_{14}$ respectively. As the mapping from ${\cal K}_{n}^{+}$ to the embedding space is simply via the positive diagonal sub-group of $GL(\frac{n(n-3)}{2})$, ABHY associahedron is mapped onto a convex realisation of associahedron for all $n$.

\subsection{Deformed Scattering Equations and push-forward of Parke Taylor form}\label{dpkv1}

The CHY formula for massless bi-adjoint $\phi^{3}$ tree-level S matrix can be understood as the push-forward of Parke-Taylor form which is defined on the so-called world-sheet associahedron,  on the ABHY realisation in ${\cal K}_{n}^{+}$ where the push-forward is induced by scattering equations. If the embedding space in which ABHY realisation is identified with ${\cal K}_{n}^{+}$ only upto ${\cal G}$ action, then the diffeomorphism between the worldsheet associahedron and $A_{n-3}^{\{\alpha\}}$ is via deformed scattering equations. Which are defined as follows; Consider ``scattering equations" on ${\cal M}_{0,n}({\mathbb{R}})$ 
\begin{flalign}
\sum_{j\, \neq\, i}\, \frac{\tilde{s}_{ij}}{\sigma_{ij}}\, =\, 0
\end{flalign}
These equations are inspired by the scattering equations for mass less particles and we refer to them as deformed scattering equations. We can use these equations to generate a pushorward map from the park Taylor form on ${\cal M}_{0,n}({\mathbb{R}})$ to ${\cal K}_{n}^{+}$ as, 
\begin{flalign}\label{cfodass}
\sum_{\textrm{solns}} \omega_{n}\, =:\, M^{\prime}_{n}\, \bigwedge_{(ij)\, \in\, T_{0}}\, d^{n-3}\kappa_{ij}
\end{flalign}
From \cite{Arkani-Hamed:2017mur}, $M^{\prime}_{n}$ is given by,
\begin{flalign}\label{cfodass1}
M^{\prime}_{n}\, =\, \sum_{T}\, (\, \frac{1}{\prod_{(kl)\, \in\, T}\alpha_{kl}}\, \prod_{(mn)\, \in\, T}\, \frac{1}{\tilde{X}_{mn}}\, )
\end{flalign}
We can also write the right hand side of eqn. (\ref{cfodass}) as,
\begin{flalign}\label{cfodass2}
\sum_{\textrm{solns}} \omega_{n}\, =:\, M_{n}\, \wedge_{(ij)\, \in\, T_{0}}\, d^{n-3}\tilde{X}_{ij}
\end{flalign}
where 
\begin{flalign}\label{cfodass3}
M_{n}\, =\, \sum_{T}\, (\, \frac{\prod_{(mn)\, \in\, T_{0}}\alpha_{mn}}{\prod_{(kl)\, \in\, T}\alpha_{kl}}\, \prod_{(mn)\, \in\, T}\, \frac{1}{\tilde{X}_{mn}}\, )
\end{flalign}
In case of bi-adjoint scalar theory with mass $m$, we have
\begin{flalign}
\tilde{s}_{ij}\, =\, 2 p_{i} \cdot p_{j} + m^{2}
\end{flalign}
The deformed scattering equations then precisely match with the scattering equations proposed by Dolan and Goddard for the massive $\phi^{3}$ theory. 

As with the ABHY associahedron, the push-forward of the Park-Taylor form $\omega_{n}$ via deformed scattering equations is  the canonical form on  $A_{n-3}^{\{\alpha\}}$.\footnote{The form is canonical in the sense that if we think of the kinematic space co-ordinatized by $\tilde{X}_{ij}$ as a $\frac{n(n-3)}{2}$ dimensional real Projective space, then the pushforward is the canonical form on the kinematic space with simple poles on the facets of the deformed associahedron.}  In the speical case where  $\alpha_{ij}\, =\, 1\, \forall\, (ij)$ and $m_{ij}^{2}\, =\, 0, \forall\, (ij)$, we recover the canonical form on the ABHY realisation in ${\cal K}_{n}^{+}$.

Few natural questions arise 
\begin{itemize}
\item Are the planar scattering forms derived in eqns.(\ref{cfodass}, \ref{cfodass1}) scattering amplitudes for a local QFT for generic choice of $\alpha_{ij},\, m_{ij}$? 
\item Given a Lagrangian built out of a spectrum of (bi-adjoint) scalar fields with different masses and generic cubic interactions, is the S-matrix a (weighted sum) over positive geometries?
\end{itemize}
We take certain preliminary steps in answering these questions in this work.  More in detail, 
\begin{itemize}
\item We classify a certain family of $m_{ij}$ and $\alpha_{ij}$ for which the deformed associahedron is an amplituhedron for \emph{some} local quantum field theory containing a family of bi-adjoint scalars with different masses. 
\item As for the second question, we improve on the analysis of \cite{Jagadale:2021iab} and prove that n-point amplitude  built out of cubic couplings and with one massive and $n-4$ massless propagators is a weighted sum over canonical forms associated to certain deformed realisation of the colorful associahedra which were introduced in \cite{Jagadale:2021iab}. 
\end{itemize}
In the next section, we start by considering $\alpha_{ij}, m_{ij}$ configurations in increasing level of complexity. This will lead us to a family of possible deformations of ABHY associahedron such that  (1) the canonical forms are pushforward of the Park-Taylor forms and (2) these forms generate a perturbative $n$-point amplitude of a local QFT.

\section{Deformed realisation of ABHY Associahedron}\label{draa}

In this section, we analyse several classes of deformed realisations of the associahedron in the kinematic space, ${\cal K}_{n}$, of massless particles. We will derive constraints on the deformation parameters such that the canonical form on the deformed realisation generates S-matrix of \emph{some theory} whose field content contains a bi-adjoint massless scalar.

\subsection{Mixed Scalar Field Amplitude with ``fine-tuned" Couplings} 
We can generalise the analysis of scalar field theory with mass $m$ to the case where all the external particles are massless but all the propagators have a pole at $m^{2}$.  
For this purpose, we define the  deformed kinematic variables as, 
\begin{flalign}
\begin{array}{lll}
\alpha_{ij}\, =\, 1\, \forall\, \, (ij)\\
\kappa_{ij}\, =\, \tilde{X}_{ij}\, =\, X_{ij} - m^{2}\ \textrm{if}\ \vert j - i\vert\, \geq\, 2\\
\kappa_{i,i+1}\, =\, \kappa_{1n}\, =\, 0
\end{array}
\end{flalign}
This is equivalent to choosing the following subgroup ${\cal G}$. 
\begin{flalign}
{\cal G}\, =\, {\bf 1}\, \times\, (m^{2}\, \cdot\, {\bf 1})
\end{flalign}
The (deformed)  Mandelstam variables then turn out to be,
\begin{flalign}
\tilde{s}_{ii+1}\, =\, s_{ii+1} - m^{2}\\
\tilde{s}_{ij}\, =\, s_{ij}\ \textrm{if}\ \vert j - i\vert\, > 1
\end{flalign}
This results in the  scattering equations,
\begin{flalign}\label{dsemlmss} 
\begin{array}{lll}
\sum_{j\, \neq\, i}\, \frac{\tilde{s}_{ij}}{\sigma_{ij}}\, =\, 0\\
\implies\, \sum_{j\, \neq\, i}\, \frac{p_{i} \cdot p_{j}}{\sigma_{ij}}\, - \frac{1}{2}\, m^{2}\, (\frac{1}{\sigma_{i,i+1}}\, +\, \frac{1}{\sigma_{i,i-1}}\, )\, =\, 0.
\end{array}
\end{flalign}
As all the external particles are massless, the kinematic space ${\cal K}_{n}$ is co-ordinatized by $X_{ij}$ variables.  The convex realisation obtained through, 
\begin{flalign}\label{drac2} 
\tilde{s}_{ij}\, =\, -\, c_{ij}
\end{flalign}
is simply the (translate of the) ABHY associahedron in the positive quadrant of the kinematic space ${\cal K}_{n}^{+}$ and once again has facets at $\kappa_{ij} = 0$ or $X_{ij} = m^{2}$.  It is also immediate that the push-forward of the Parke-Taylor form by diffeomorphisms in \ref{dsemlmss} generate the canonical form on the associahedron realised through eqns. \ref{drac2} 
\begin{flalign}
\sum_{\textrm{solns}}\, \omega_{n}\, =\, \sum_{T}\prod_{(kl)\, \in\, T}\, \frac{1}{\kappa_{kl}}\, \bigwedge_{(ij)\, \in\, T_{0}}\, \dd^{n-3}\kappa_{ij}
\end{flalign}
This  form equals  the (color ordered)  tree-level n-point amplitude for a scalar field theory with two fields $\phi_{1},\, \phi_{2}$ with masses $m_{1} = 0,\, m_{2} = m$ and  with a interaction of the type $(\phi_{2}^{3} + \phi_{1}^{2}\phi_{2} +  \phi_{1}\phi_{2}^{2})$ where all the terms have the same coupling.  Hence even in this case, the scattering equations $\sum_{j\neq i}\frac{\tilde{s}_{ij}}{\sigma_{ij}}\, =\, 0$ generate diffeomorphisms between the worldsheet associahedron and a rigid translate of the ABHY associahedron. 

All the positive geometries reviewed so far are such that the set of all the normals to the co-dimension one facets is a global translation of realisation \cite{Bazier-Matte:2018rat,Padrol2019AssociahedraFF} . In section \ref{onepdef}, we will consider one of the simplest deformations in which the deformed realisation of $A_{n-3}$ is not simply a rigid  translation (that is, not all the $\alpha_{ij}$ are equal).   As we will see, this results in a positive geometry for the S-matrix in which all the external states are massless and all the propagators are massive and where all the interactions are cubic. 

\subsection{One-parameter deformation}\label{onepdef}
One of the simplest non-trivial deformation maps is parametrized by a single parameter $\alpha$. 
\begin{flalign}
\begin{array}{lll}
m_{ij}\, =\, m\, \forall\, (i,j)\, \vert\, n-2\, \geq\, \vert j - i\vert\, \geq\, 2\\
\alpha_{ij}\, =\, 1\, \forall\ \vert\, i\, -\, j\ \vert\, =\, 2\ \textrm{mod}\, n\\
\alpha_{ij}\, =\, \alpha\ \textrm{otherwise}
\end{array}
\end{flalign}
We note that for $n\, <\, 6$ all the $\alpha_{ij}\, =\, 1$ and hence the resulting realisation is simply the ABHY associahedron with all the co-dimension one facets at $X_{ij}\, =\, m^{2}$.

On the other hand, for $n\, \geq\, 6$, given a reference triangulation $T_{0}$, the corresponding realisation obtained using $\tilde{s}_{ij}\, =\, -\, c_{ij}\, \forall\, (i,j)\, \notin\, T_{0}^{c}$ is not simply a (displaced) ABHY associahedron. Let us first analyse the $n\, =\, 6$ case to illustrate this point.  
Let $T_{0}\, =\, \{13, 14, 15\}$ so that the associahedron is realised in the hyper-plane co-ordinatized by $X_{13}, X_{14}, X_{15}\, \geq\, m^{2}$. Let $d_{ij}\, =\, \frac{c_{ij}}{\alpha_{ij}}$. The realisation that we denote as $A_{n}^{\alpha}$ is the closed convex intersection bounded by the planes, 
\begin{flalign}
\begin{array}{lll}
\tilde{X}_{24}\, =\, -\tilde{X}_{13}\, +\, \alpha \tilde{X}_{14}\, +\, d_{24}\\
\tilde{X}_{35}\, =\, \tilde{X}_{15}\, -\, \alpha\, \tilde{X}_{14}\, +\, d_{35}\\
\tilde{X}_{46}\, +\, \tilde{X}_{15}\, =\, d_{46}\\
\tilde{X}_{36}\, +\, \tilde{X}_{14}\, =\, d_{36}\\
\tilde{X}_{26}\, +\, \tilde{X}_{13}\, =\, d_{26}\\
\tilde{X}_{25}\, =\, \frac{1}{\alpha}\, (\, \tilde{X}_{15}\, -\, \tilde{X}_{13}\, )\, +\, d_{25}
\end{array}
\end{flalign}

\begin{figure}[ht]
    \centering
    \includegraphics[scale=0.35]{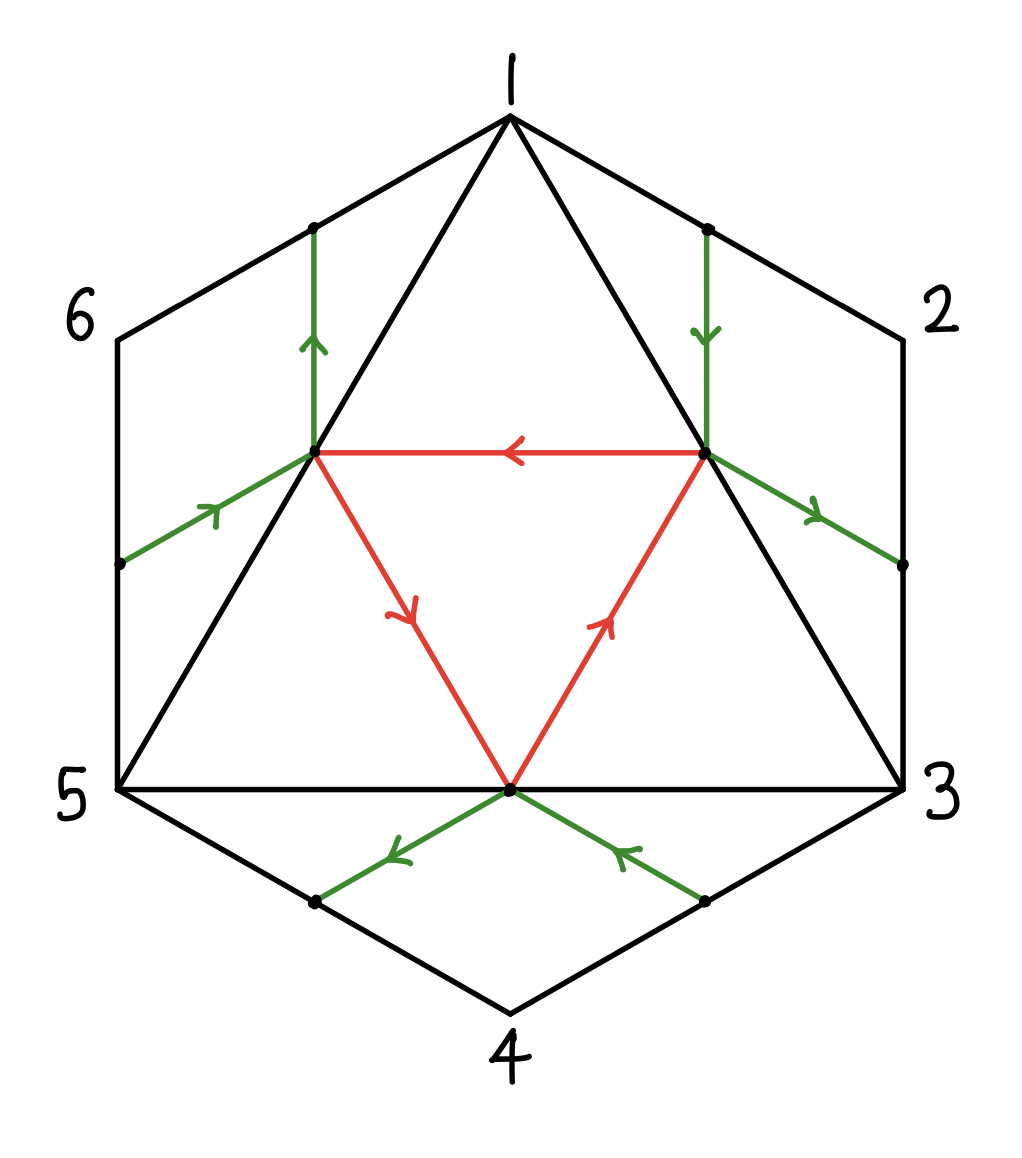}
    \caption{Triangulation of hexagon with cyclic quiver.}
    \label{cyclicquiver}
\end{figure}
As before, we can now compute the push-forward of the Parke-Taylor form using the deformed scattering equations.  Let $T_{c}$ be the set of three triangulations whose dissection quiver is cyclic as shown in the figure \ref{cyclicquiver}, and let $T_{\textrm{ac}}$ be the complimentary set. It can now be easily verified that the canonical form equals, 
\begin{flalign}
\Omega_{n=6}\vert_{A_{3}^{\alpha}}\, =\, \sum_{t\, \in\, T_{c}}\, \frac{\alpha}{\prod_{t}\, \tilde{X}_{ij}}\, +\, \sum_{t\, \in\, T_{\textrm{ac}}}\, \frac{1}{\prod_{t}\tilde{X}_{mn}}
\end{flalign}
(Up to an overall scaling) this form equals the colored ordered tree-level amplitude in a theory with the following interaction.
\begin{flalign}\label{tsig}
V(\phi_{1}, \phi_{2})\, =\, \lambda_{1}\, \phi_{1}^{2}\, \phi_{2}\, +\, \lambda_{2}\, \phi_{1}\phi_{2}^{2}\, +\, \lambda_{3}\phi_{2}^{3}
\end{flalign}
where $\phi_{1}$ is massless and $\phi_{2}$ is massive of mass $m$. The mapping of the form to the amplitude is through the following relation. 
\begin{flalign}
\alpha\, =\, \frac{\lambda_{1}\lambda_{3}}{\lambda_{2}^{2}}
\end{flalign}
$\Omega_{6}\vert_{A_{3}^{\alpha}}$ is precisely the push-forward of the Parke-Taylor form that we get using eqns. (\ref{cfodass}, \ref{cfodass1}). 

This result can be immediately generalised to arbitrary number of external particles. Consider the $n$-gon with external edges dual to the massless particles. The identification of the $\Omega_{n}\vert_{A_{n-3}^{\alpha}}$ with an amplitude follows from the fact that each triangulation of the $n$-gon is dual to a Feynman diagram with every triangle being dual to a cubic vertex. This implies that given any two triangulations $T_{1},\, T_{2}$ we have the following relation.
\begin{flalign}\label{mascon}
\prod_{e\, \in\, T_{1}}\, \alpha_{e}\, \prod_{\triangle\, \in\, T_{1}}\, \lambda_{\triangle}\, =\, \prod_{e\, \in\, T_{2}}\, \alpha_{e}\, \prod_{\triangle\, \in\, T_{2}}\, \lambda_{\triangle}
\end{flalign}
where $e$ indicates the chords and $\triangle$ are the cells of the triangulation.

Let $\{T^{I}_{\textrm{c}}\}$ be the set of all triangulations that have $1\, \leq\, I\, \leq\, [\frac{n}{2}]\, -\, 2$ closed cycles in their corresponding quivers. Let $\{T_{\textrm{ac}}\}$ be the complimentary set. The push-forward of the Parke Taylor form evaluated gives the following form on the $A_{n-3}^{\alpha}$.
\begin{flalign}
\Omega_{n-3}\, =\, \left[ \sum_{I=1}^{[\frac{n}{2}]\, -\, 2}\, \sum_{T_{\textrm{cc}}^{I}}\, \left(\frac{1}{\alpha}\right)^{I} \prod_{(ij)\, \in\, T_{\textrm{cc}}^{I}}\, \frac{1}{\tilde{X}_{ij}}\, +\, \sum_{T_{ac}}\, \prod_{(ij)\, \in\, T_{\textrm{ac}}}\, \frac{1}{\tilde{X}_{ij}}\, \right] \bigwedge_{(mn)\, \in\, T_{0}}\, d\, \tilde{X}_{mn}
\end{flalign}
Identifying $\alpha$ as $\frac{\lambda_{1}\lambda_{3}}{\lambda_{2}^{2}}$ implies that $\Omega_{n}\vert_{A_{n-3}^{\alpha}}$ is the tree-level amplitude for the theory with two scalar interaction in \ref{tsig}. 

Although this realisation is perhaps one of the simplest possible deformations of the ABHY realisation, we note that from the perspective of  the CHY formula, the result  is a happy surprise : The S-matrix involving mass-less external states that interact via a massive scalar of mass $m$ can be obtained by integrating the Parke-Taylor form over the moduli space, where the integration is localised over the solution of the deformed scattering equations defined in  eqn. \ref{dsemlmss}.  

\subsection{Multi-parameter deformations.}\label{multiparadef}

Encouraged by the result of the previous section, we consider a more intricate  isometry between the embedding space and kinematic space. 
\begin{flalign}
\begin{array}{lll}
{\cal K}_{n}\, \rightarrow\, {\mathbb{R}}^{\frac{n(n-3)}{2}}\\
X_{ij}\, \rightarrow\, \kappa_{ij}\, :=\, \alpha_{ij}\, (X_{ij} - m_{ij}^{2})
\end{array}
\end{flalign}
where 
\begin{flalign}
\begin{array}{lll}
\alpha_{ij}\, =\, \alpha_{kl}\  \textrm{and}\ m_{ij}\, =\, m_{kl}\\
\textrm{if and only if}\ \vert i - j\vert\, =\, \vert\, k\, -\, l\, \vert\,
\textrm{mod}\, n 
\end{array}
\end{flalign}
These transformations clearly form a group $\subset\, ({\bf R}^{\star})^{\frac{n(n-3)}{2}}\, \times\, {\bf R}^{\frac{n(n-3)}{2}}$.
 
The ABHY realisation in the embedding space is then deformed in ${\cal K}_{n}^{+}$. The pull-back of the planar scattering form on the deformed realisation is once again the push-forward of the Parke-Taylor form on the resulting convex realisation via deformed scattering equations. 

We will now show that the $n-3$ form on $A_{n-3}^{\{\alpha\}}$ is the $n$ point amplitude of a local and unitary QFT with the Lagrangian 
\begin{flalign}
L_{n}(\phi_{1},\, \phi_{2},\, \dots,\, \phi_{\rfloor\, \frac{n}{2}\, \rfloor}\, )\, =\, \frac{1}{2} \sum_{I=1}^{\ceil}\, \textrm{Tr}(\partial_{\mu}\, \phi_{i} \cdot \partial_{\mu}\phi_{i})\, - \hspace{0.6cm} \sum_{\mathclap{\substack{1 \leq I\, \leq\, J\, \leq\, K\, \leq \ceil \\  I + J = K\, \textrm{mod}\, n}}}\,\hspace{0.7cm} \frac{\lambda_{IJK}}{3!}\, \phi_{I}\, \phi_{J}\, \phi_{K}
\end{flalign}
where $\phi_{1}$ is a massless bi-adjoint scalar and $\phi_{I}\, I\, >\, 1$ are massive scalars with distinct masses $m_{I}\, \neq\, 0$. As we prove, for generic value of the couplings, the canonical form is not the $n$ point amplitude associated to $L_{n}$, however if the couplings satisfy a set of relations, then the deformed associahedra indeed generate amplitude in a sequence of theories. 

Let us first analyse the convex realisation in ${\cal K}_{n}^{+}$. We choose $T_{0}\, =\, \{\, 13,\, \dots\, 1n-1\, \}$. The convex realisation is then obtained by using,
\begin{flalign}
\tilde{s}_{ij}\, =\, -\, c_{ij}\, \forall\, (i,j)\, \notin\, T_{0}^{c}\, =\, \{\, 24,\, \dots,\, 2n-1\, \}
\end{flalign}
As before, the push-forward of the Parke-Taylor form will generate the following top form on the resulting convex realisation.
\begin{flalign}\label{dracf1}
\Omega_{n-3}\, \vert_{A_{n-3}^{\{\alpha\}}}\, =\, \sum_{T}\, \frac{\prod_{(i,j)\, \in\, T_{0}}\, \alpha_{ij}}{\prod_{(m,n)\, \in\, T}\, \alpha_{mn}}\, \frac{1}{\prod_{(m.n)\, \in\, T}\, \tilde{X}_{ij}}\, d^{n-3}\tilde{X}_{T_{0}}
\end{flalign}
Can this form be mapped to S-matrix of a local QFT ? In order to answer this question, we have to map the deformation parameters $\alpha_{ij}$ which are $\ceil - 1$ in number to the space of couplings between $\ceil$ species of scalar fields with masses $m_{1}\, =\, 0, m_{2}\, =\, m_{13} \dots,\, m_{\ceil}\, =\, m_{1\ceil}$ respectively. 

The set of all possible cubic couplings among these fields correspond to all the distinct (colored) triangles in an $n$-gon that can be drawn with $\lfloor\, \frac{n}{2}\, \rfloor$ distinct internal chords and an external edge. We will denote the cardinality of this set as $N_{c}$. An explicit formula for $N_{c}$ can be found in appendix \ref{cardnc}. 

As an explicit check, we can verify that for  $n\, =\, 6, 8, 10, 12, 14$  we indeed have, $N_{c}\, =\, \{ 3, 5, 8,\, 12,\, 14\, \}$ respectively.  We note that, in these cases the number of deformation parameters are 
${\cal D}_{c}\, =\, \{\, 2,\, 3,\, 4,\, 5,\, 6\, \}$ respectively. 

The number of deformation parameters is thus generically smaller than number of kinematically allowed independent couplings. However, for $n\, \in\, \{5, \dots,\, 8\}$ this difference is not constraining and the canonical form on the deformed associahedra of dimension $\leq\, 6$ are the tree-level amplitude with massless external particles and with the interaction Lagrangians respectively given by,
\begin{flalign}
\begin{array}{lll}
V_{1}\, =\, \lambda_{112}\, \phi_{1}^{2}\, \phi_{2}\, +\, \lambda_{123}\, \phi_{1}\, \phi_{2}\, \phi_{3}\, +\, \lambda_{222}\, \phi_{2}^{3}\\
V_{2}\, =\, \lambda_{112}\, \phi_{1}^{2}\, \phi_{2}\, +\, \lambda_{123}\, \phi_{1}\, \phi_{2}\, \phi_{3}\, +\, \lambda_{134}\, \phi_{1}\, \phi_{3} \phi_{4}\, +\,  \lambda_{222}\, \phi_{2}^{3} + \lambda_{224} \phi_{2}^{2}\phi_{4}\, +\, \lambda_{233}\, \phi_{2}\, \phi_{3}^{2} 
\end{array}
\end{flalign}
We now give an explicit example of ``bootstraping" forms to obtain the amplitude. If $n\, =\, 6$, the canonical form is the tree-level planar amplitude if we map the deformation parameters to the ratio of the couplings that generate the interaction $V_{1}$ as, 
\begin{flalign}
\frac{\alpha_{13}}{\alpha_{14}}\, =\, \frac{\lambda_{123}^{2}}{\lambda_{222}\lambda_{112}}
\end{flalign}
And similarly in the $n\, =\,  8$ case, the mapping from deformation parameters to $V_{2}$ defined via,
\begin{flalign}
\begin{array}{lll}
\frac{\alpha_{13}}{\alpha_{14}}\, =\, \frac{\lambda_{213}\lambda_{314}}{\lambda_{224}\lambda_{112}}\\
\vspace*{0.1in}
\frac{\alpha_{13}}{\alpha_{15}}\, =\, \frac{\lambda_{314}^{2}}{\lambda_{323}\lambda_{112}}
\end{array}
\end{flalign}

However as we prove below,  the connection between positive geometry, canonical form and S-matrix is more subtle if  $n\, \geq\, 9$. In this case, the canonical form equals the amplitude of a theory only if the couplings are not all independent and the number of relations between the couplings equals the number of $\lambda_{IJK}$ for $2\, <\, I\, \leq\, J\, \leq\, K$. 

Before proving this statement and finding the constraints among couplings for which the canonical form is the S-matrix, we first need to introduce certain definitions. 
\begin{definition}[Colored dissection quiver]
Given a colored triangulation $T_{c}$, a colored dissection quiver is  a dissection quiver \cite{Padrol2019AssociahedraFF} associated to $T_{c}$ in which the vertices are coloured by coloring of the edges.(See figure \ref{colordissecquiver}) 
\end{definition}
\begin{figure}[H]
    \centering
    \includegraphics[scale=0.4]{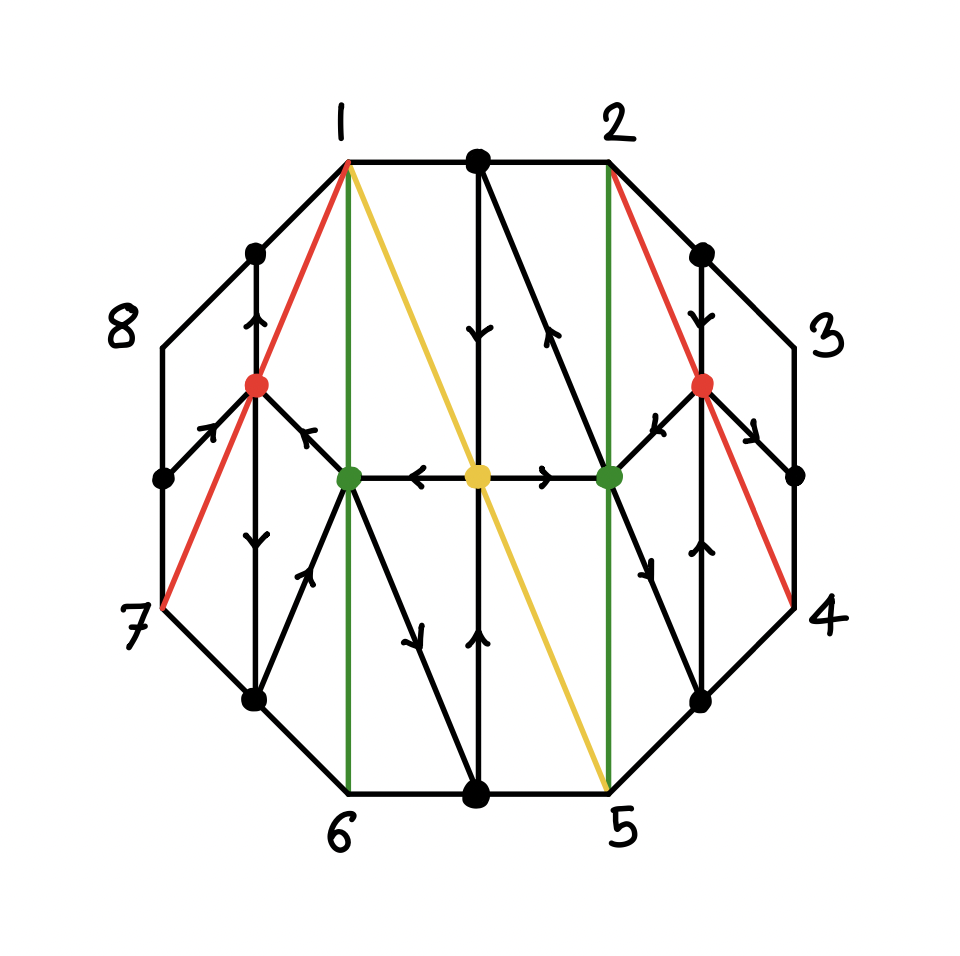}
    \caption{Colored dissection quiver.  }
    \label{colordissecquiver}
\end{figure}

Given a colored dissection quiver, any cycle of length 3 will be denoted as $c_{IJK}$ where $I,\, J,\, K$ are the colorings associated to the three edges of the cycle.  

\begin{definition}[ Equivalence class of colored triangulations]
 $[\, {\cal C}_{(I_{1}J_{1}K_{1})\, \dots\, (I_{m}J_{m}K_{m})}\, ]$ is an equivalence class of colored dissection quivers, where two quivers are considered equivalent if and only if they both have $m$ closed 3 cycles with the set of colorings $I_{1}J_{1}K_{1}\, \dots\, I_{m}J_{m}K_{m}$.  $[\, {\cal C}_{(I_{1}J_{1}K_{1})\, \dots\, (I_{m}J_{m}K_{m})}\, ]$ defines a unique equivalence class of triangulations $[\, T_{(I_{1}J_{1}K_{1})\, \dots\, (I_{m}J_{m}K_{m})}\, ]$. 
\end{definition}
The canonical form on the deformed associahedron is an S-matrix if the deformation parameters are mapped onto the (ratio of) couplings which are subjected to certain constraints derived thanks to the lemmas below.
\begin{lemma}
Consider two vertices $v_{1},\, v_{2}$ of $A_{n-3}^{\{\alpha\}}$ that correspond to  $T_{1}, T_{2}\, \in\, [\, T_{(I_{1}J_{1}K_{1})\, \dots\, (I_{m}J_{m}K_{m})}\, ]$, then we have, 
\begin{flalign}
\textrm{res}_{v_{1}}\, \Omega_{n-3}\vert_{A_{n-3}^{\{\alpha\}}}\, =\, \textrm{res}_{v_{2}}\, \Omega_{n-3}\vert_{A_{n-3}^{\{\alpha\}}}
\end{flalign}
\end{lemma}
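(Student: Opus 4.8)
The plan is to read the residue straight off eqn.~(\ref{dracf1}), reduce the Lemma to a combinatorial statement about triangulations of the $n$-gon, and then establish that statement by induction on $n$. By eqn.~(\ref{dracf1}) the form $\Omega_{n-3}\vert_{A_{n-3}^{\{\alpha\}}}$ is a sum over triangulations, and among its terms only the one labelled by a given triangulation $T$ has a pole at the vertex $v_{T}$ dual to $T$; extracting the residue there (the $\tilde X_{T_{0}}\to\tilde X_{T}$ coordinate change on the ABHY polytope being unimodular) yields, with orientations fixed so that all vertex residues of the canonical form are positive,
\[
\textrm{res}_{v_{T}}\,\Omega_{n-3}\vert_{A_{n-3}^{\{\alpha\}}}\;=\;w_{T}\;:=\;\frac{\prod_{(ij)\in T_{0}}\alpha_{ij}}{\prod_{(mn)\in T}\alpha_{mn}}.
\]
So the Lemma is exactly the assertion that $\prod_{(mn)\in T_{1}}\alpha_{mn}=\prod_{(mn)\in T_{2}}\alpha_{mn}$ whenever $T_{1},T_{2}\in[\,T_{(I_{1}J_{1}K_{1})\dots(I_{m}J_{m}K_{m})}\,]$.

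Next I would unpack what this means combinatorially. Since $\alpha_{ij}$ depends on the diagonal $(ij)$ only through its length $\ell(ij)=\min(|i-j|,n-|i-j|)$, one has $\prod_{(mn)\in T}\alpha_{mn}=\prod_{\ell}\alpha_{\ell}^{\,n_{\ell}(T)}$, with $n_{\ell}(T)$ the number of length-$\ell$ diagonals of $T$, so for generic deformation parameters the required equality is equivalent to $T_{1}$ and $T_{2}$ having the same multiset of diagonal lengths. To prove that, I would count incidences: each diagonal lies in exactly two triangles and each boundary edge in exactly one, whence
\[
\prod_{\ell}\alpha_{\ell}^{\,2n_{\ell}(T)}\,\alpha_{1}^{\,n}\;=\;\prod_{\triangle\in T}\ \prod_{s\,\in\,\textrm{sides}(\triangle)}\alpha_{\ell(s)},
\]
$\alpha_{1}$ denoting the weight of a boundary edge. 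Sorting the $n-2$ triangles of $T$ by the numbers $m,a,b$ of those with $0,1,2$ boundary sides, the relations $m+a+b=n-2$ and $a+2b=n$ force $b=m+2$, $a=n-2m-4$, both fixed by the equivalence class; the $0$-boundary triangles contribute precisely $\prod_{k}\alpha_{I_{k}}\alpha_{J_{k}}\alpha_{K_{k}}$ (a manifest class invariant), and the $2$-boundary triangles contribute $(\alpha_{1}^{2}\alpha_{2})^{b}$. Hence everything reduces to showing that the product $\prod_{\triangle\,\textrm{one-boundary}}\alpha_{c_{1}(\triangle)}\alpha_{c_{2}(\triangle)}$ of the two diagonal-lengths of each one-boundary triangle is also a class invariant. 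I would prove this by induction on $n$, deleting the apex of an ear: the ear's diagonal becomes a boundary edge, the length of any surviving diagonal drops by one exactly when the deleted vertex lay on its strictly shorter arc, and one compares the resulting correction factor against the (reduced) coloured-$3$-cycle data.

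The real obstacle, I expect, is making this induction self-reproducing: the length shifts under ear-deletion --- and therefore the coloured-$3$-cycle multiset of the smaller triangulation --- depend a priori on the entire triangulation $T$, not just on its $3$-cycle data, so one needs a sharper induction hypothesis, or a carefully chosen ear, to close the loop. A flip argument will not substitute for this, because the triangulations inside one equivalence class need not be connected by coloured-$3$-cycle-preserving flips; already the class $[\mathcal{C}_{(2,2,2)}]$ at $n=6$ consists of two triangulations neither of which admits such a flip. The cleanest resolution, if available, would be a closed-form expression for $\prod_{(mn)\in T}\alpha_{mn}$ in terms of $\{(I_{k},J_{k},K_{k})\}$ alone, from which the Lemma is immediate.
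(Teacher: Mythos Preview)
Your opening reduction is correct and matches the paper: the residue at $v_{T}$ is $\prod_{(ij)\in T_{0}}\alpha_{ij}\big/\prod_{(mn)\in T}\alpha_{mn}$, so the Lemma becomes the assertion that $\prod_{(mn)\in T}\alpha_{mn}$ depends only on the coloured-$3$-cycle data of $T$.

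From that point the paper takes a much shorter route than your incidence-counting framework. It never isolates the one-boundary-triangle contribution. Instead it asserts that the general statement reduces to the acyclic case $m=0$, and there it fixes $T_{1}$ to be the fan $\{(1,3),\dots,(1,n-1)\}$ and argues that any triangulation with a different diagonal-length multiset would have to contain a closed triangle of three chords, i.e.\ a $3$-cycle in the quiver, and so could not be acyclic. The clean mechanism behind this (left implicit in the paper) is that an acyclic triangulation has a \emph{path} for its dual tree; walking along it, the $k$-th diagonal always cuts off exactly $k+1$ boundary edges on one side, so its length is $\min(k+1,n-k-1)$ regardless of which acyclic triangulation one started from. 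That is the closed-form expression you were hoping for, at least when $m=0$, and it makes your ear-deletion induction unnecessary in that case.

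Your induction is aimed at all $m$ at once, and you are right that it does not close as stated: the length shifts under ear removal genuinely depend on the full triangulation, not just on its $3$-cycle data. Your incidence identity does cleanly exhibit the inner-triangle and ear contributions as class invariants --- which is actually more than the paper proves for $m\geq 1$, since the paper never justifies its reduction to the acyclic case. So both arguments leave essentially the same gap at $m\geq 1$, just located differently. If you want to close it, the dual-tree picture above is the right tool rather than ear-deletion: removing the $m$ inner-triangle vertices from the dual tree leaves a forest of paths, and along each path the subtree-size (hence the arc-size of the corresponding diagonal) again changes by exactly one per step, with the endpoint values pinned either by an ear (arc-size $2$) or by a side of an inner triangle (arc-size fixed by the class data $(I_{k},J_{k},K_{k})$).
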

\begin{proof}
This statement can be proved simply by showing that for $T_{1}$ and $T_{2}$ whose dissection quivers are acyclic, 
\begin{flalign}\label{acycid}
\prod_{e_{1}\, \in\, T_{1}}\, \alpha_{b(e_{1})f(e_{1})}\, =\, \prod_{e_{2}\, \in\, T_{2}}\, \alpha_{b(e_{2})f(e_{2})}
\end{flalign}
Where $e_{i}$ denote the dissection chords in $T_{i}$ with $b(e_{i}),\, f(e_{i})$ being the two end points of $e_{i}$.\footnote{For the purpose of this argument, orientation of $e_{i}$ is not relevant.} Now let $T_{1}\, =\, \{\, 13,\, \dots,\, 1n-1\, \}$. The LHS of eqn.(\ref{acycid}) is 
\begin{flalign}
\prod_{e_{1}\, \in\, T_{1}}\, \alpha_{b(e_{1})f(e_{1})}\, =\, \alpha_{13}^{2}\, \dots\, \alpha_{1n-1}^{2}
\end{flalign}
The equality in eqn.(\ref{acycid}) will be violated if and only if there is atleast one $\alpha_{1i}$ which does not contribute quadratically in the right hand side. However this is only possible if the chord $(1i)$ is mutated to a different chord. As such a chord can not be of the form $(1,j)$ for some $j$, we will necessarily have a closed triangle loop of three chords for all such triangulations $T_{2}$. This completes the proof.
\end{proof}

\begin{lemma}\label{coupvsalpha}
The pull back of the planar scattering form on $A_{n-3}^{\{\alpha\}}$ $\Omega_{n-3}\vert_{A_{n-3}^{\{\alpha\}}}$ is a tree-level color ordered S-matrix for an interacting QFT involving $\ceil + 1$ species of bi-adjoint scalars  with masses $m_{I}\, \vert\, I\, =\, 1,\, \dots,\, \ceil$ with the following constraints on the couplings. 
\begin{itemize}
\item All the couplings $\lambda_{IJK}$ are non-zero if and only if $I\, +\, J\, =\, K\, \textrm{modulo}\, n$ and 
\item The number of relations this set of couplings have to satisfy equals the number ${\cal C}_{n}$ of 3-cycles $c_{IJK}$ in the colored dissection quiver  where $I, J, K\, >\, 2\, \textrm{modulo}\, n$.
\end{itemize}
\end{lemma}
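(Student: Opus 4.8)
The plan is to compute the color-ordered tree amplitude of $L_n$ explicitly, match it term-by-term against the canonical form (\ref{dracf1}), and read off the conditions on the couplings. Color-ordered cubic tree diagrams of $L_n$ are planar, hence dual to triangulations $T$ of the $n$-gon: each external edge is a $\phi_1$ leg, a chord $(i,j)$ is a $\phi_d$ propagator with $d=d(i,j):=\min(|i-j|,\,n-|i-j|)$ and mass $m_d$, and each triangle is a cubic vertex weighted by $\lambda_{IJK}$, where $(I,J,K)$ are the colors of its three edges. Since the three arcs of the $n$-gon cut off by a triangle have lengths summing to $n$, the multiset $\{I,J,K\}$ always obeys the selection rule $\pm I\pm J\pm K\equiv 0\ \mathrm{mod}\ n$ (abbreviated $I+J=K\bmod n$ in the statement); this gives the first bullet, because any other $\lambda_{IJK}$ is kinematically absent, while each allowed type is realized by some triangulation and its $\lambda_{IJK}$ must be nonzero — otherwise the coefficient of $\prod_{e\in T}\tilde X_e^{-1}$ in ${\cal A}_n$ would vanish for that $T$, whereas in (\ref{dracf1}) it equals $\prod_{(i,j)\in T_0}\alpha_{ij}/\prod_{e\in T}\alpha_{e}\neq0$. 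Thus ${\cal A}_n=\sum_T\big(\prod_{\triangle\in T}\lambda_{c(\triangle)}\big)\prod_{e\in T}\tilde X_e^{-1}$ with $\tilde X_e=X_e-m_{d(e)}^2$, and since the monomials $\{\prod_{e\in T}\tilde X_e^{-1}\}$ are linearly independent, ${\cal A}_n$ equals (\ref{dracf1}) stripped of the reference top form iff there is a constant $C$ with
\[
\Big(\prod_{\triangle\in T}\lambda_{c(\triangle)}\Big)\Big(\prod_{e\in T}\alpha_{d(e)}\Big)=C\prod_{(i,j)\in T_0}\alpha_{ij}\qquad\text{for every triangulation }T .
\]

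Call the left-hand side $F(T)$. The flip graph of triangulations of the $n$-gon is connected, so $F$ is constant iff it is invariant under a single flip; for a flip inside a quadrilateral $\{a,b,c,d\}$ exchanging $e\leftrightarrow e'$ and $\{abc\},\{acd\}\leftrightarrow\{abd\},\{bcd\}$, invariance reads $\alpha_{d(e)}\lambda_{c(abc)}\lambda_{c(acd)}=\alpha_{d(e')}\lambda_{c(abd)}\lambda_{c(bcd)}$. Passing to logarithms, let $v_T\in\mathbb{R}^{N_c}$ record the multiplicities of the triangle colorings in $T$, and let $B$ be the linear map distributing $\log\alpha_d$ ($d\ge2$) onto triangle types, $(Bx)_{IJK}=\sum_{i:I_i\ge2}x_{I_i}$. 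A short linear-algebra computation then shows that an admissible choice of the $\alpha_d$ exists iff $\mu:=(\log\lambda_{IJK})$ lies in $W^{\perp}+\mathrm{Im}\,B$, where $W:=\mathrm{span}\{v_T-v_{T'}: T\leftrightarrow T'\}$; hence the number of relations the couplings must satisfy is $\mathrm{codim}\,(W^{\perp}+\mathrm{Im}\,B)=\dim\big(W\cap(\mathrm{Im}\,B)^{\perp}\big)=\mathrm{rank}\{v_T\}-(\lfloor n/2\rfloor-1)$.

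It remains to identify this number with ${\cal C}_n$, and here the preceding lemma supplies the structure: for $T$ ranging over one equivalence class $[T_{(I_1J_1K_1)\cdots(I_mJ_mK_m)}]$, $\mathrm{res}_v\Omega$ — hence $\prod_{e\in T}\alpha_{d(e)}$ — is constant, and the same mutation argument shows the whole vector $v_T$ depends only on the class. Moreover a triangle is a $3$-cycle of the colored dissection quiver iff it has no external edge, i.e.\ iff $I,J,K\ge2$, so a coupling $\lambda_{IJK}$ with $\min\{I,J,K\}=1$ occurs only in acyclic triangles, and $\prod_{\triangle\in T}\lambda_{c(\triangle)}$ factors as a class-fixed monomial in the $\min=1$ couplings times $\prod_{c_{IJK}\in[T]}\lambda_{IJK}$. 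One then argues that the $\lfloor n/2\rfloor-1$ knobs $\alpha_d$ together with $C$ suffice to trivialize every flip relation changing only acyclic triangles or a $3$-cycle $c_{IJK}$ with an index $\le2$ — the short chord attached to such a $3$-cycle carrying a free $\alpha_2$ (or $\alpha_d$) that absorbs it — leaving exactly one independent relation among the couplings for each $3$-cycle type $c_{IJK}$ with $I,J,K>2$. Independence of these ${\cal C}_n$ relations I would certify by exhibiting, for each such $(I,J,K)$, two triangulations related by flipping the internal triangle of type $(I,J,K)$ in and out: their $F$-ratio is proportional to $\lambda_{IJK}$ times data already fixed by the other relations, so the relation cannot be dropped.

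The \emph{main obstacle} is precisely this last step — proving $\mathrm{rank}\{v_T\}=(\lfloor n/2\rfloor-1)+{\cal C}_n$, i.e.\ that the $\alpha$-freedom is exactly enough to kill every obstruction except the ${\cal C}_n$ coming from internal triangles with all indices $>2$. I expect this to require an induction on $n$ (or on the number of chords of folded distance $>2$): one isolates an internal triangle of type $(I,J,K)$, $I,J,K>2$, inside a subpolygon so that flipping it in and out changes $F$ by a factor carrying $\lambda_{IJK}$ and data from strictly smaller configurations, which both produces the relation and forces its independence, while triangles with some index $\le2$ are neutralized by the $\alpha$'s of their attached short chords. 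The bookkeeping of which $\alpha$'s compensate which equivalence classes, and ruling out accidental degeneracies at small $n$, is where the real work lies; everything upstream — the amplitude formula, the reduction to $F\equiv\mathrm{const}$, and the linear-algebra reformulation — is routine given (\ref{dracf1}) and the preceding lemma.
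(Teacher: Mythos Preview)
Your setup through the constancy condition $F(T)=\prod_{\triangle\in T}\lambda_{c(\triangle)}\prod_{e\in T}\alpha_{d(e)}=\mathrm{const}$ is correct and matches the paper's equation~(\ref{mascon}). Where you and the paper diverge is in extracting the count of relations from this condition.

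You pass to logarithms and recast the problem as computing $\dim\big(W\cap(\operatorname{Im}B)^{\perp}\big)$, then concede that the rank identity $\operatorname{rank}\{v_T\}=(\lfloor n/2\rfloor-1)+{\cal C}_n$ is the real obstacle. The paper avoids this linear algebra entirely by working concretely from the fan triangulation $T_0=\{(1,3),(1,4),\dots,(1,n-1)\}$. From $T_0$ every single mutation $(1,i)\to(i-1,i+1)$ replaces two triangles sharing an external edge, and the residue ratio is exactly $\alpha_{13}/\alpha_{1i}$; iterating, any triangulation whose closed $3$-cycles all carry at least one edge of color $\le 2$ has its $F$-ratio expressed purely in the $\lfloor n/2\rfloor-1$ knobs $\alpha_{13}/\alpha_{1i}$, which are then fixed by matching to the couplings $\lambda_{1,J,J+1}$ and $\lambda_{2,J,K}$. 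Once those are fixed, each triangulation containing a $3$-cycle $c_{IJK}$ with $I,J,K>2$ imposes a single new relation on $\lambda_{IJK}$ via equation~(\ref{app11}), because all other factors in $F(T)$ are already determined. This is exactly the step you outlined informally --- ``triangles with some index $\le2$ are neutralized by the $\alpha$'s of their attached short chords'' --- but the fan reference makes it a two-line computation rather than a rank calculation.

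So your approach is sound in principle but over-engineered; the paper's choice of the fan $T_0$ is the missing idea that collapses your inductive bookkeeping. Neither argument, incidentally, fully proves \emph{independence} of the ${\cal C}_n$ relations --- the paper is content to exhibit one relation per cycle type and stop, as are you --- so on that point your proposal is no weaker than the original.
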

We prove this lemma in appendix \ref{lemma53proof}. The number ${\cal C}_{n}$ can be computed using the following lemma.

\begin{lemma}\label{formuforcn}
The number of relations the set of coupling $\lambda_{IJK}$ have to satisfy is given by the following formula.
Let $n\, =\, 3\tilde{n}\, +\, n_{1}$ with $n_{1}\, \in\, \{\, 0,\, 1,\, 2\, \}$. 
\begin{flalign}\label{formcn}
{\cal C}_{n}\, =\, \sum_{I=3}^{\tilde{n}}\, \left(\, \ceil -\, \lceil\frac{I}{2}\rceil\, \right)
\end{flalign}
\end{lemma}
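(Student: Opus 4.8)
The plan is to reduce the statement, via Lemma \ref{coupvsalpha}, to a purely combinatorial count and then evaluate it by stratifying on the smallest colour. By that lemma, ${\cal C}_n$ is the number of $3$-cycles $c_{IJK}$ in a coloured dissection quiver all of whose colours satisfy $I,J,K>2 \pmod n$. First I would pin down which $3$-cycles these are: a $3$-cycle arises from a triangle of some triangulation all three of whose sides are internal chords, its colours $I\le J\le K$ being the lengths of those chords; since external edges carry colour $1$, the constraint $I,J,K>2$ automatically forces the three sides to be genuine chords, so the data reduces to the colour multiset. Such a chord-triangle embeds in the $n$-gon precisely when the three colours obey the linear ``momentum-conservation'' relation $I+J\equiv K \pmod n$ (a consequence of the three subtended arcs closing up to sum $n$); after reducing arc lengths to the range $\{2,\dots,\ceil\}$ this is the same as saying either $K=I+J$ or $I+J+K=n$. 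Hence
\[
{\cal C}_n \;=\; \#\bigl\{\,3\le I\le J\le K\le \ceil \;:\; I+J\equiv K \pmod n\,\bigr\},
\]
counted without multiplicity, since distinct triangles with the same colour multiset give the same $3$-cycle type $c_{IJK}$.

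Next I would organise this count by the value of the smallest colour $I$. Because $I\le J\le K$ and the three arcs sum to $n$, the smallest of them is at most $\lfloor n/3\rfloor=\tilde{n}$, and by hypothesis it is at least $3$; this produces exactly the summation range $3\le I\le \tilde{n}$ and explains why $n_{1}=n-3\tilde{n}$ enters only through the endpoint $\tilde{n}$. For a fixed admissible $I$ I would enumerate the pairs $(J,K)$ with $I\le J\le K$ obeying the relation: the a priori interval $\{I,\dots,\ceil\}$ of allowed $J$ is cut down by the constraint $J\le K$ together with $I+J\equiv K$, and a direct computation shows the number of surviving pairs is $\ceil-\lceil I/2\rceil$. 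Summing over $I$ then yields $\displaystyle{\cal C}_n=\sum_{I=3}^{\tilde{n}}\bigl(\ceil-\lceil I/2\rceil\bigr)$, which is the claimed formula.

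The hard part will be the per-$I$ count in the second step and the attendant bookkeeping. One has to split into cases according to the parities of $n$ and of $I$ (this is what manufactures the ceiling $\lceil I/2\rceil$ out of the two linear branches $K=I+J$ and $I+J+K=n$), take care not to double-count the triples for which both branches hold simultaneously — which happens only for $n$ even with $I+J=\ceil$ — and discard the arithmetically admissible triples that fail to embed as an honest chord-triangle (degenerate or coincident arcs near the boundary of the range). I expect reconciling the two forms of the relation and collapsing the resulting floor/ceiling expressions into the single clean summand $\ceil-\lceil I/2\rceil$ to be the only real obstacle; all the geometric content is already used up in identifying the $3$-cycles with colour triples satisfying $I+J\equiv K\pmod n$ and in the observation that the smallest colour is bounded by $\tilde{n}$.
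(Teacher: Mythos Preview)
Your approach is essentially the one the paper takes: reduce to counting ordered triples $3\le I\le J\le K\le \lfloor n/2\rfloor$ with $I+J\equiv K\pmod n$, stratify by the smallest colour $I$ (which is bounded by $\tilde n=\lfloor n/3\rfloor$), count the admissible $(J,K)$ for each $I$, and sum. The only tactical difference is that the paper does not split into the two branches $K=I+J$ versus $I+J+K=n$ that you anticipate; instead it observes that once $I$ is fixed, $K$ is determined by $J$ via the congruence, so it simply computes the allowed range $J\in\{I,\dots,\lfloor n/2\rfloor-\lceil I/2\rceil\}$ by checking that at $J=J_{\max}$ one has $K\in\{J_{\max},J_{\max}+1\}$ and that larger $J$ would force $K<J$ --- so the parity bookkeeping and double-counting worries you flag collapse immediately into the single upper endpoint $J_{\max}$.
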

\begin{proof}
We want to compute the cardinality of the set $\{\lambda_{IJK}\, \vert\, 3\, \leq\, I\, \leq\, J\, \leq\, K\, \leq\, \lfloor \frac{n}{2} \rfloor\, \vert\, I + J\, =\, K\, \textrm{modulo}\, n\, \}$.  It suffices to consider $I\, \leq\, J\, \leq\, K$ as the coupling $\lambda_{IJK}$ is symmetric in $I,\, J$ and $K$. This means 
\begin{flalign}\label{rangeI}
I\, \in\, \{\, 3\, \dots,\, \lfloor\, \frac{n}{3}\, \rfloor\, =:\, \tilde{n}\, \}
\end{flalign}
For any given $I$, the range of $J$ is 
\begin{flalign}\label{rangeJ}
J\, \in\, \{\, I, \dots,\, \lfloor\, \frac{n}{2}\, \rfloor\, -\, \lceil\, \frac{I}{2}\, \rceil\, \}
\end{flalign}
$J$ can not exceed $J_{\textrm{max}}\, =\, \lfloor\, \frac{n}{2}\, \rfloor\, -\, \lceil\, \frac{I}{2}\, \rceil$  as when $J\, =\, J_{\textrm{max}}$, 
\begin{flalign}
K\, =\, \frac{n}{2}\, -\, I\, +\, \lceil\, \frac{I}{2}\, \rceil
=\, J_{\textrm{max}}\, \textrm{or}\, J_{\textrm{max}} + 1
\end{flalign}
Clearly if $J\, >\, J_{\textrm{max}}$ then $K$ is less than $J$ which is a contradiction. Formula for ${\cal C}_{n}$ is a direct consequence of eqns. (\ref{rangeI}, \ref{rangeJ}). 
\end{proof}
It can be verified immediately that  ${\cal C}_{10}\, =\, 1$, ${\cal C}_{12}\, =\, 3$, and so on.  Hence for $n\, \geq\, 9$ the canonical form equals tree-level planar amplitude associated to a local interaction in $N_{c}$ kinematically allowed couplings are constrained by ${\cal C}_{n}$ relations. And in this case, (up to an overall scaling set by $\prod_{i=1}^{\ceil-1}\, \lambda_{1\, i\, i+1}$ ) the canonical form $\Omega_{n}\vert_{A_{n-3}^{d}}$ equals the  n-point color ordered amplitude $M_{n}(1\, \dots\, n\, \vert\, 1,\, \dots,\, n)$ of a quantum field theory, with the interaction Lagrangian 
\begin{flalign}\label{n5lag}
V(\phi_{1},\, \phi_{2},\, \dots,\, \phi_{\lfloor\frac{n}{2}\rfloor})\, =\,\hspace{0.6cm} \sum_{\mathclap{\substack{1\, \leq\, I,J,K\leq\, \lfloor\frac{n}{2}\rfloor\,\\ I + J = K\, \textrm{modulo}\, n}}} \hspace{0.6cm} \frac{1}{3!}\, \lambda_{IJK}\, \phi_{I}\, \phi_{J}\, \phi_{K}
\end{flalign}
Above result thus ``bootstraps" a push-forward of the $n-3$ dimensional Parke-Taylor form to an $n$ point tree-level planar S matrix of a multi-scalar bi-adjoint QFT.  It is important to note that unlike the canonical form on the undeformed ABHY realisation (which, for all $n$ is the S-matrix of a $\phi^{3}$ theory with a single bi-adjoint scalar field) the canonical form on deformed realisation and the associated forms generate amplitudes in a family of theories parametrized by $n$.

A classification of the class of S-matrices that result from the more general deformed realisations of the associahedron is beyond the scope of this paper. However in the next section, we ask a reverse question instead. Namely does a S-matrix of  a local scalar field Lagrangian with massive as well as massless poles fit into the positive geometry program? As we show in section \ref{fldr}, the answer is in the affirmative for the simplest possible Lagrangian with two (bi-adjoint) scalars.
\section{From a Lagrangian to a Deformed Realisation}\label{fldr}
We can rephrase the question stated above as follows : Given the interaction defined in eqn.(\ref{n5lag}), is tree-level color order $m$-point amplitude (when $m\, \neq\, n$) a (weighted sum over) canonical forms associated to positive geometries? 

Although we do not answer this question in the current paper for interaction defined in eqn.(\ref{n5lag}), it is reminiscent of a question we had investigated in  \cite{Jagadale:2021iab}. In \cite{Jagadale:2021iab}, it was shown that the perturbative S-matrix for an $n$ point amplitude of mass less particles with interaction of the form
\begin{flalign}\label{phi12phi2}
V(\phi_{1}, \phi_{2})\, =\, \lambda_{1}\, \phi_{1}^{3}\, +\, \lambda_{2}\phi_{1}^{2}\phi_{2}
\end{flalign}
is a weighted sum over $n-3$ forms uniquely defined by a class of positive geometries known as colorful associahedron. A colorful associahedron is a combinatorial polytope which is an associahedron in which a specific set of co-dimension one facets is colored distinctly as opposed to the other (uncolored) facets. (we will review the precise definition of the colorful associahedron below.)  The final result of the analysis in \cite{Jagadale:2021iab} can be summarised as follows.

Let us denote the $n$ point amplitude up to order $\lambda_{2}^{2}$ as $M_{n}^{(2)}$. It was proved in \cite{Jagadale:2021iab} that $M_{n}^{(2)}$ is a weighted sum over canonical forms associated to the colorful associahera as well as the ABHY associahedron all of whose boundaries correspond to massless poles. All the weights multiplying forms of the colorful associahedra were positive and the weight associated to the ``massless" ABHY associahedron was negative. 

In light of the insights we have gained in this paper, we improve on the analysis performed in \cite{Jagadale:2021iab} and show that the $n$ point amplitude $M_{n}^{(2)}$ is a weighted sum over certain deformed realisation of $A_{n-3}$ (with no contribution of the undeformed ABHY associahedron with all it's boundaries at $X_{ij}\, =\, 0$) such that all the weights are positive. \footnote{Set of all these deformed realisations is in bi-jection with the set of all colorful associahedra. We can thus think of the deformed realisations as set of many realisations of the associahedron or each deformed realisation can be thought of as a specific realisation of a unique colorful associahedron.} 

Let ${\cal S}_{ij}$ be a subset of chords defined as,
\begin{flalign}
{\cal S}_{ij}\, =\, \{\, (i,j), (i+1, j+1),\, \dots,\, (i + \vert j - i\vert - 1, j + \vert j - i\vert - 1)\, \}
\end{flalign}
Now consider the following map between the embedding space and ${\cal K}_{n}$,  
\begin{flalign}\label{xtokappaonem}
\begin{array}{lll}
\kappa_{ij}\, =\, \alpha\, (X_{ij}\, -\, m^{2}\, )\ \forall\ (i,j)\ \in {\cal S}_{ij}\\
=\, X_{ij}\, \textrm{otherwise}
\end{array}
\end{flalign}
Given any reference triangulation $T$ of the $n$-gon, We can now use 
\begin{flalign}
\overline{s}_{kl}\, =\, -\, c_{kl}\, \forall\, (k,l)\, \notin\, T^{c}
\end{flalign}
to obtain \emph{the deformed realisation $A_{n-3}^{T, (i,j)}$ of the associahedron $A_{n-3}$}. It is clear that precisely $\vert j - i\vert$ number of facets correspond to $X_{ij}\, =\, m^{2}$ pole, whereas rest of the facets correspond to $X_{kl}\, =\, 0$. The notation $A_{n-3}^{T, (i,j)}$ makes the dependence of the realisation on $T$ and the distinction between  facets associated to massive poles explicit.  

Although $A_{n-3}^{T, (i,j)}$ is a realisation of the associahedron $A_{n-3}$, the distinction between two classes of boundaries can also be encoded in the combinatorial definition itself. In other words, we can consider the set of all triangulations in which the chords belonging to the set ${\cal S}_{ij}$ are red and the remaining chords are black. In fact, in \cite{Jagadale:2021iab}, we had introduced a concept of colorful associahedron that precisely captured this coloring information defined as follows.

\begin{definition}[Colorful Associahedron]
An $n-3$ dimensional colorful associahedron, $A_{n-3}^{(ij)}$ is an associahedron in which co-dimension one facets corresponding to the chords $(i,j),\, \dots,\, (i + \vert j - i\vert - 1, j+ \vert j - i\vert - 1)$ are red while rest of the facets are black. \footnote{It is ``morally" closer to an accordiohedron than the associahedron in the sense that it depends on the reference (partial) triangulation $(ij)$.} 
\end{definition}
As an example, starting with a triangulation $(\, (13)_{R},\, 14)$ of a pentagon, the colorful associahedron $A_{2}^{(13)}$ has two red edges $ (13)_{R},\, (24)_{R}$ while rest of the  edges are black.

In the $n-5$ example, there are in fact 5 possible colorful  2 dimensional associahedra 
\begin{flalign}
A_{2}^{(ij)}\, \vert\, (ij)\, \in\, \{\, (13), (24), (35), (14), (25)\, \}
\end{flalign}
with two red edges in each of them.  (See figure \ref{Assoblock514}). 
\begin{figure}[H]
    \centering
    \includegraphics[scale=0.4]{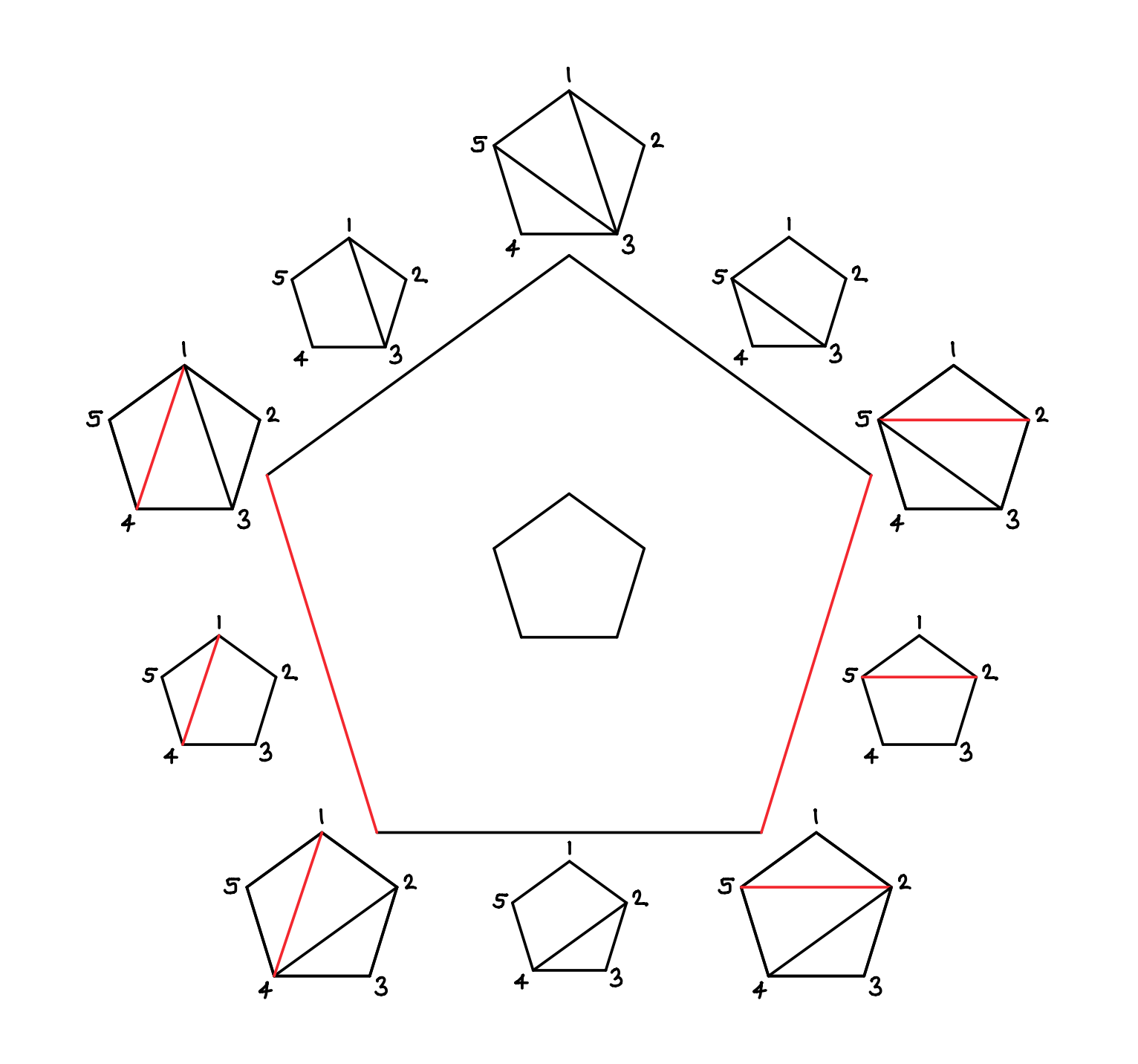}
    \caption{Five-point associahedron with $((14)_{R},(25)_{R})$ faces red.}
    \label{Assoblock514}
\end{figure}
\begin{flalign}
\textrm{Red facets} =\, \left\{\, (\, (13)_{R}, (24)_{R}\, ),\, ((24)_{R}, (35)_{R} ), ( (35)_{R}, (14)_{R}\, ), (\, (14)_{R}, (25)_{R}\, ),\, ( (25)_{R}, (13)_{R} )\, \right\}
\end{flalign}
$A_{n-3}^{T, (i,j)}$ is then the ABHY realisation of the colorful associahedron in the embedding space ${\mathbb{ R}}^{\frac{n(n-3)}{2}}$ which is diffeomorphic to the kinematic space through eqn.(\ref{xtokappaonem}).\footnote{In \cite{Jagadale:2021iab}, we had denoted the ABHY realisation of the colorful associahedron as $A_{n-3}^{{\cal F}_{ij}}$, where ${\cal F}_{ij}$ is the set of all dissections in which precisely $\vert j - i\vert$ chords are red. However the information about reference $T_{0}$ was implicit in this notation.} 

We can also compute the push-forward of the Parke-Taylor form using deformed scattering equations. As in the case of massless bi-adjoint theory, this form can also be understood as the planar scattering form on the embedding space, $\Omega_{n-3}$
\begin{flalign}
\Omega_{n-3}\, =\, \sum_{T}\, (-1)^{\sigma(T)}\, \bigwedge_{(ij)\, \in\, T}\, d\ln\, \kappa_{ij}
\end{flalign}
For $n\, =\, 5$, 
\begin{flalign}
\Omega_{2}\vert_{A_{2}^{T_{0}, (1,3)}}\, =\, \alpha\, \left[ \left(\, \frac{1}{\tilde{X}_{13}\, X_{35}}\, +\, \frac{1}{\tilde{X}_{13}\, X_{14}}\, +\, \frac{1}{\tilde{X}_{24} X_{14}}\, +\, \frac{1}{\tilde{X}_{24} X_{25}}\, \right)\, +\,  \frac{1}{X_{35}X_{25}}\, \right]\, d\tilde{X}_{13}\, \wedge\, d X_{35}
\end{flalign}
We can similarly compute $\Omega_{2}\vert_{A_{2}^{T, (i,j)}}$  and immediately see that if  $\alpha$ is written as  $\frac{\lambda_{2}^{2}}{\lambda_{1}^{2}}$. then 
\begin{flalign}
\frac{1}{2}\, \sum_{(ij)\, \in\, {\cal S}_{2}}\, \Omega_{2}\vert_{A_{2}^{(ij)_{R}}}\, =\ M_{5}^{\textrm{co}}(1,\, \dots,\, 5)
\end{flalign}
Rather remarkably, this is a specific example of a general result about color-ordered amplitude as a weighted sum over canonical forms associated to the deformed realisations $A_{n-3}^{T, (ij)}$. For any triangulation $T_{0}$, we have the following result. 
\begin{lemma}
Let $\omega_{\lambda_{1}, \lambda_{2}}$ be a weighted sum over $\Omega_{n-3}$ defined as, 
\begin{flalign}
\omega_{\lambda_{1}, \lambda_{2}}\, =\, \sum_{I=2}^{\lfloor\frac{n}{2}\rfloor}\, \frac{1}{I}\, \sum_{(i,j)\vert\, (j - i)\, =\, I}\,  \Omega_{n}\vert_{A_{n-3}^{T_{0}, (ij)}}
\end{flalign}
Then 
\begin{flalign}
\omega_{\lambda_{1}, \lambda_{2}}\, =\, M_{n}^{\textrm{co}}\, \bigwedge_{ij\, \in\,  T_{0}}\, d\tilde{X}_{ij}
\end{flalign}
where $M_{n}^{co}$ is the color-order tree-level amplitude of massless particles with interaction defined in eqn.(\ref{phi12phi2})
\end{lemma}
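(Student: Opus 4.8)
The plan is to compute the weighted sum $\omega_{\lambda_1,\lambda_2}$ term by term on the level of the explicit rational forms, and match it against the known Feynman expansion of $M_n^{\textrm{co}}$ for the interaction $\lambda_1\phi_1^3+\lambda_2\phi_1^2\phi_2$. Since each $\Omega_{n-3}\vert_{A_{n-3}^{T_0,(ij)}}$ is (by the results of Section \ref{draa}, specialised to the one-parameter deformation of eqn.(\ref{xtokappaonem})) a sum over triangulations $T$ of the $n$-gon with each term a product $\prod_{(kl)\in T}1/\kappa_{kl}$ weighted by the appropriate power of $\alpha$, the first step is to pin down that power. A triangulation $T$ contributes to $\Omega_{n-3}\vert_{A_{n-3}^{T_0,(ij)}}$ with a factor $\alpha^{\,p}$ where $p = |\,T\cap\mathcal{S}_{ij}\,| - |\,T_0\cap\mathcal{S}_{ij}\,|$, i.e. the net number of ``red'' chords (chords lying in $\mathcal{S}_{ij}$) that $T$ carries relative to the reference, exactly as in eqn.(\ref{dracf1}) with $\alpha_{kl}=\alpha$ on $\mathcal{S}_{ij}$ and $1$ elsewhere. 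After multiplying out by $\bigwedge d\tilde X_{ij}$ over $T_0$ to strip the volume form, one is left with a scalar identity: for each triangulation $T$ we must show that
\begin{flalign}
\sum_{I=2}^{\lfloor n/2\rfloor}\,\frac{1}{I}\,\sum_{(i,j)\,:\,|j-i|=I}\ \alpha^{\,|T\cap\mathcal{S}_{ij}|-|T_0\cap\mathcal{S}_{ij}|}\ \prod_{(kl)\in T}\frac{1}{\kappa_{kl}^{(ij)}}\ =\ (\text{coefficient of }\textstyle\prod_{(kl)\in T}X_{kl}^{-1}\text{ in }M_n^{\textrm{co}}),
\end{flalign}
where $\kappa^{(ij)}_{kl}$ equals $\tilde X_{kl}=X_{kl}-m^2$ if $(kl)\in\mathcal{S}_{ij}$ and $X_{kl}$ otherwise. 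The key combinatorial observation to establish is that for a fixed triangulation $T$, the chords of $T$ together determine precisely which ``colorings'' $(ij)$ are compatible, and the sum over compatible colorings with weight $1/I$ is engineered so that each chord of $T$ is consistently assigned a mass (either $0$ or $m^2$) in a way that reproduces the Feynman propagators of a given diagram in the $\phi_1^3+\phi_1^2\phi_2$ theory.

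Concretely, the second step is the dictionary between colorings and diagrams. A Feynman diagram in the two-scalar theory is a triangulation $T$ together with an assignment of $\phi_1$ or $\phi_2$ to each internal chord, subject to the vertex rule that every triangle has an even number of $\phi_2$-chords among its three sides (where external edges count as $\phi_1$); equivalently, the $\phi_2$-chords form a set whose restriction to every triangle has even size. The claim is that the $\phi_2$-chord sets arising this way are exactly the sets of the form $T\cap\mathcal{S}_{ij}$ — i.e. ``nested bands'' of parallel chords — and that the map $(T,(ij))\mapsto$ (diagram with $\phi_2$ on $T\cap\mathcal{S}_{ij}$) is $I$-to-one onto diagrams whose $\phi_2$-band has ``width'' $I=|j-i|$, which is exactly what the prefactor $1/I$ corrects for. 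The power $\alpha^{\,|T\cap\mathcal S_{ij}|-|T_0\cap\mathcal S_{ij}|}$ combined with $\alpha=\lambda_2^2/\lambda_1^2$ and the overall normalisation $\prod \lambda_{\triangle}$ implicit in identifying the canonical form with the amplitude should reproduce exactly the coupling factor $\lambda_1^{\#(\phi_1^3\text{-vertices})}\lambda_2^{\#(\phi_1^2\phi_2\text{-vertices})}$ of the diagram, because a $\phi_2$-band of width $I$ passing through $T$ meets $2I$ vertices, flipping $2I$ of them from $\phi_1^3$-type to $\phi_1^2\phi_2$-type and thereby producing $\lambda_2^{2I}/\lambda_1^{2I}=\alpha^{I}$. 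Summing over all $(T,(ij))$ and using the relation $|T_1\cap\mathcal S_{ij}|=|T_2\cap\mathcal S_{ij}|$ for acyclic-quiver triangulations (the analogue of Lemma's eqn.(\ref{acycid}), and already exploited in the preceding lemmas) to fix the reference-dependent shift uniformly, the full Feynman sum assembles.

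The main obstacle I expect is the combinatorial bijection in the second step: proving that the only $\phi_2$-chord configurations consistent with the ``even number per triangle'' vertex rule are precisely the bands $\mathcal S_{ij}\cap T$, and that no two distinct $(i,j)$ with the same $I$ give the same diagram while colorings of different widths never collide — this is what makes the multiplicities $1/I$ exactly right and guarantees nothing is over- or under-counted. I would handle this by induction on $n$, peeling off an ear (a triangle with two external edges) of $T$: removing it reduces to an $(n-1)$-gon, the vertex rule at the ear forces the two chords bounding it to carry the same field, and tracking how $\mathcal{S}_{ij}$ restricts under this deletion gives the inductive step. A secondary, more mechanical check is that the residues on each vertex of $A_{n-3}^{T_0,(ij)}$ are $1$ up to the coupling factors — but this is already essentially the content of the lemmas proved just above in the excerpt, so it can be quoted rather than redone. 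Once the bijection and the coupling bookkeeping are in place, equating the two sides is a routine matter of comparing coefficients of each monomial $\prod_{(kl)\in T}X_{kl}^{-1}$.
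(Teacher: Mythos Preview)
Your proposal has a genuine gap stemming from a misreading of both the interaction and the set $\mathcal{S}_{ij}$.

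First, the vertex rule you state is for the wrong theory. The interaction of eqn.(\ref{phi12phi2}) is $\lambda_1\phi_1^3+\lambda_2\phi_1^2\phi_2$, so each cubic vertex carries either zero or \emph{one} $\phi_2$-leg, not an even number. Consequently every $\phi_2$-propagator is isolated: its two endpoints are $\phi_1^2\phi_2$ vertices whose other two legs are $\phi_1$.

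Second, and more seriously, your picture of a ``$\phi_2$-band of width $I$ passing through $T$'' cannot occur. For $0<k<|j-i|$ the two chords $(i,j)$ and $(i+k,j+k)$ satisfy $i<i+k<j<j+k$ and hence cross; thus \emph{all} chords in $\mathcal{S}_{ij}$ are pairwise crossing, and any triangulation $T$ obeys $|T\cap\mathcal{S}_{ij}|\le 1$. So every vertex of $A_{n-3}^{T_0,(ij)}$ is adjacent to at most one red facet, every term in $\Omega_{n-3}|_{A_{n-3}^{T_0,(ij)}}$ has at most one massive propagator, and the weighted sum can only reproduce $M_n^{\textrm{co}}$ through $O(\lambda_2^2)$ --- which is precisely the context in which the lemma is stated (see the paragraph introducing $M_n^{(2)}$ just above). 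Your claims that a band ``meets $2I$ vertices'' and produces a factor $\alpha^{I}$ are therefore not correct: a single red chord in $T$ converts exactly two vertices from $\phi_1^3$ to $\phi_1^2\phi_2$, yielding a relative factor $\lambda_2^2/\lambda_1^2=\alpha$, not $\alpha^I$.

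Once this is understood, the combinatorics collapses to something much simpler than your proposed induction. For a fixed triangulation $T$ and a fixed chord $(k,l)\in T$, the condition $(k,l)\in\mathcal{S}_{ij}$ holds for exactly $|l-k|$ choices of $(i,j)$ (namely $(i,j)=(k-s,l-s)$ for $0\le s\le |l-k|-1$), all with $I=|l-k|$; the prefactor $1/I$ then makes each one-massive-propagator diagram appear exactly once in the sum. The $\alpha$-deformation ensures that the residues at vertices with one red facet versus no red facet already come in the ratio $\lambda_2^2:\lambda_1^2$, which is exactly why the paper's formula --- unlike the one in \cite{Jagadale:2021iab} --- needs no subtraction of a purely massless associahedron. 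The paper's own proof simply invokes Lemma~4.1 of \cite{Jagadale:2021iab} and observes this last point; a direct proof along the corrected lines above would also be short.
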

Proof of this statement directly follows from the proof of lemma (4.1) in \cite{Jagadale:2021iab} where it was shown that, if the colorful associahedron is realised directly in ${\cal K}_{n}^{+}$, then 
\begin{flalign}
\omega_{\lambda_{1}, \lambda_{2}}\, =\, \sum_{I=2}^{\lfloor\frac{n}{2}\rfloor}\, \frac{1}{I}\, \sum_{(i,j)\vert\, (j - i)\, =\, I}\,  \Omega_{n}\vert_{A_{n-3}^{T_{0}, (ij)}}\, -\, \gamma\, \Omega_{n-3}^{m=0}\vert_{A_{n-3}}
\end{flalign}
The difference between the two formula simply arises from the fact that in \cite{Jagadale:2021iab}, the embedding space for colorful associahedra was identified with the kinematic space. Hence all the vertices (those adjacent to one red facet as well as those adjacent to all uncolored facets) of a given $A_{n-3}^{(i,j)}$ contributed with unit residue. As a result of this,  the weighted sum that reproduced the $n$ point amplitude included an extra negative contribution proportional to $n$ point mass-less amplitude in bi-adjoint scalar theory with only $\phi_{1}$ field.  

As we see, the deformed realisation of colorful associahedron incorporates the differential contribution of the two types of vertices and hence a weighted sum \emph{only} over the colorful associahedra is the n-point amplitude of massless external particles and upto order $\lambda_{2}^{2}$. 
\subsection{CHY formula for an EFT amplitude}
Given a colorful associahedron $A_{n-3}^{T, (i,j)}$ in ${\cal K}_{n}^{+}$ we can analyse the region where $X_{kl}\, <<\, m^{2}\, \vert\, (k,l)\, \in\, \{\, (i,j),\, \dots,\, (i + \vert j - i\vert - 1, j + \vert j - i\vert + 1)$.  This is the limit in which the kinematics of the external massless particles is such that all the diagrams containing $\frac{1}{\tilde{X}_{ij}}$ propagators collapse to a quartic coupling. The perturbative scattering amplitude $O(\lambda_{2}^{2})$ reduces to the amplitude in the low energy effective field theory with a lagrangian,
\begin{flalign}\label{mixcoup}
V_{\textrm{EFT}}(\phi_{1})\, =\, \lambda_{1}\phi_{1}^{3}\, +\, g\, \phi_{1}^{4}
\end{flalign}
where $g\, =\, \frac{\lambda_{2}^{2}}{m^{2}}$. $g$ remains finite in the limit $\lambda_{2}, m\, \rightarrow\, \infty$ with $\frac{\lambda_{2}}{m}\, =$ finite. 

It was shown in \cite{Jagadale:2021iab} that the positive geometry for the massless bi-adjoint theory whose canonical form generates S-matrix for eqn.(\ref{mixcoup}) is in fact an accordiohedron which is\\ 
(1) obtained from dissection containing one quadrilateral and $n-5$ triangles and\\
(2) is a co-dimension one facet of the colorful associahedron. 

``Viewing" the colorful associahedron from $X_{kl}\, <<\, m^{2}$ region is equivalent to taking the projection along the $X_{kl}\, =\, 0$ facet. In \cite{Jagadale:2021iab}, the resulting polytope  was referred to as a projected accordiohedron ${\cal PAC}[{\cal D}_{kl}]$.
\begin{definition}[Projected Accordiohedron]
Given a colorful associahedron $A_{n-3}^{(ij)}$, let $(k\ell)\, \in\, {\cal S}_{ij}$. Then a projected accordiohedron, ${\cal PAC}[{\cal D}(kl)]$ is an accordiohedron generated by the following reference dissection.
\end{definition}
\begin{flalign}
{\cal D}(k\ell)\, =\, \{\, (k,k-2),\, \dots,\, (k,\ell+1),\, (k+2,\ell),\, \dots,\, (\ell-2,\ell)\, \}
\end{flalign}
where each vertex pair $(mn)$ is ordered as $m\, <\, n$ and all the entries are considered modulo $n$. As an example for $n\, =\, 6$ ${\cal D}(14)$ is
\begin{flalign}
{\cal D}(14)\, =\, \{ (15), (26)\, \}
\end{flalign}
In general an accordiohedron defined using mixed dissections is not a boundary of an associahedron, but a projected accordiohedron by it's very construction is always a co-dimension one face of the colorful associahedron. We can immediately obtain the canonical form on ${\cal PAC}({\cal D}(kl))$ by the residue formula, 
\begin{flalign}
\Omega^{{\cal D}(ij)}_{n-4}\vert_{{\cal PAC}({\cal D}(ij)}\, :=\, \textrm{Res}_{\kappa_{ij}\, =\, 0}\, \Omega_{n-3}\vert_{A_{n-3}^{T, (ij)}}
\end{flalign}
We proved in \cite{Jagadale:2021iab} that a weighted sum over $\Omega_{n-4}^{{\cal D}(ij)}$ is the tree-level S matrix in a massless bi-adjoint theory with mixed coupling, $\lambda\, \phi^{3}\, +\, g\, \phi^{4}$, and upto order $g^{2}$.
\begin{flalign}
\Omega_{\textrm{EFT}}\, =\, \sum_{(kl)\, \in\, {\cal S}_{ij}} a_{(kl)}\ \Omega_{n-4}^{{\cal D}(kl)}\vert_{{\cal PAC}[{\cal D}(kl)]}
\end{flalign}
The weights were explicitly computed in \cite{Jagadale:2021iab} and we refer the reader to the relevant section in that paper for more details.

As we now show,  ${\cal PAC}[{\cal D}(kl)]$ can also be understood as diffeomorphic image of certain co-dimension one boundary of $\overline{\cal M}_{0,n}({\mathbb{R}})$  via (deformed) scattering equations restricted to that boundary. The canonical form on the projected accordiohedron is hence the push-forward of the restriction of the Parke-Taylor form as first conjectured in \cite{Aneesh:2019cvt}.  

Recall that ${\cal PAC}[{\cal D}(ij)]$ is a boundary of the colorful associahedron defined by $\kappa_{ij}\, =\, 0$ for one of the ``red" chords. As an example if we consider the colorful associahedron obtained from  the reference  triangulation $\{\, (13)_{\textrm{R}},\, \dots\, \}$. As there are two red facets in this associahedron, there are two possible projected accordiohedra ${\cal PAC}[{\cal D}(kl)]\, \vert (ij)\, \in\, \{\, (13),\, (24)\, \}$. Let us consider the projected accordiohedron ${\cal PAC}[{\cal D}(13)]$. It is immediately cleat that this projected accordiohedron is diffeomorphic image of the $u_{13}\, =\, 0$ boundary of the compactified moduli space $\overline{{\cal M}}_{0,n}({\mathbb{R}})$ under the (deformed) scattering equations. 

As the canonical form on ${\cal PAC}[{\cal D}(ij)]$ is simply the restriction of $\Omega_{n-3}\vert_{A_{n-3}^{T, (ij)}}$, we see that this form is simply the pushforward of the residue of the Parke-Taylor form on $u_{ij}\, =\, 0$ boundary. Hence the tree-level amplitude of \ref{phi12phi2} interaction to order $g$ can be written as a worldsheet formula,
\begin{flalign}
\begin{array}{lll}
\Omega_{\textrm{EFT}}\, =\\ \sum_{(kl)\, \in\, {\cal S}_{ij}} a_{(kl)}\ \int_{\partial_{\sigma_{2}=0}\, \overline{{\bf M}}_{0,n}({\mathbb{R}})}\,\textrm{Res}_{\sigma_{13}=0}\, \omega_{ws}(\sigma_{3},\, \dots,\, \sigma_{n-2})\\
\hspace*{1.3in}\prod_{j}^{\prime}\, \delta(\, \sum_{m\neq\, j}\, \kappa_{mj}\, -\, \phi_{mj}(\sigma_{2}=0,\, \sigma_{3},\, \dots,\, \sigma_{n-2})\, )
\end{array}
\end{flalign}

This rather simple corollary of our analysis shows that the perturbative $n$-point amplitude for  $\lambda_{1}\phi^{3}\, +\, g\phi^{4}$  (where $\phi$ is a massless bi-adjoint scalar) is best understood not as a integration over $n-3$ form over the CHY moduli space, but as an integral of lower forms on the boundary of the Moduli space. 

\section{Deformed realisations of one-loop Cluster Polytopes}\label{drolcp}
\subsection{Review of \texorpdfstring{$D_{n}$}{Dn} and \texorpdfstring{$\hat{D}_{n}$}{Dn} Cluster Polytopes}
The deformed realisation $A_{n-3}^{T, (ij)}$ of an associahedron is a positive geometry for the two-scalar field theory at order $\lambda_{2}^{2}$. This realisation can be thought of as the result of a ${\cal G}$ action (for certain subgroup ${\cal G}\, \subset\, ({\mathbb{R}}^{\star})^{\frac{n(n-3)}{2}}\, \times\, {\mathbb{R}}^{\frac{n(n-3)}{2}}$) on the ABHY associahedron. In this section, we argue that this idea (that the deformed realisations $A_{n-3}^{T, (ij)}$ are positive geometries for the S-matrix) extend rather directly to the one loop S-matrix integrands in the same theory. 

In \cite{Arkani-Hamed:2019vag},  Arkani-Hamed, He, Salvatori and Thomas (AHST) introduced ${\cal K}_{n}^{\textrm{1-L}}$, a  planar kinematic space for one loop amplitudes in massless bi-adjoint scalar field theories. It was explained to us by Nima Arkani-Hamed that in fact a careful treatment of tadpole channels require a certain enlargement of the AHST Kinematic space, \cite{afpst}. Throughout this section, ${\cal K}_{n}^{1-\textrm{L}}$ denotes this enlarged kinematic space.\footnote{We are indebted to Nima Arkani-Hamed for introducing the enlarged Kinematic space and the $\hat{D}_{n}$ polytope prior to the publication.}

 The planar kinematic space ${\cal K}_{n}^{1-\textrm{L}}$ for the 1-loop S-matrix where external momenta satisfy momentum conservation is rather subtle to define.  The seminal AHST construction of the kinematic space  as well as it's enlargement that we call ${\cal K}_{n}^{1-\textrm{L}}$ rely on ``doubling" the space of external momenta $p_{1},\, \dots,\, p_{n}$ to include the ``auxiliary" momentum variables $\overline{p}_{1},\, \dots,\, \overline{p}_{n}$ such that
\begin{flalign}\label{cons1l}
\begin{array}{lll}
\sum_{i=1}^{n}\, p_{i}\, +\, \sum_{i=1}^{n}\, \overline{p}_{i}\, =\, 0\\
p_{i}\, \cdot p_{\overline{j}}\, =\, p_{\overline{i}}\, \cdot p_{j}\\
p_{i}\, \cdot p_{j}\, =\, p_{\overline{i}}\, \cdot p_{\overline{j}}
\end{array}
\end{flalign}
Subjected to the above constraints, A set of planar kinematic variables can be defined as,
\begin{flalign}
\begin{array}{lll}
X_{ij}\, =\, (p_{i}\, +\, \dots\, +\, p_{j-1})^{2}\\
X_{i\overline{j}}\, =\, (\, p_{i}\, +\, \dots\, +\, p_{\overline{j}-1}\, )^{2}\, \forall\, \{\, 1\, \leq\, i\, <\, \overline{j}\, \leq\, \overline{n-1}\, \}
\end{array}
\end{flalign}
We can pictorially represent all of the above variables as  chords dissecting an abstract $2n$-gon with a hole at the center. Vertices of this polygon are ordered clockwise as $\{\, 1,\, \dots,\, n,\, \overline{1},\, \dots,\, \overline{n}\, \}$ (see figure \ref{pseudotriangulationy1y2x12}).

Except $X_{i\overline{i}}\, \vert\, 1\, \leq\, i\, \leq\, n$, all the remaining chords are straight. 
As
\begin{flalign}\label{dnchord1}
X_{i\overline{i}}\, =\, (\, p_{1}\, +\, \dots\, p_{n}\, +\, p_{\overline{1}}\, +\, \dots\, p_{\overline{i-1}}\, )^{2}
\end{flalign}
\begin{figure}
    \centering
    \includegraphics[scale=0.25]{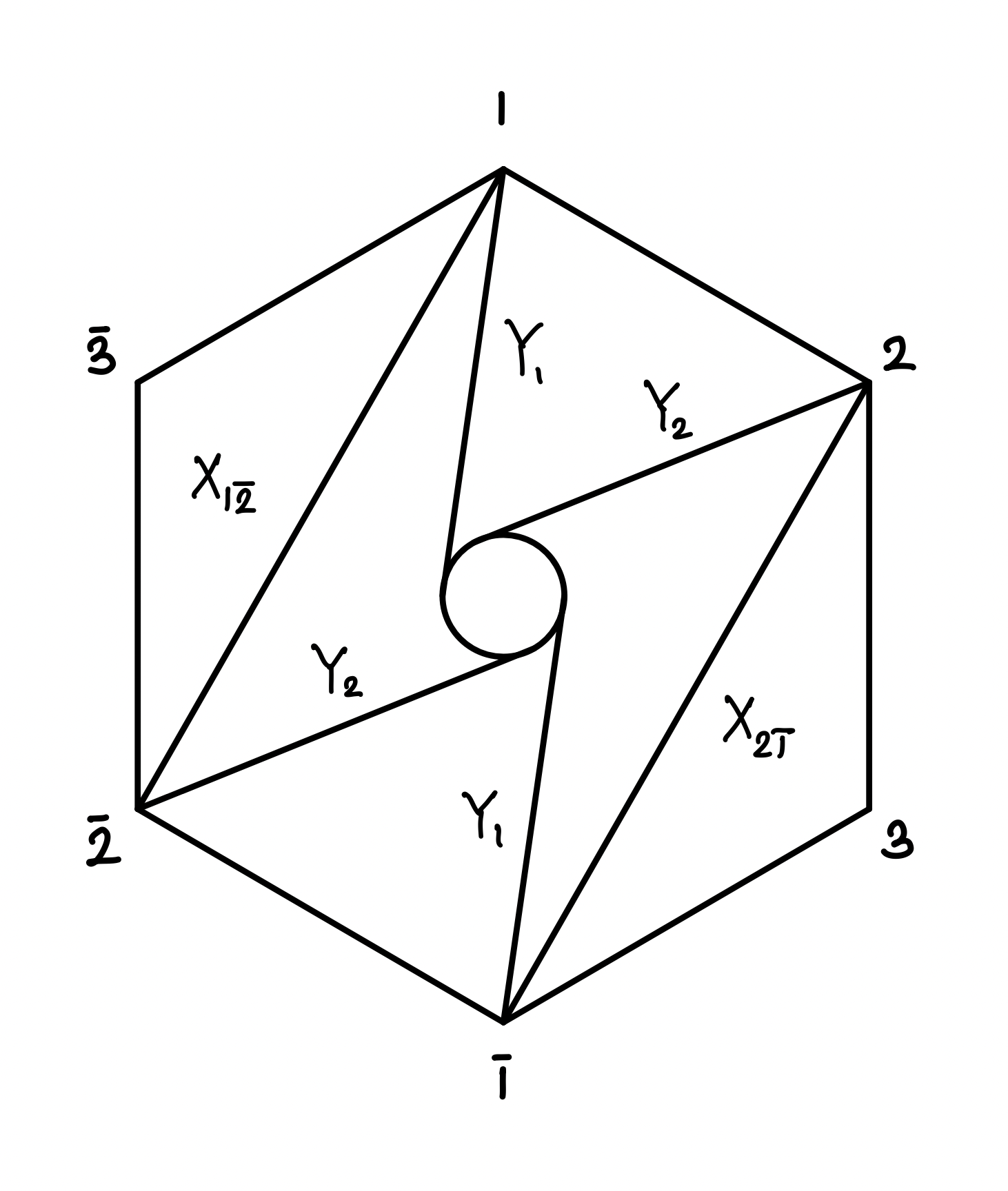}
    \caption{Chords in $2n$-gon with a hole at the center.}
    \label{pseudotriangulationy1y2x12}
\end{figure}
These ``curved" chords precisely correspond to the propagator in a tadpole graph as shown in figure \ref{pseudotriangulation_and_feyn}.
\begin{figure}
    \centering
    \includegraphics[scale=0.25]{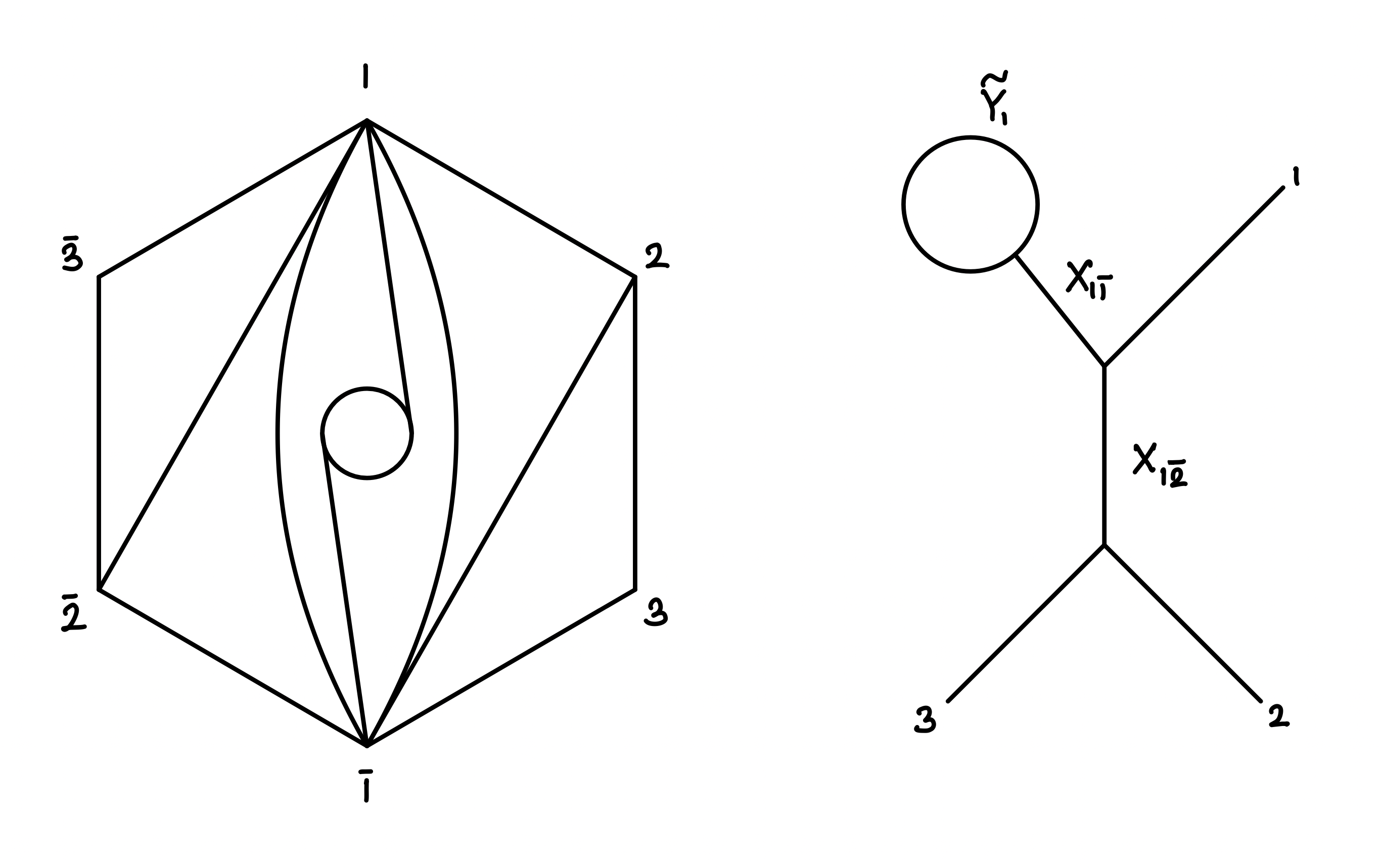}
    \caption{Dissections and Feynman diagrams.}
    \label{pseudotriangulation_and_feyn}
\end{figure}

However these variables do not take into account the Mandelstam invariants which include the loop momentum $l$. These are   $\{\, l^{2}, p_{i}\, \cdot l$, $p_{\overline{i}}\, \cdot\, l\, \forall\, i\, \}$. In (\cite{Arkani-Hamed:2019vag,afpst}) these kinematic invariants  were accounted for by using the following variables,
\begin{flalign}\label{dnchord2}
\begin{array}{lll}
Y_{i}\, =\, l_{i}^{2}\\
\tilde{Y}_{i}\, =\, l_{\overline{i}}^{2}
\end{array}
\end{flalign}
where,
\begin{flalign}
\begin{array}{lll}
l_{i}\, =\, (\, l\, +\, \sum_{k=1}^{i-1}\, p_{k}\, )\\
l_{\overline{i}}\, =\, (\, l\, +\, \sum_{k=1}^{n}\, p_{k}\, +\, \sum_{m=1}^{i-1}\, p_{\overline{m}}\, )
\end{array}
\end{flalign}
The identification of $Y_{1}$ with $l^{2}$ fixes a trivial redundancy in the labelling of the loop momenta.  ${\cal K}_{n}^{1-\textrm{L}}$ is an $n(n + 1)$ dimensional space which is co-ordinatized by Four types of variables. We collectively denote them as $\{X_{IJ}\}$ where the capitalized index $I\, \in\, \{\, 1\, \dots,\, n,0,\overline{1},\dots,\overline{n}\, \}$. Here $0$ labels the annulus in the center of the 2n-gon. We define the following conventions for our labellings : 
\begin{itemize}
\item The variable $Y_{i}$ is assigned to a chord which begins at $i$( and $\overline{i}$) and ends on the left (right) side of the annulus. 
\item Similarly, $\tilde{Y}_{i}$ begins in $i$ (as well as in $\overline{i}$) and ends on the right (respectively left) side of the annulus. 
\end{itemize}

Geometrically, the set of chords $\{\, Y_{i},\, Y_{\overline{i}}\, \}$  can be drawn inside the punctured $2n$-gon as in figures \ref{pseudotriangulationy1y2x12} and \ref{pseudotriangulation_and_feyn}. 

Thanks to the constraint eqn.(\ref{cons1l}), $X_{IJ}$ are independent planar variables such that all the Mandelstam invariants are linear combinations of $X_{IJ}$.

 We denote the set of all chords $(I,J)$ as ${\cal PC}_{n}$.
\begin{flalign}
{\cal PC}_{n}\, =\, \frac{\{\, (i,j),\, (i,\overline{j}),\, (i,0),\, (\overline{i}, 0)\, \vert\, 1\, \leq\, i\, \leq\, n\, <\, \overline{1}\, \leq\, \overline{i}\, \leq\, \overline{n}\, \}}{ (i,0)\, \cup\, (\overline{i}, 0)\, \sim\, (i,\overline{i})}
\end{flalign}
We note that in ${\cal PC}_{n}$, the union of $(i,0),\, (\overline{i}, 0)$ is identified with $(i,\overline{i})$ via homotopy equivalence. 

Now we consider the following the set of chords.
\begin{flalign}
{\cal PC}^{\prime}_{n}\, =\, \{\, (i,j),\, (i,\overline{j})\vert_{j\, \neq\, i},\, (i,0),\, (\overline{i}, 0)\, \vert\, 1\, \leq\, i\, \leq\, n\, <\, \overline{1}\, \leq\, \overline{i}\, \leq\, \overline{n}\, \}
\end{flalign}

As shown in \cite{Ceballos2015ClusterAO},  ${\cal PC}^{\prime}_{n}$ models quivers of type $D$.  Moreover these pseudo-triangulations generate an n-dimensional combinatorial polytope known as $D_{n}$ whose co-dimension $k$-facets are in 1-1 correspondence with $k$-partial pseudo triangulations generated by all the chords  in ${\cal PC}^{\prime}_{n}$. 

It was shown in \cite{Arkani-Hamed:2019vag} that there exist convex realisations of $D_{n}$ polytopes in ${\cal K}_{n}^{1-L}$  which are positive geometries for the 1-loop S matrix for massless bi-adjoint $\phi^{3}$ theory. A wider class of realisations, all which are positive geometries for the 1-loop bi-adjoint theory were discovered in \cite{Jagadale:2020qfa}. 

However, if instead of considering ${\cal PC}^{\prime}_{n}$, we consider the set ${\cal PC}_{n}$, then, as shown by Arkani-Hamed, Frost, Plamondon, Salvatori and Thomas in \cite{afpst},  the set of all pseudo-triangulations of the 2n-gon by chords in ${\cal PC}_{n}$ also form a closed convex polytope called the $\hat{D}_{n}$ polytope. 

Every n-point, 1 loop planar Feynman graph with cubic vertices labels two distinct vertices of $\hat{D}_{n}$ polytope. The distinction between $D_{2}$ and $\hat{D}_{2}$ polytopes is shown in  figure \ref{D2andD2ht}.

\begin{figure}
     \centering
     \begin{subfigure}[b]{0.45\textwidth}
         \centering
         \includegraphics[width=\textwidth]{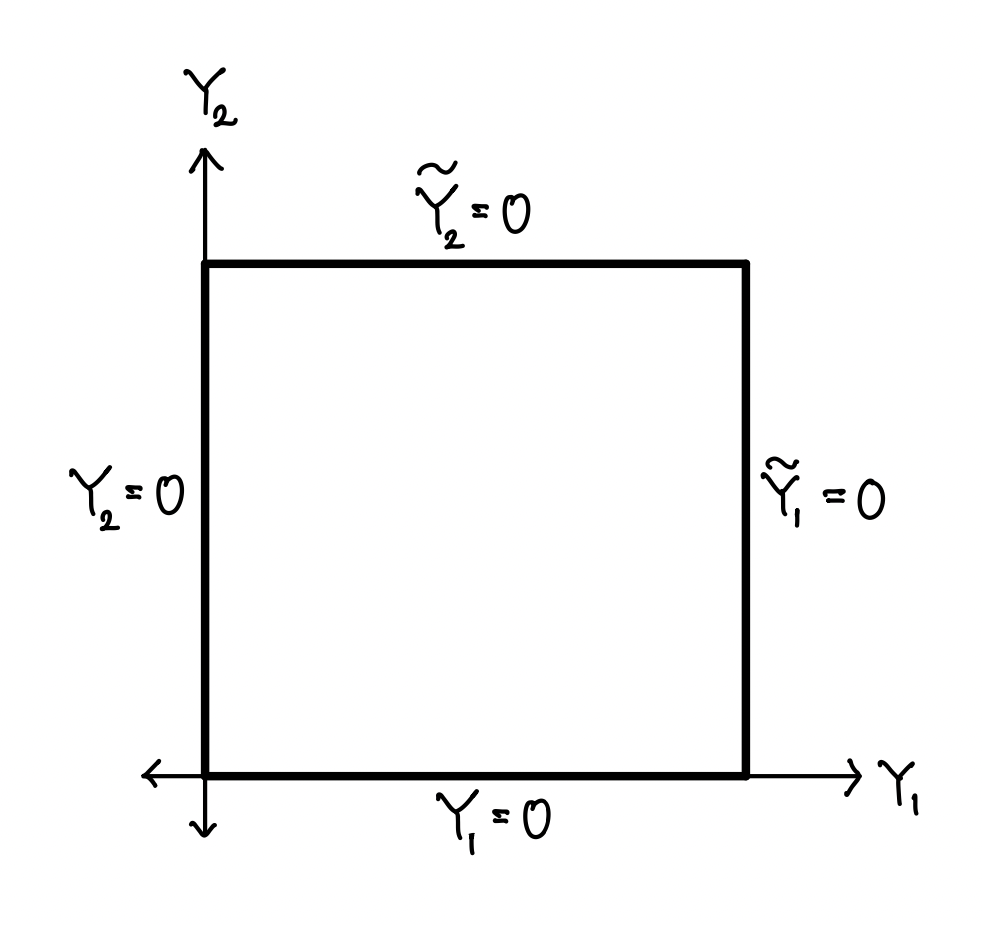}
         \caption{Realisation of $D_{2}$}
         \label{D2polytope}
     \end{subfigure}
     \hfill
     \begin{subfigure}[b]{0.45\textwidth}
         \centering
         \includegraphics[width=\textwidth]{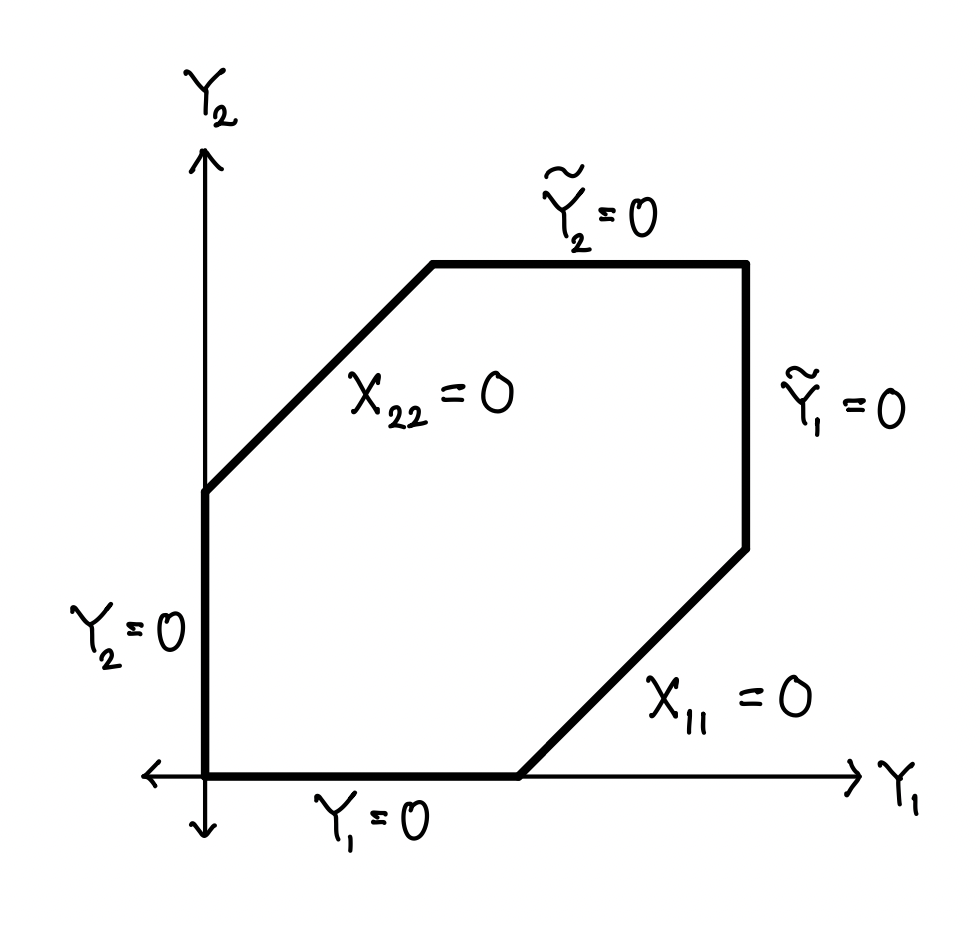}
         \caption{Realisation of $\hat{D}_{2}$}
         \label{Dht2polytope}
     \end{subfigure}
        \caption{Comparison of $D_{2}$ and $\hat{D}_{2}$} 
        \label{D2andD2ht}
\end{figure}

We now review the convex realisation of $\hat{D}_{n}$ polytope inspired by the construction of $D_{n}$ polytope in \cite{Arkani-Hamed:2019vag}.\footnote{Strictly speaking, AHST proved that the set of constraints that given a reference pseudo-triangulation $PT$ with the corresponding set of kinematic variables $X_{IJ}\, \vert\, (I,J)\, \in\, PT$, all the other $X_{MN}$ are linear combination of $X_{IJ}\vert (I,J)\, \in\, PT$ where the co-efficients in the linear combination are from $\{0, \pm\, 1\, \}$ as determined by the $g$ vectors. We assume that this construction goes through verbatim for $\hat{D}_{n}$ polytope.} 
Consider the $n(n+1)$ dimensional embedding space ${\mathbb{R}}^{n(n+1)}$ with the co-ordinates $\{\kappa_{IJ}\}$ labelled by dissections $(I,J)\, \in\, {\cal PC}_{n}$.\footnote{we remind the reader, that strictly speaking the embedding space is the real projective space ${\mathbb{ RP}}^{n(n+1)}$ with $\kappa_{ij}$ being the homogeneous co-ordinates.}

Consider the ``1-loop Mandelstam invariants"   in terms of planar kinematic variables. 
\begin{flalign}
\begin{array}{lll}
2p_{i}\, \cdot\, p_{j}\, =\, \, \kappa_{i,j+1}\, +\, \kappa_{i+1,j}\, -\, \kappa_{ij} - \kappa_{i+1,j+1}\\
2 p_{i} \cdot p_{\overline{j}}\, =\, \kappa_{i,\overline{j+1}}\, +\, \kappa_{i+1,\overline{j}}\, -\, \kappa_{i\overline{j}}\, -\, \kappa_{i+1,\overline{j+1}}\\
2 l_{i}\, \cdot l_{\overline{j}}\, =\, Y_{i}\, +\, \tilde{Y}_{j}\, -\, X_{i\overline{j}}
\end{array}
\end{flalign}
In particular we will denote $ 2 p_{i}\cdot p_{j},\, 2 p_{i} \cdot p_{\overline{j}},\, 2 l_{i} \cdot l_{\overline{i}}$ as $s^{(1)}_{ij},\, s^{(1)}_{i\overline{j}},\, s^{(1)}_{i0}\, (\textrm{or equivalently} s^{(1)}_{\overline{i}0} )$ respectively. 

Let $PT$ be a pseudo-triangulation of the 2n-gon. The AHST realisation of the $D_{n}$ polytope \cite{Arkani-Hamed:2019vag, Jagadale:2020qfa} motivates us to seek a realisation of the $\hat{D}_{n}$ via following set of constraints. 
\begin{flalign}\label{ahsthatdn}
s^{(1)}_{IJ}\, =\, -\, d_{IJ}\, \forall\, (I,J)\, \notin\, \textrm{PT}^{c}
\end{flalign}
In the $n\, =\, 2,\, 3$ case it can be readily checked that these constraints indeed generate a convex realisation of $\hat{D}_{2},\, \hat{D}_{3}$ inside the positive hyper-quadrant of the embedding space. We will assume that this statement continues to be true for all $n$. 

We note that if the embedding space is trivially identified with the $({\cal K}_{n}^{1-\textrm{L}})$ through,
\begin{flalign}
\kappa_{IJ}\, =\, X_{IJ}
\end{flalign}
then we convex realisations discovered in \cite{Arkani-Hamed:2019vag} are inside the positive hyper-quadrant of   ${\cal K}_{n}^{1-\textrm{L}}$.   
\subsection{Deformed Realisations of the \texorpdfstring{$D_{n}$}{Dn} Cluster Polytope}\label{drdncp}
As in the case of ABHY associahedron, we now study non-trivial isometries between ${\mathbb{R}}^{n(n+1)}$ to  ${\cal K}_{n}^{1-L}$. In particular we discover two  deformed realisations $\hat{D}_{n}^{\{\alpha\}}$ of $\hat{D}_{n}$ in the kinematic space which extend the results of sections (\ref{onepdef}, \ref{fldr}) to the one loop case respectively. 

Consider the following Lagrangian involving two (bi-adjoint) scalar fields. 
\begin{flalign}\label{phi12phi22}
L^{\prime}(\phi_{1}, \phi_{2})\, = \sum_{I=1}^{2} \partial_{\mu}\phi_{I} \cdot \partial^{\mu}\phi_{I}  -  \frac{1}{2}\, m^{2} \phi_{2}^{2}\, -\, (\, \lambda_{1}\phi_{1}^{3} + \lambda_{2}\, \phi_{1}^{2}\phi_{2}\, +\, \lambda_{3}\phi_{1}\phi_{2}^{2} + \lambda_{4}\, \phi_{2}^{3}\, )
\end{flalign}
In sections \ref{1loop1pd}, \ref{1loopmpd}, we prove the following. 
\begin{itemize}
\item There exists a deformed realisation $\hat{D}_{n}^{\{\alpha\}}$ which is a positive geometry for the (integrand of the) 1-loop S-matrix involving all massless external states and all massive propagators. 
\item There exists a  family deformed realisation that generalise the construction of colorful associahedron in section \ref{fldr} such that a weighted sum over all of these realisations define the 1-loop $\phi^{3}$ integrand with at most one massive propagator. 
\end{itemize}
We have chosen the masses to be $m_{1} = 0, m_{2} = m$ for simplicity. One can extend our results to the case where the mass of the external states is $M\, \neq\, m$.  
\subsection{One parameter deformation}\label{1loop1pd}
Consider the following map from ${\cal K}_{n}^{1-\textrm{L}}$ to ${\bf R}^{n(n+1)}$.  It will be convenient for us to label the barred-vertices of the $2n$-gon as $\{\, n+1,\, \dots,\, 2n\, \}$.  
\begin{flalign}
\begin{array}{lll}
\kappa_{ij}\, =\, \alpha_{ij}\, (\, X_{ij}\, -\, m_{ij}^{2})
\end{array}
\end{flalign}
where $m_{ij}\, =\, m\, \forall\, (i,j)\, \vert j - i\vert\, \geq\, 2$. We choose the following ansatz for the deformation parameters. 
\begin{flalign}\label{cafalm}
\alpha_{IJ}\, =\, \alpha_{KL}\ \textrm{if}\ \vert\, I - J\, \vert\, =\, \vert\, K\, -\, L\, \vert 
\end{flalign}
where $I\, \in\, \{\, 1,\, \dots,\, 2n\, , 0\, \}$. 
\begin{lemma}\label{allmass1l}
There exists a choice of the deformation parameters for which the pull back of the canonical form $\Omega_{n}^{1-\textrm{L}}$ on the deformed realisation $\hat{D}_{n}^{\{\alpha\}}$ is the 1-loop integrand for the Lagrangian in eqn.(\ref{phi12phi22}). 
\end{lemma}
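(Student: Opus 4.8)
The plan is to mirror, step by step, the tree-level argument of Section \ref{onepdef}, now applied to the $\hat D_n$ polytope realised in ${\cal K}_n^{1-\textrm{L}}$ via the map $\kappa_{IJ}=\alpha_{IJ}(X_{IJ}-m_{IJ}^2)$ with the cyclic ansatz \eqref{cafalm}. First I would fix a reference pseudo-triangulation $PT_0$ and write down the deformed realisation $\hat D_n^{\{\alpha\}}$ through the constraints $s^{(1)}_{IJ}=-d_{IJ}$ for $(I,J)\notin PT_0^{c}$, exactly as in \eqref{ahsthatdn} but with the deformed Mandelstams built out of $\kappa_{IJ}$. Using the $g$-vector/linear-expansion property assumed below \eqref{ahsthatdn}, every other $\kappa_{MN}$ is a $\{0,\pm1\}$-combination of the reference $\kappa_{IJ}$'s up to the constants $d_{MN}$, so the facet $\kappa_{IJ}=0$ sits at $X_{IJ}=m_{IJ}^2$ (i.e.\ at the massive pole for $|I-J|\ge 2$ and at $X=0$ for the loop-type chords $Y_i,\tilde Y_i$, consistent with $\phi_1$ being massless and $\phi_2$ massive). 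The next step is to compute the canonical form $\Omega_n^{1-\textrm{L}}\big|_{\hat D_n^{\{\alpha\}}}$ as the push-forward of the one-loop Parke–Taylor form under the deformed scattering equations; by the identity \eqref{cfodass1}–\eqref{cfodass3} this equals $\sum_{PT}\big(\prod_{e\in PT_0}\alpha_e/\prod_{e\in PT}\alpha_e\big)\prod_{e\in PT}\tilde X_e^{-1}$, a sum over all pseudo-triangulations weighted by a ratio of $\alpha$'s.

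The heart of the matter is then the combinatorial bookkeeping that identifies this weighted sum with the one-loop integrand of the Lagrangian \eqref{phi12phi22}. I would argue via the duality between pseudo-triangulations of the punctured $2n$-gon and one-loop cubic Feynman graphs: each ``triangle-like'' cell (including the two loop-cells adjacent to the annulus) is dual to a cubic vertex, and the type of the chord bounding a cell (massless $\phi_1$ vs.\ massive $\phi_2$) dictates which coupling $\lambda_a$ sits at that vertex. The crucial consistency statement — the one-loop analogue of \eqref{mascon} — is that for any two pseudo-triangulations $PT_1,PT_2$ one has $\prod_{e\in PT_1}\alpha_e\prod_{c\in PT_1}\lambda_c=\prod_{e\in PT_2}\alpha_e\prod_{c\in PT_2}\lambda_c$; this follows because any two pseudo-triangulations are connected by a sequence of flips, and it suffices to check the identity across a single flip, which is a finite local computation involving at most the four $\alpha$'s and the couplings at the two cells changed by the flip. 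Matching the flip relations fixes the ratios of the $\alpha_{IJ}$ (there is one independent deformation parameter $\alpha$ since the distinct values of $|I-J|$ relevant to massive chords collapse under the ansatz) in terms of $\lambda_1,\lambda_2,\lambda_3,\lambda_4$, e.g.\ $\alpha=\lambda_1\lambda_3/\lambda_2^2$ on the tree-type chords and an analogous ratio involving $\lambda_4$ on the loop-type chords; one then checks these assignments are mutually compatible, which is where the tadpole chords $X_{i\bar i}$ (curved chords \eqref{dnchord1}) and the constraints \eqref{cons1l} must be handled with care.

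Finally I would verify that with this choice of $\{\alpha\}$ the deformed realisation is still a genuine convex polytope inside the positive hyper-quadrant of ${\cal K}_n^{1-\textrm{L}}$ — this is immediate from the tree-level reasoning, since $\cal G$ acts by a positive-diagonal rescaling composed with a translation and hence maps the convex $\hat D_n$ to a convex polytope — and that the residues on the two types of vertices (all-massless channels vs.\ channels containing a massive propagator) carry precisely the coupling weights $\lambda_1^{\#}\cdots$ demanded by perturbation theory, so that no spurious overall normalisation mismatch remains beyond the stated $\prod \lambda_{1\cdots}$ scaling. I expect the main obstacle to be the flip-invariance identity for the loop-adjacent and tadpole cells: unlike the pure-triangle case of \eqref{mascon}, flips near the annulus can change the cell structure in a way that is not simply ``one chord mutated to another'', so one must enumerate the local moves of the $\hat D_n$ mutation graph (the moves already catalogued in \cite{Ceballos2015ClusterAO,afpst}) and confirm that the $\alpha$–$\lambda$ bookkeeping closes on each; everything else is a routine transcription of the tree-level proof.
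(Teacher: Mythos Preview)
Your overall architecture is right: pull back the form, get a sum over pseudo-triangulations weighted by $\prod_{e\in PT_0}\alpha_e/\prod_{e\in PT}\alpha_e$, and impose the flip-invariant constraint $\prod_{e\in PT}\alpha_e\prod_{c\in PT}\lambda_c=\textrm{const}$ across single mutations. That is exactly the strategy of the paper. But you misidentify the physical target, and this propagates into the wrong deformation parameters.

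The lemma concerns the one-loop integrand in which \emph{every} propagator --- including the loop propagators $Y_i,\tilde Y_i$ and the tadpole propagators $X_{i\bar i}$ --- is the massive $\phi_2$. Your proposal instead places the loop-type chords at $X=0$ (massless), which is a different amplitude. With all propagators massive and all external legs massless, the only vertices that occur are $\phi_1^2\phi_2$, $\phi_1\phi_2^2$, $\phi_2^3$, so $\lambda_1$ never enters; hence your guess $\alpha=\lambda_1\lambda_3/\lambda_2^2$ cannot be right. The paper's assignment is $\alpha_{IJ}=\lambda_3^2/(\lambda_2\lambda_4)$ for straight chords of length $2$ and $\alpha_{IJ}=1$ for every other chord (straight chords of length $\ge 3$, loop chords $(i,0)$, and tadpole chords $(i,\bar i)$). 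One then checks the flip identity on three representative moves: a mutation among straight chords (handled by the tree-level argument), the move $(i,i+2)\to(i+1,0)$ which converts $\lambda_2^2\lambda_4^2$ into $\lambda_2\lambda_4\lambda_3^2$ and fixes $\alpha_{i,0}=1$, and the move $(i,0)\to(j,\bar j)$ which leaves the product of couplings unchanged and fixes $\alpha_{j,\bar j}=1$. Your expectation of a separate nontrivial ratio on the loop-type chords is precisely where the bookkeeping would fail to close.
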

\begin{proof}
Any  deformed realisation $\hat{D}_{n}^{\{\alpha\}}$ is obtained from the convex realisation in the embedding space via a specific ${\cal G}$ action. The AHST realisation in the embedding space is defined by using a reference pseudo-triangulation that we denote as $\textrm{PT}_{0}$. For convenience we choose it as
\begin{flalign}
\textrm{PT}_{0}\, =\, \{\, (1,0),\, (3,0)\, (1,3),\, (3,\overline{1}),\, \dots\, \}
\end{flalign}
where $\dots$ denote \emph{any} set of straight chords that complete the pseudo-triangulation of the $2n$-gon starting from $\{\, (1,0), (3,0), (1,3), (3,\overline{1})\, \}$.  

It can be readily checked that if we denote the pull-back of $\Omega_{n}^{1-\textrm{L}}$ on $\hat{D}_{n}^{\{\alpha\}}$ by $\omega_{n}^{\{\alpha\}}$, then 
\begin{flalign}
\omega_{n}^{\{\alpha\}}\, =\, \sum_{PT}\, [\,  \frac{\prod_{(K,L)\, \in\, PT_{0}}\, \alpha_{KL}}{\prod_{(M,N)\, \in\, PT}\, \alpha_{MN}}\, \prod_{(M,N)\, \in\, PT}\, \frac{1}{\tilde{X}_{MN}}\, ]\, \wedge_{(P,Q)\, \in\, PT_{0}}\, d\tilde{X}_{PQ}
\end{flalign}
Hence, the pull-back equals the 1-loop integrand (with all massive propagators) if and only if for any pseudo-triangulation $\textrm{PT}_{1}$ which is dual to a 1-loop Feynman graph $\textrm{PT}_{1}^{\star}$, deformation parameters satisfy the following set of constraints. 
\begin{flalign}\label{consmasallm}
\prod_{(I,J)\, \in\, \textrm{PT}_{0}}\, \alpha_{IJ}\, \prod_{t\, \in\, \textrm{PT}_{0}}\, \lambda_{t}\, =\, 
\prod_{(K,L)\, \in\, \textrm{PT}_{1}}\, \alpha_{KL}\, \prod_{t_{1}\, \in\, \textrm{PT}_{1}}\, \lambda_{t_{1}}
\end{flalign}
If for all the straight chords $(i,j)$ we choose the deformation parameters as  
\begin{flalign}\label{ij-1}
\alpha_{ij}&=\, \alpha\, =\, \frac{\lambda_{3}^{2}}{\lambda_{2}\lambda_{4}}\ \forall\ (i,j)\ \in\, \{1,\, \dots,\, 2n\}\, ,\ \vert j - i\vert\, =\, 2\\
\alpha_{ij}&=\, 1 \forall\, \vert j - i\vert\ \geq\, 3
\end{flalign}
then all the constraints in eqn.(\ref{consmasallm}) are satisfied if $\textrm{PT}_{1}$ differs from $\textrm{PT}_{0}$ by mutations which only mutate the straight chords. 

If $\textrm{PT}_{1}$ differs from $\textrm{PT}_{0}$ by a single mutation $(i,i+2)\, \rightarrow\, (i+1,0)$ (assuming that this mutation is permissible from $\textrm{PT}_{0}$) then 
\begin{flalign}\label{i0-2}
\alpha_{i,0}\, =\, 1
\end{flalign}
ensures that 
\begin{flalign}
\alpha_{i,i+2}\, \lambda_{2}^{2}\, \lambda_{4}^{2}\, =\, \alpha_{i+1,0}\, \lambda_{2}\lambda_{4}\, \lambda_{3}^{2}
\end{flalign}
is satisfied. Finally we note that a mutation from $(1,0)$ or $(3,0)$ to $(3,\overline{3})$ or $(1, \overline{1})$ does not change the overall coupling in a Feynman graph with all massive propagators and hence we can choose
\begin{flalign}\label{iibar-3}
\alpha_{i,\overline{i}}\, =\, 1
\end{flalign}
Eqns (\ref{ij-1}, \ref{i0-2} and \ref{iibar-3}) show that there exists one-parameter deformation (parameter being $\alpha\, =\, \frac{\lambda_{3}^{2}}{\lambda_{2}\lambda_{4}}$ ) of $\hat{D}_{n}$  which (1) satisfies the chosen ansatz in eqn.(\ref{cafalm}) and (2) produce $\omega_{n}^{\{\alpha\}}$ which is the desired 1-loop integrand for an S-matrix involving all  massless external states and all massive propagators. 
\end{proof}
Lemma \ref{allmass1l} in conjunction with the result in section \ref{onepdef} shows that there exists a deformed realisation of both, ABHY associahedron as well as AHST $\hat{D}_{n}$ which are positive geometries of the tree-level and one-loop S-matrix in which all the external particles are massless and all the propagators are massive. 

\subsection{Multi-parameter deformation}\label{1loopmpd}
In the previous section we discovered a deformed realisation of $\hat{D}_{n}$ polytope whose boundaries correspond to massive poles of the 1-loop S matrix integrand when all the external states are massless. The complete 1-loop integrand for eqn.(\ref{phi12phi22}) with massless external states can be regrouped into terms classified according to number of massive propagators in each Feynman graph. As we have seen, the sum over all the diagrams in which all propagators are massless is the (pull-back) of the canonical form on AHST realisation of the $\hat{D}_{n}$ polytope and the sum over all the diagrams in which all the propagators are massive is the deformed realisation obtained in the previous section. 

We now show that the sum over all the diagrams in which at most one propagator is massive is a weighted sum over a specific family of deformed realisations. This result extend the tree-level results in section (\ref{fldr}) to the one-loop case.

As we will see, the relevant deformed realisations of $\hat{D}_{n}$ polytope mirror the construction of the colorful associahedron $A_{n-3}^{T, (ij)}$ in section \ref{fldr}

We start by introducing some objects which are necessary to define the deformation map which will extend the results of section \ref{fldr} to one loop. 

Given any chord, $(I,J)\, \in\, {\cal PC}_{n}$, we define a set of chords ${\cal S}_{IJ}$ which 
\begin{flalign}\nonumber
\begin{array}{lll}
\textrm{(1) Intersect $(I,J)$ in the interior of the polygon and}\\
\textrm{(2) All of which have ``length" equal to $\vert I - J\vert$.}
\end{array}
\end{flalign}
We refer to the chord $(I,J)$ as the seed of ${\cal S}_{IJ}$. Seeds come in Four types. $\{ (i,j), (i, \overline{j}), (i,0), (i, \overline{i})\, \}$. As we will see, the seed $(\overline{i}, 0)$ is equivalent to $(i, 0)$ in the sense that they both generate the same ${\cal S}$ set defined above. 

If the seed is a straight chord  $(i,j)$ (or $(i, \overline{j})$) with $1\, \leq\, i\, \leq\, n$ and $n-1\, \geq\, \vert j - i\vert\, \geq\, 2$, then we have
 \begin{flalign}
 {\cal S}_{ij}\, =\, \{\, (i, j),\, \dots,\, (i + \vert j - i\vert - 1, j + \vert j - i\vert - 1 )\, \}
 \end{flalign}
The restriction over the range of $i$ is simply to avoid redundancy. The co-dimension one facets of $\hat{D}_{n}$ that are in 1-1 correspondence with 1-partial pseudo-triangulations  $(\overline{i}, \overline{j})$ correspond to the same physical pole in the kinematic space as those facets which are in bijection with $(i,\, j)$. 

For the remaining types of seeds, ${\cal S}$ sets are, 
\begin{flalign}
\begin{array}{lll}
{\cal S}_{i\overline{i}}\, =\, \{\, X_{1\overline{1}},\, \dots,\, X_{n\overline{n}}\, \}\, \forall\, i\\
{\cal S}_{i,0}\, =\, \{\, Y_{i},\, \tilde{Y}_{i}, \}\, \forall\, 1\leq\, i\, \leq\, n
\end{array}
\end{flalign}
We now define choose the following (multi-parameter) deformation. 

Let $(M,N)$ be a seed not of the type $(i,0)$. 
\begin{flalign}\label{alphaij1}
\begin{array}{lll}
\kappa^{(MN)}_{IJ}&=\, \alpha_{MN}\ (\, X_{IJ}\, -\, m^{2})\ \forall\ (I,J)\, \in\, {\cal S}_{MN}\\
&=\ X_{IJ}\ \textrm{otherwise}
\end{array}
\end{flalign}
If $(M,N)\, =\, (i,0)$ for some $i$, then
\begin{flalign}\label{alphaij2}
\begin{array}{lll}
\kappa^{(i,0)}_{I,J}&=\, \alpha\, (\, X_{I,0}\, -\, m^{2})\ \textrm{if}\ (\, I\, \in\, \{i,\, \overline{i}\, \},\, J\, =\, 0\, )\\
&=\ \beta\, X_{i\overline{i}}\  \textrm{if}\ (I,J)\, =\, (i,\overline{i})\\
&=\, X_{IJ}\ \textrm{otherwise}
\end{array}
\end{flalign}

Given a seed and any (positive) choice of the deformation parameters,  we have a deformed realisation of $\hat{D}_{n}$ polytope.\footnote{In fact, just as in the case of tree-level S-matrix, each of these deformed realisations can be thought of as a convex realisation of a combinatorial polytope that we call colorful $\hat{D}_{n}$ polytope : A colorful $\hat{D}_{n}^{{\cal S}_{IJ}}$ polytope is the $\hat{D}_{n}$ polytope in which all the co-dimension one facets which correspond to $1$-partial pseudo triangulation in which a chord in ${\cal S}_{IJ}$ is absent, are red while the remaining facets are black.  The combinatorial colorful polytope will not be used in what follows and hence we do not analyse it any further in this paper.}
We illustrate these ideas with a few simple examples.
\begin{itemize}
\item Consider $n\, =\, 2$. In this case $\hat{D}_{2}$ is a two dimensional polytope whose boundaries span a hexagon. We can list all the ${\cal S}_{IJ}$ as follows.
\begin{flalign}
\begin{array}{lll}
{\cal S}_{1,0}\, =\, \{ Y_{1},\, \tilde{Y}_{1}\, \}\\
{\cal S}_{2,0}\, =\, \{\, Y_{2}\, \tilde{Y}_{2}\, \}\\
{\cal S}_{1,\overline{1}}\, =\, \{\, X_{1\overline{1}},\, X_{2\overline{2}}\, \}
\end{array}
\end{flalign}
Hence  we have three deformed realisations of the $\hat{D}_{2}$ polytope. 
\item It can be immediately verified that there are 10 deformed realisations of the $\hat{D}_{3}$ polytope. For a generic $n$, the number of such realisations is $n + 1 + (2n)\, [\frac{n}{2}]$ for $n\, \geq\, 3$.  
\end{itemize}
Let $\Omega_{n}^{1-\textrm{L}}$ be the planar scattering form on ${\mathbb{R}}^{n(n+1)}$ which is defined by the combinatorial $\hat{D}_{n}$ polytope. 
We now claim that the pull back of this form on the deformed realisations $\hat{D}_{n}^{(M,N)}$ of $\hat{D}_{n}$  generate 1-loop integrand of the S-matrix where all the channels have at most one massive propagator. 
\begin{lemma}
Let $\Omega_{n}^{1-\textrm{L}}$ be the d-$\ln$ form on the embedding space which is defined 
by $\hat{D}_{n}$ combinatorial polytope. Let $\omega_{n}^{1-\textrm{L}}\, :=\, \Omega_{n}^{1-\textrm{L}}\vert_{\hat{D}_{n}^{(M,N)}}$ denote it's pull-back on the deformed realisation $\hat{D}_{n}^{1-\textrm{L}}$. 

Then there exists a set of deformation parameters  as parametrized in eqn.(\ref{defcou1}, \ref{defcou2}, \ref{defcou3}), such that a weighted sum over ${\cal S}_{MN}$ is the n-point (1 loop) integrand at $O(\lambda_{2}^{2})$. 
\begin{flalign}
\omega^{1-\textrm{L}}_{n}\ :=\, \sum_{(M,N)\, \in\, {\cal PC}_{n}}\, a_{MN}\, \Omega_{n}^{1-\textrm{L}}\vert_{\hat{D}_{n}^{(M,N)}}
\end{flalign}
where the weights are defined as follows. 
\begin{flalign}
\begin{array}{lll}
a_{mk}&=\, \frac{1}{\vert m - k\vert}\ \textrm{if}\ 1\leq\, m\, <\, k-1\, <\, n - 1\\
&=\ 1\ \textrm{otherwise}
\end{array}
\end{flalign}
Then $\omega^{1-\textrm{L}}_{n}$ is the S-matrix for the interaction $V(\phi_{1}, \phi_{2})$ in eqn.(\ref{phi12phi2}) upto order $\lambda_{2}^{2}$. That is, it contains sum over all the 1 loop integrands in this theory in which at-most one propagator is massive. 
\end{lemma}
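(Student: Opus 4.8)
# Proof Proposal

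The plan is to mirror the tree-level argument of section \ref{fldr}, lifting it essentially verbatim to the $\hat{D}_n$ cluster polytope, where the role of triangulations of the $n$-gon is played by pseudo-triangulations of the punctured $2n$-gon and the role of Feynman graphs by $1$-loop planar graphs with cubic vertices. First I would fix a reference pseudo-triangulation $\textrm{PT}_0$ and, for each seed $(M,N)$, write out the pull-back $\Omega_n^{1\textrm{-L}}|_{\hat{D}_n^{(M,N)}}$ exactly as in Lemma \ref{allmass1l}: it is a sum over pseudo-triangulations $\textrm{PT}$ of $\prod_{(P,Q)\in\textrm{PT}}\tilde{X}_{PQ}^{-1}$ weighted by the ratio $\prod_{\textrm{PT}_0}\alpha/\prod_{\textrm{PT}}\alpha$. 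Because the deformation in eqns.(\ref{alphaij1},\ref{alphaij2}) only rescales the chords in ${\cal S}_{MN}$ by a single common factor (with the $(i,0)$-seed case needing the two parameters $\alpha,\beta$ to treat the tadpole chord $X_{i\bar i}$ and the loop chords $Y_i,\tilde{Y}_i$ consistently), the weight attached to each term $\textrm{PT}$ is just a power of that common factor governed by how many chords of ${\cal S}_{MN}$ appear in $\textrm{PT}$ — i.e. the number of massive propagators in the dual Feynman graph that carry that particular pole.

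The next step is to identify each $\Omega_n^{1\textrm{-L}}|_{\hat{D}_n^{(M,N)}}$ with the sum of those $1$-loop Feynman integrands in the theory \ref{phi12phi22} in which the only massive propagator allowed is the one dual to a chord in ${\cal S}_{MN}$. This is precisely the statement that the ``colorful $\hat{D}_n$ polytope'' associated to ${\cal S}_{MN}$ is a positive geometry for that subset of diagrams; it follows from the analogue of Lemma 5.1 (the residue-at-a-vertex computation), using the cubic-vertex coupling relation $\prod_{(I,J)\in\textrm{PT}}\alpha_{IJ}\prod_{t\in\textrm{PT}}\lambda_t = \prod_{(K,L)\in\textrm{PT}_1}\alpha_{KL}\prod_{t_1\in\textrm{PT}_1}\lambda_{t_1}$ from eqn.(\ref{consmasallm}), now evaluated for pseudo-triangulations related by mutations rather than triangulations. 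One must check this relation separately for the three mutation types appearing in the $\hat{D}_n$ exchange graph: mutations among straight chords, the mutation $(i,i+2)\to(i+1,0)$, and the mutation $(i,0)\to(i,\bar i)$ (equivalently $(\bar i,0)\to(i,\bar i)$) — exactly the case split already carried out in the proof of Lemma \ref{allmass1l}. The upshot is that for the correct choice of deformation parameters — call the constraints eqns.(\ref{defcou1},\ref{defcou2},\ref{defcou3}) — the canonical form on $\hat{D}_n^{(M,N)}$ equals the generating form for ``$\le 1$ massive propagator, that pole being from ${\cal S}_{MN}$'' diagrams.

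The final step is the combinatorial bookkeeping of the weighted sum. Each $1$-loop diagram with exactly one massive propagator $\tilde{X}_{IJ}$ appears once in $\Omega_n^{1\textrm{-L}}|_{\hat{D}_n^{(M,N)}}$ for every seed $(M,N)$ such that $(I,J)\in{\cal S}_{MN}$; since ${\cal S}_{MN}$ consists of the $|M-N|$ parallel chords of a fixed length, the number of such seeds is exactly $|M-N|$ for a straight chord and $1$ for the $Y,\tilde{Y},X_{i\bar i}$-type chords. The weights $a_{mk}=1/|m-k|$ (straight chords of length $\ge2$) and $a_{MN}=1$ (otherwise) are chosen precisely to undo this overcounting, so that summing $a_{MN}\,\Omega_n^{1\textrm{-L}}|_{\hat{D}_n^{(M,N)}}$ over all seeds $(M,N)\in{\cal PC}_n$ reproduces each single-massive-propagator diagram with coefficient $1$; meanwhile the all-massless diagrams contribute with total weight $\sum_{(M,N)}a_{MN}\cdot(\text{something})$ that, unlike in \cite{Jagadale:2021iab}, comes out correctly because the \emph{deformation} (rather than a rigid translation) already supplies the differing residues at the two vertex types. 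The main obstacle I expect is not the tree-level-style algebra but two structural points specific to one loop: (i) verifying that the $(i,0)$-seed deformation with its two parameters $\alpha,\beta$ genuinely yields a convex realisation and that $\beta$ can be fixed consistently so the tadpole chord $X_{i\bar i}$ carries the right coupling weight in every mutation it participates in; and (ii) checking that the identification $(\bar i,0)\sim(i,0)$ and $(i,0)\cup(\bar i,0)\sim(i,\bar i)$ in ${\cal PC}_n$ does not cause a diagram to be double-counted or dropped in the seed sum — i.e. that the map from ``$1$-loop cubic graphs with $\le1$ massive propagator'' to ``(seed, pseudo-triangulation) pairs'' has exactly the multiplicity the weights $a_{MN}$ presume.
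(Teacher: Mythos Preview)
Your proposal is correct and follows essentially the same approach as the paper: lift the tree-level colorful-associahedron argument of section~\ref{fldr} to the $\hat{D}_n$ setting, verify the mutation constraints case-by-case (straight chord mutations reduce to the tree-level lemma; the loop-specific mutations $(k,0)\leftrightarrow(i,\bar i)$ and $(i,0)\to\text{straight}$ fix $\alpha,\beta$ via $\alpha\lambda_2^2=\beta\lambda_3\lambda_1$ and $\alpha\lambda_2^2=\lambda_1^2$), and then use the weights $a_{MN}$ to undo the overcounting across seeds. The paper's proof is terser than yours---it delegates the straight-chord case entirely to lemma~4.1 of \cite{Jagadale:2020qfa} and does not spell out the multiplicity bookkeeping you describe in your third step---but the logical skeleton is the same, and the two obstacles you flag (consistency of $\beta$ across all tadpole mutations, and avoiding double-counting under the identifications in ${\cal PC}_n$) are precisely the points the paper's case split is implicitly checking.
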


The the deformation parameters for which the (sum over) canonical forms is the 1-loop integrand are related to the couplings as follows. 

If the seed is a linear chord $(m,k)$ with $1\leq\, m\, \leq\, n$ and $m+2\, \leq\, k\, \leq\, m + (n-1)$ then let, 
\begin{flalign}\label{defcou1}
\begin{array}{lll}
\alpha^{m,n}_{ij}\, =\, \frac{\lambda_{1}^{2}}{\lambda_{2}^{2}}\, \forall\, (i,j)\, \in\, {\cal S}_{mn}
\end{array}
\end{flalign}
If the deformation seed is $(1,\overline{1})$ then let, 
\begin{flalign}\label{defcou2}
\alpha^{1,\overline{1}}_{i,\overline{i}}\, =\, \frac{\lambda_{1}}{\lambda_{3}}
\end{flalign}
And finallly if the deformation seed is $(i,0)$ then, 
\begin{flalign}\label{defcou3}
\begin{array}{lll}
\alpha\, =\, \frac{\lambda_{1}^{2}}{\lambda_{2}^{2}}\\
\beta\, =\, \frac{\lambda_{2}^{2}}{\lambda_{1}\lambda_{3}}
\end{array}
\end{flalign}

\begin{proof}
If the  seed is a linear chord $(i,j)$ then any propagator involving loop momenta or $X_{i\overline{i}}$ is massless. In this case, a moment of meditation will convince the reader that, 
\begin{flalign}\label{lcscd}
\sum_{i=1}^{n}\, \sum_{j=i+2}^{i+n}\, \frac{1}{j-i}\, \Omega_{n}^{\textrm{1-L}}\vert_{\hat{D}_{n}^{(i,j)}}\, +\, \Omega_{n}^{\textrm{1-L}}\vert_{\hat{D}_{n}^{(1,n)}}
\end{flalign}
is sum over all the Feynman diagrams in which only at most one propagator (dual to a chord in ${\cal S}_{ij}$) is massive. In eqn.(\ref{lcscd}), the range of $j$ is $\{\, 3\, \dots,\, n\, \overline{1}, \dots,\, \overline{n}\, \}$. That is $n+i\, =\, \overline{i}\, \forall\, 1\, \leq\, i\, \leq\, n$ in the sum.
 The coefficient of each term is determined by following the proof of lemma (4.1) in \cite{Jagadale:2020qfa}. 

If the seed is $(i, 0)$ then the deformed realisation has precisely two co-dimension one facets that correspond to the massive pole. Namely $Y_{i}\, =\, \tilde{Y}_{i} = m^{2}$. In this case, the pull back of the canonical form on $\hat{D}_{n}^{(i,0)}$ can be computed as follows. 

Let $PT_{1},\, PT_{2}$ be two pseudo-triangulations which are dual to two Feynman graphs $PT_{1}^{\star}, PT_{2}^{\star}$ respectively. Let $(i,0)$ be a chord in $PT_{1}$ such that $(i, \overline{i})\, \notin\, PT_{1}$. 

Let $PT_{2}$ be one of the two types of pseudo-triangulations obtained by a single mutation from $PT_{1}$. 
\begin{flalign}
(i,0), (i, \overline{i})\, \in\, PT_{2}\, \textrm{or}\, (i,0)\, \notin\, PT_{2}
\end{flalign}
 In both of these cases, the residue of the (pullback of) canonical form on the deformed realisation $\hat{D}_{n}^{(i,0)}$ on vertices $PT_{1}, PT_{2}$ will correspond to two terms in the S-matrix integrand if and only if 
\begin{flalign}
\prod_{e\, \in\, PT_{1}}\, \alpha_{e}\, \prod_{t\, \in\, PT_{1}}\, \lambda_{t}\, =\, \prod_{e\, \in\, PT_{2}}\, \alpha_{e}\, \prod_{t\, \in\, PT_{2}}\, \lambda_{t}
\end{flalign}

If {\bf (1)} $PT_{2}$ is a pseudo-triangulation that is obtained by a single mutation on $PT_{1}$ which takes $(k,0)$ for some $k$ to $(i,\overline{i})$ then this constraint reduces to, 
\begin{flalign}
\alpha\, \lambda_{2}^{2}\, =\, \beta\, \lambda_{3}\, \lambda_{1}
\end{flalign}

If {\bf (2)} $PT_{2}$ does not contain $(i,0)$ then the constraint reduces to,
\begin{flalign}
\alpha\, \lambda_{2}^{2}\, =\, \lambda_{1}^{2}
\end{flalign}

Hence in this case we see that by choosing the deformation parameters as in equations \eqref{defcou1}, \eqref{defcou2}, \eqref{defcou3}, the pull-back $\Omega_{n}^{1-\textrm{L}}\vert_{{\hat D}_{n}^{(i,0)}}$ is a sum over all the Feynman graphs of  $\phi^{3}$ planar 1-loop integrand with at most one massive propagator.

\end{proof}
\section{Discussion}\label{diss}

The Positive geometry program for S-matrix is built around a ``universal" premise of discovering a class of closed convex polytopes in the kinematic space of Mandelstam invariants. These geometries are a specific realisations of a class of combinatorial polytopes which are intimately tied to dissections of an abstract $n$-gon.  Perhaps one of the most striking aspects of the program is how the convex realisations of these combinatorial polytopes, e.g. the associahedron  which are relevant for the S-matrix program arise due to a completely independent link between the combinatorial objects and the theory of cluster (or more generally gentle) algebras. 

Apart from Poincare invariance, no postulate (from the postulates of the analytic S-matrix program of 60's)  is assumed in finding the positive geometries whose associated canonical forms define S-matrix of local and unitary QFTs. 

Although the ABHY realisation of the associahedron was directly in the kinematic space, latter works such as \cite{Bazier-Matte:2018rat,Padrol2019AssociahedraFF} proved how the same realisations can be thought of as polytopal realisations of $g$-vector fans associated to finitely generated cluster algebras (or more generally gentle algebras) and this realisation is in an Euclidean space (called embedding space in this paper to distinguish it from the kinematic space) with co-ordinates $\kappa_{ij}$ labelled by the dissections of the polygon. If we directly identify the co-ordinates $\kappa_{ij}$ of the embedding space with $X_{ij}$, then, as was shown in \cite{Arkani-Hamed:2017mur} and a number of subsequent works \cite{Banerjee:2018tun, Raman:2019utu,Aneesh:2019cvt} that we get positive geometries for S-matrix associated to massless scalar theories. 

However, this trivial identification of the co-ordinate system of the embedding space $\{\, \kappa_{ij}\, \}$ and the planar kinematic basis $\{\, X_{ij}\, \}$ is an additional input which we relax in this paper. And this results in several interesting consequences for both, the positive geometry program and  the CHY (Cachazo, He and Yuan) formula for scatering amplitudes. 

More in detail, we propose to classify the (non-trivial) linear diffeomorphisms between the embedding space and the kinematic space such that the ABHY realisation of an associahedron that is deformed in the kinematic space, still defines  S-matrix of \emph{some} unitary local QFTs. Although the complete classification is far beyond the scope of this paper, we gave several classes of examples of the deformed realisations of the associahedron which are positive geometry of tree-level S matrix for some QFT.  

We then showed that in some simple  cases, even the converse is true. Namely, for Lagrangian involving more than one fields and more than one coupling, there does exist deformed realisation of the associahedron which is the sought after positive geometry. 

Hence the universality in positive geometry program is perhaps even more far reaching then previously thought : The ABHY realisation and the planar scattering form appear to be the fundamental objects which when ``viewed" in different co-ordinates in the kinematic space generate S-matrix of different theories with cubic couplings. 

If we fix the external particles to be massless (or of the same mass $M$), then we now have the following result for tree-level S-matrix as well as the S-matrix integrand at one loop : 

Let $M_{n}^{(0)/(1)}$ denote the tree level (resp. 1-loop integrand of) S matrix of the Lagrangian $L(\phi_{1}, \phi_{2})$ in eqn.(\ref{phi12phi22}) which is sum over all the Feynman diagrams that include no massive propagator, one massive propagator or no massless propagator. 

And let us denote the $(n-3)$ dimensional ABHY assodiahedron or  a $n$ dimensional AHST $\hat{D}$ polytope collectively as ${\cal PL}_{n}$. Let $\omega({\cal PL}_{n})$ be an abstract notation for the pull-back of the planar scattering form (in tree-level or 1-loop case) on the convex realisation ${\cal PL}_{n}$. 

Then, there exists a set of $S_{\cal G}^{(0)/(1)}$ of ${\cal G}$ linear maps such that
\begin{flalign}
\sum_{{\cal G}\, \in\, S^{(0)/(1)}_{{\cal G}}}\, a_{{\cal G}}\, \omega ({\cal G}\cdot {\cal PL}_{n})\, =\, M_{n}^{(0)/(1)} 
\end{flalign}
The set $S_{{\cal G}}^{(0)/(1)}$ is of course different in the tree-level and one-loop case. 

Non-trivial maps between the embedding space and ${\cal K}_{n}$ shed an interesting light on the CHY scattering equations, which in the literature so far are solely determined in terms of the external kinematics. 

As we proved however, the CHY parke-taylor form can in fact be pushed-forward to generate the S-matrix form on the deformed realisations of the associahedron, where the push-forward map is defined by the deformed scattering equations. The universality of CHY scattering equations for given external kinematics then is really the universality of the diffeomorphism between the CHY moduli space $\overline{{\cal M}}_{0,n}(\mathbb{R})$ and the ABHY realisation in the embedding space. It is the CHY scattering equations, which when composed with the  diffeomorphism from embedding space to ${\cal K}_{n}$ push-forward the Parke-Taylor form to tree-level amplitudes in an entire class of multi scalar QFTs with cubic coupling. 

Although our investigation has been rather preliminary, several interesting questions already emerge out of this work. We list just two of them.
\begin{itemize}
\item Can we find positive geometries for eqn.(\ref{phi12phi22}) to arbitrary orders in the coupling?
\item The deformation maps considered in this paper are diagonal subgroups ${\cal G}\, \subset\, GL(\frac{n(n-3)}{2}, R)\, \times\, R^{\frac{n(n-3)}{2}}$. Can one consider more general deformation maps which have off-diagonal entries? And can the positive geometries account for ``non-planar" channels via such maps? 
\end{itemize}
With regards to the second question, we offer a rather wild speculation. 

Let us consider $n\, =\, 5$ scattering in massless $\phi^{3}$ theory. We will now show that there exists a deformation map such that it's action on the ABHY associahedron maps to a polytope in ${\cal K}_{5}$ all of whose vertices correspond to non-planar channels.\footnote{We note that permutahedron has all but one channel non-planar. However a  $d$ dimensional permutahedron is not an associahedron for $d\, >\, 1$ and hence the permutahedron in the kinematic space can not be thought of as a deformed realisation of the associahedron.} 

We consider the following map from ${\cal K}_{5}$ (co-ordinatized by $X_{ij}$) to the embedding space (co-ordinatized by $\kappa_{ij}$).
\begin{flalign}\label{nonpg}
\begin{pmatrix}
\kappa_{13}\\
\kappa_{14}\\
\kappa_{24}\\
\kappa_{25}\\
\kappa_{35}
\end{pmatrix}
\, =\ \begin{pmatrix}
 -1 & 1 & -1 & 0 & 0  \\
 0 & -1 & 1 & -1 & 0 \\
 0 & 0 & -1 & 1 & -1 \\
 -1 & 0 & 0 & -1 & 1  \\
 1 & -1 & 0 & 0 & -1 
 \end{pmatrix}
\, \begin{pmatrix}
X_{13}\\
X_{14}\\
X_{24}\\
X_{25}\\
X_{35}
\end{pmatrix}
\end{flalign}
The deformed realisation of the ABHY associahedron in the embedding space can be drawn as a polytope by considering the $2$-dimensional plane given by $\tilde{s}_{ij} = c_{ij}$ and a positive wedge. We denote this polytope as ${\cal NCP}_{2}$ in the kinematic space. There exists a choice of $c_{ij}$ for which this realisation is convex and its shape in the $( X_{13}, X_{14} )$ plane is shown in fig. (\ref{non-planar-ass}) below. 

\begin{figure}[ht]
    \centering
    \includegraphics[scale=0.4]{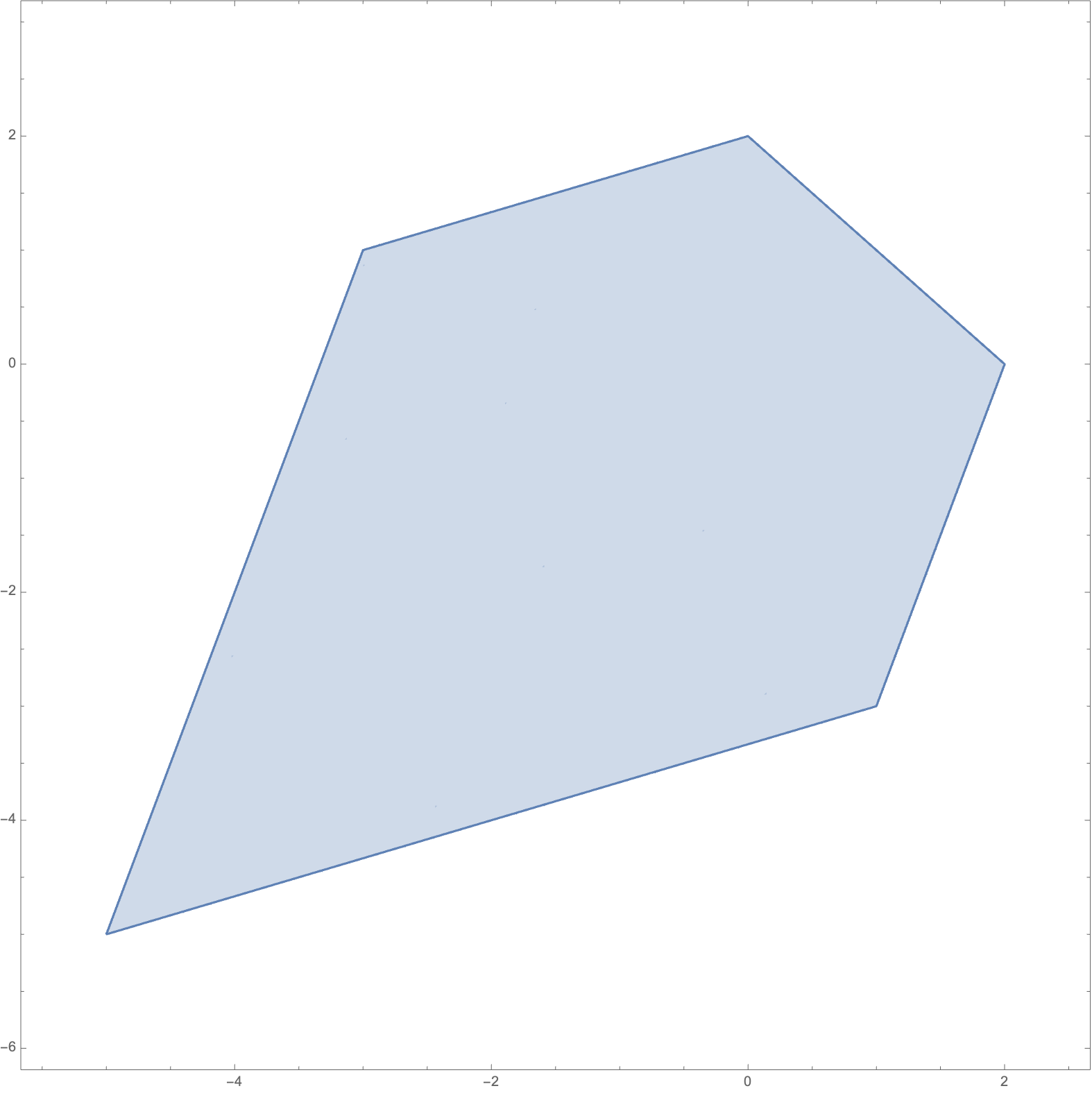}
    \caption{${\cal NCP}_{2}$ in $( X_{13}, X_{14} )$ plane}
    \label{non-planar-ass}
\end{figure}

Vertices of ${\cal NCP}_{2}$ are co-ordinatized by following points in the $\tilde{s}_{ij} = c_{ij}$ hyper-plane. 
\begin{flalign}\label{ncpverloc}
\{\, (s_{13} = 0, s_{24} = 0),\, (s_{13} = 0, s_{25} = 0),\, (s_{14} = 0, s_{25} = 0),\, (s_{14} = 0, s_{35} = 0),\, ( s_{24} = 0, s_{35} = 0)\, \}
\end{flalign}
In this simple example, the pull-back of the planar scattering form on ${\cal NCP}_{2}$ can be written as,
\begin{flalign}
\omega_{n=5}\vert_{{\cal NCP}_{2}}\, =\, [\, \frac{1}{s_{13}s_{24}}\, +\, \frac{1}{s_{13}s_{25}}\, +\, \frac{1}{s_{14}s_{25}}\, +\, \frac{1}{s_{14}s_{35}}\, +\, \frac{1}{s_{24}s_{35}}\, ]\, d s_{13}\, \wedge\, d s_{24}
\end{flalign}
where $s_{13}\, =\, X_{13} + X_{24} - X_{14}$, $s_{24}\, =\, X_{24}\, +\, X_{35}\, -\, X_{25}$. 
Although this pull-back can not be naturally associated to a volume of the dual polytope projected in one of the co-dimension three planes in ${\cal K}_{5}$, we do see that the rational function that enters in the formula is a sum over 5 non-planner channels in massless $\phi^{3}$ theory.

Expressing $\tilde{s}_{ij} = - c_{ij}$ (with reference $(13, 14)$) and writing them in terms of $X_{ij}$ we can express $(X_{24}, X_{35}, X_{25})$ in terms of $(X_{13},\, X_{14})$ and write, 
\begin{flalign}
d s_{13}\, =\, -\frac{3}{8}\, d X_{14} + \frac{1}{8}\, d X_{13}\\
d s_{14}\, =\, \frac{1}{4}\, d X_{13}\, +\, \frac{1}{4}\, d X_{14}
\end{flalign}
Hence
\begin{flalign}
d s_{13}\, \wedge\, d s_{14}\, =\, \, \frac{1}{16}\, d X_{14}\, \wedge\, d X_{13} 
\end{flalign}

Hence we see that the rational function that enters in the formula is proportional to the sum over 5 non-planner channels in massless $\phi^{3}$ theory. The co-efficient $\frac{1}{16}$ is rather mysterious and is an obstruction to interpreting the pull-back as a (partial) amplitude consisting of the 5 non-planar channels. 

Hence at this stage, it is too early to draw conclusive lessons from this single example, or indeed to even obtain the full tree-level scattering 5 point amplitude of massless uncolored scalars. However this example hints at a possibility that  non-planar channels maybe ``accessible" via space of linear maps from the kinematic space to the embedding space in which ABHY associahedron is located. 

We hope to come back to these questions in the future.

\appendix

\section*{Acknowledgement}
We are deeply thankful to Nima Arkani-Hamed for several stimulating discussions, penetrating questions as well as his constant encouragement. We also thank him for explaining the construction of the 1 loop kinematic space as well as $\hat{D}_{n}$ polytope prior to the publication of \cite{afpst}. We thank Pinaki Banerjee, Siddharth Prabhu, Prashanth Raman and Pushkal Srivastava for discussions. AL would like to thank TIFR string group for their hospitality where some of these results were presented in the Quantum Space-time Seminar.

\appendix

\section{Number of possible Couplings \texorpdfstring{$N_{c}$}{Nc} in section \ref{multiparadef}}\label{cardnc}
\begin{lemma}
The number of independent couplings $N_{c}$ in the n point is given by the following formula.
\begin{flalign}
\begin{array}{lll}
N_{c}\, =\, [\, (\ceil\, -\, 1)\, +\,  (\ceil\, -\, 2)\, ]\, +\, [\, (\ceil - 4)\, +\, (\ceil- 5)\, ]\, +\, \dots\,  \textrm{if}\ \ceil\ \textrm{is even}\\
\, =\, [\, (\ceil- 1)\, +\,  (\ceil\, -\, 2)\, ]\, +\, [\, (\ceil - 4)\, +\, (\ceil - 5)\, ]\, +\, \dots\, \textrm{if}\ \ceil\ \textrm{is odd}\\
\end{array}
\end{flalign}
\end{lemma}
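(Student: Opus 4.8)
\emph{From couplings to partitions.} The plan is to turn the count $N_c$ into a partition problem. As explained in Section~\ref{multiparadef}, a coupling $\lambda_{IJK}$ is present precisely when some triangle inscribed in the $n$-gon has sides of chord-lengths $I,J,K\in\{1,\dots,\ceil\}$, the chord-length of a side being the smaller of the two boundary arcs it cuts off. Such an inscribed triangle is determined, up to dihedral symmetry, by the multiset $\{a,b,c\}$ of its three arc-lengths, with $a+b+c=n$, $a,b,c\geq 1$ — i.e.\ by a partition of $n$ into exactly three positive parts — and its side-colours are $\{\min(a,n-a),\min(b,n-b),\min(c,n-c)\}$. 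First I would check that this assignment is a bijection onto the admissible colour multisets: since $a+b+c=n$, at most one part exceeds $n/2$, so the colour multiset is either $\{a,b,c\}$ itself (all parts $\leq n/2$) or $\{a,b,a+b\}$ (if, say, $c>n/2$, whence $a+b=n-c<n/2$); in both cases the partition can be recovered from the colours, and conversely a triple $1\leq I\leq J\leq K\leq\ceil$ is admissible iff $I+J=K$ or $I+J+K=n$ (the only solutions of the defining congruence in the allowed range), and each such triple is hit. Hence $N_c$ equals $p_3(n)$, the number of partitions of $n$ into three positive parts.

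\emph{Evaluating $p_3(n)$.} Writing the parts as $a\leq b\leq c$ and summing over the smallest part $a$, which runs over $1\leq a\leq\lfloor n/3\rfloor$, the number of ways to split $n-a$ into two parts each at least $a$ is $\lfloor (n-a)/2\rfloor-a+1=\lfloor (n-3a)/2\rfloor+1$, so
\begin{flalign}
N_c\, =\, p_3(n)\, =\, \sum_{a=1}^{\lfloor n/3\rfloor}\left(\left\lfloor\tfrac{n-3a}{2}\right\rfloor+1\right).
\end{flalign}

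\emph{Matching the displayed form.} For $n=2\ceil$ one computes $\lfloor (n-3a)/2\rfloor+1=\ceil+1-\lceil 3a/2\rceil$, so the successive summands (for $a=1,2,3,4,\dots$) are $\ceil-1,\ \ceil-2,\ \ceil-4,\ \ceil-5,\ \ceil-7,\ \ceil-8,\dots$, i.e.\ $\ceil-i$ with $i$ running through the positive integers not divisible by $3$. Pairing $a=2k+1$ with $a=2k+2$ reproduces exactly the brackets $[(\ceil-3k-1)+(\ceil-3k-2)]$ of the statement, the series terminating at $a=\lfloor n/3\rfloor$; keeping track of where that truncation lands — which, for fixed parity of $\ceil$, is governed by $n$ modulo $6$ — produces the last (possibly incomplete) bracket and hence the two displayed expressions. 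The only genuinely delicate step is this last one: making the floor functions, the endpoint $a=\lfloor n/3\rfloor$, and the final bracket line up consistently across the residue classes so that the ``omit every third value of $i$'' pattern comes out in precisely the stated shape; the earlier steps are routine once the partition dictionary is in place.
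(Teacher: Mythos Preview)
Your proof is correct, and at the level of the underlying combinatorics it follows the same route as the paper: both count the admissible triples $\lambda_{IJK}$ by fixing the smallest label (your smallest part $a$, the paper's smallest index $I$) and summing over $1\leq a\leq\lfloor n/3\rfloor$, which produces the sequence $\ceil-1,\ \ceil-2,\ \ceil-4,\ \ceil-5,\dots$ and hence the bracketed display.

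The genuine addition in your version is the clean identification $N_c=p_3(n)$, the number of partitions of $n$ into three positive parts, via the bijection between colour multisets and arc-length partitions. The paper does not state this; it simply lists the sets $\{\lambda_{112},\dots\},\{\lambda_{224},\dots\},\dots$ grouped by smallest index and asserts that their union has the right cardinality. Your partition dictionary makes the admissibility condition ($I+J=K$ or $I+J+K=n$) transparent and lets you invoke the standard formula $p_3(n)=\sum_{a}(\lfloor(n-3a)/2\rfloor+1)$, whereas the paper's enumeration leaves that step implicit. So the two arguments are the same count organised the same way, but yours supplies the conceptual wrapper and the intermediate closed form that the paper omits. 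One small point: your derivation of the bracket pattern writes $n=2\ceil$, i.e.\ tacitly takes $n$ even; the paper's proof is equally silent on odd $n$, so you are not missing anything relative to it, but if you want a complete argument you should run the floor-function bookkeeping through the odd case as well.
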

where the dots indicate that the last terms in the sum is either $(4+3)\, 1$ or $(6 + 5),\, (3 +2)$. Note that each pair in the sum is such that difference between the second term in $k$-th pair and the first term is $k+1$th pair is two. 

\begin{proof}

The proof is a simple exercise in combinatorics. Let us first consider the case when $\frac{n}{2}\, =\, \textrm{Even}$.  Let us use a ``dual" index $I,J, K, \dots$ to denote a chord  between two vertices $i, j$ with $I\, =\, \vert\, j - i\, \vert$. Hence a scalar field with mass $m_{ij}$. $\vert j - i\vert$ will be denoted as $\phi_{I}\, \vert\, I = j - i$. 

Hence we have a spectrum of $\frac{n}{2}$ fields labelled as $\phi_{I}\, \vert\, I\, \in\, \{1,\, \dots,\, \frac{n}{2}\, \}$. Let us denote the coupling between $\phi_{I},\, \phi_{J},\, \phi_{K}$ as $\lambda_{IJK}$. It is immediate that the list of \emph{independent} couplings can be enumerated in the following disjoint sets.
\begin{flalign}
\begin{array}{lll}
\textrm{Set}\, =\\
\{\, \lambda_{112}, \dots,\, \lambda_{1,\frac{n}{2}-1,\frac{n}{2}-1}\, \}\, \{\, \lambda_{224}, \dots,\, \lambda_{2,\frac{n}{2}-2,\frac{n}{2}-1}\, \}\, \dots\\
\hspace*{1.5in}(\, \{\lambda_{\lfloor\frac{n}{3}\rfloor,\lfloor\frac{n}{3}\rfloor,\lfloor\frac{n}{3}\rfloor}\, \}\, \textrm{or}\, \{\, \lambda_{\lfloor\frac{n}{3}\rfloor,\lfloor\frac{n}{3}\rfloor,\lfloor\frac{n}{3}\rfloor + 1}\, \}\, \textrm{or}\, \{\, \lambda_{\lfloor\frac{n}{3}\rfloor,\lfloor\frac{n}{3}\rfloor+1,\lfloor\frac{n}{3}\rfloor + 1}\, \}\, )
\end{array}
\end{flalign}
Cardinality of the set defined above equals $N_{c}$ defined above. This completes the proof.
\end{proof}

\section{Proof of lemma \ref{coupvsalpha}}\label{lemma53proof}
In this appendix we prove the lemma \ref{coupvsalpha}. 

\begin{lemma}
The canonical form $\Omega_{n-3}\vert_{A_{n-3}^{d}}$ associated to $A_{n-3}^{d}$ is a tree-level color ordered S-matrix for an interacting QFT involving $\ceil + 1$ species of bi-adjoint scalars  with masses $m_{I}\, \vert\, I\, =\, 1,\, \dots,\, \ceil$ with the following constraints on the couplings. 
\begin{itemize}
\item All the couplings $\lambda_{IJK}$ are non-zero if and only if $I\, +\, J\, =\, K\, \textrm{modulo}\, n$ and 
\item The number of relations this set of couplings have to satisfy equals the number ${\cal C}_{n}$ of 3-cycles $c_{IJK}$ in the colored dissection quiver  where $I, J, K\, >\, 2\, \textrm{modulo}\, n$.
\end{itemize}
\end{lemma}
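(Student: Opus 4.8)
The plan is to turn each colored triangulation of the $n$-gon into a Feynman diagram of the Lagrangian \eqref{n5lag} and then compare the canonical form \eqref{dracf1} with the corresponding sum of diagrams term by term. I would read a chord $(p,q)$ of length $I$ (where $I=|p-q|$ reduced into $\{1,\dots,\ceil\}$ modulo $n$) as a propagator $\tfrac{1}{X_{pq}-m_I^2}=\tfrac{1}{\tilde X_{pq}}$ of the field $\phi_I$, and a triangle of the triangulation whose three sides carry colours $I,J,K$ as the cubic vertex $\lambda_{IJK}$. For this to be a bijection onto the set of all trivalent diagrams contributing to the $n$-point amplitude, the non-vanishing couplings must be exactly the $\lambda_{IJK}$ for which a triangle with side-colours $I,J,K$ can be inscribed in the $n$-gon. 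Since such a triangle has side lengths summing to $n$, a short computation reducing these lengths into $\{1,\dots,\ceil\}$ shows that the realisability condition is precisely the congruence $I+J=K$ modulo $n$, which is the first bullet.

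With the dictionary fixed, $\Omega_{n-3}\vert_{A_{n-3}^{\{\alpha\}}}$ equals the tree amplitude $M_n$ up to one overall scale if and only if, for every triangulation $T$, the coefficient $\prod_{e\in T_0}\alpha_{c(e)}\,/\,\prod_{e\in T}\alpha_{c(e)}$ appearing in \eqref{dracf1} is a fixed constant times $\prod_{\triangle\in T}\lambda_\triangle$. Evaluating this at $T=T_0$ fixes the constant, and the remaining content is exactly \eqref{mascon}, namely that $\Phi(T):=\prod_{e\in T}\alpha_{c(e)}\,\prod_{\triangle\in T}\lambda_\triangle$ be independent of $T$. Since the flip graph on triangulations of the $n$-gon is connected, it is enough to demand $\Phi(T)=\Phi(T')$ whenever $T,T'$ differ by a single flip. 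Such a flip sits inside a quadrilateral with consecutive edge-colours $p,q,r,s$, replacing the diagonal of colour $u$ (with triangles $(p,q,u)$ and $(u,r,s)$) by the diagonal of colour $v$ (with triangles $(p,v,s)$ and $(q,r,v)$), and leaves every other chord and triangle of $T$ unchanged; the condition collapses to the single multiplicative relation $\alpha_u\,\lambda_{pqu}\,\lambda_{urs}=\alpha_v\,\lambda_{pvs}\,\lambda_{qrv}$, one per combinatorial type of flippable quadrilateral.

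It then remains to count how many of these relations are not absorbed by the $\ceil-1$ parameters $\alpha_2,\dots,\alpha_{\ceil}$. Here I would invoke the identity \eqref{acycid}: $\prod_{e\in T}\alpha_{c(e)}$ takes one and the same value on \emph{all} triangulations whose dissection quiver is acyclic, i.e. which contain no internal triangle (a triangle all three of whose sides are diagonals). Restricted to that family, \eqref{mascon} then reduces to the statement that $\prod_{\triangle\in T}\lambda_\triangle$ is constant there, and this is automatic because all acyclic triangulations carry the same multiset of colour-labelled triangles (another short combinatorial check). Hence every flip relation between two acyclic triangulations — and more generally every relation reachable from those once the $\alpha_I$ have been chosen — is vacuous. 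What survives are the flips that create or destroy an internal triangle whose three sides carry colours $>2$: such an internal triangle is exactly a $3$-cycle $c_{IJK}$ with $I,J,K>2$ modulo $n$ in the colored dissection quiver of \cite{Padrol2019AssociahedraFF}, and one shows that each of them contributes exactly one relation among the $\lambda_{IJK}$ that is not implied by the others, whereas internal triangles carrying a colour $\le 2$ (such as $c_{22K}$) contribute none since the associated coupling is still absorbable. The number of independent constraints is therefore the number ${\cal C}_n$ of such $3$-cycles, whose closed form is \eqref{formcn}.

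The hard part will be that last step: after fixing the $\alpha_I$, establishing the exact bijection between the surviving flip relations and the $3$-cycles $c_{IJK}$ with $I,J,K>2$. Concretely I would need to (i) choose, for each colour $I\in\{2,\dots,\ceil\}$, one flip relation that can be solved for $\alpha_I$; (ii) check that after substitution every flip relation among acyclic triangulations, and every relation attached to an internal triangle carrying a colour $\le 2$, becomes an identity — this is where \eqref{acycid} and the ``same multiset'' property are essential; and (iii) check that each remaining internal triangle contributes exactly one independent residual relation and that distinct internal-triangle colour types give independent relations. Step (iii) is a bookkeeping argument over internal-triangle colour types organised by the same enumeration as in the proof of \eqref{formcn}; getting the independence count exactly right, in particular excluding hidden dependencies among the residual relations, is the part that requires the most care.
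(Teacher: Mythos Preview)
Your proposal is correct and follows essentially the same strategy as the paper. The paper's argument in Appendix~\ref{lemma53proof} is slightly more concrete where yours is structural: rather than invoking flip-graph connectivity, it works directly with the fan $T_{0}=\{(1,3),\dots,(1,n-1)\}$ and the single flips $T_{0}\to T_{i}$ replacing $(1,i)$ by $(i-1,i+1)$, reads off that the ratio of residues is $\alpha_{13}/\alpha_{1i}$, and thereby fixes the deformation parameters in terms of the couplings $\lambda_{IJK}$ with at least one index $\le 2$; then for each triangulation carrying a single $3$-cycle with $I,J,K>2$ the master relation \eqref{mascon} becomes one residual constraint on the corresponding $\lambda_{IJK}$. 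Your route via ``all acyclic triangulations carry the same multiset of colour-labelled triangles'' is an equivalent repackaging of that first step (the paper does not state it this way, but it is what \eqref{acycid} together with the fan computation amounts to). Both arguments are equally informal on the final point --- that the ${\cal C}_{n}$ residual relations are genuinely independent --- which you correctly flag as the part needing the most care.
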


\begin{proof}
$ $\newline
Each coupling $\lambda_{IJK}$ corresponds to a triangle with the three edges of the triangle  colored by indices $I, J, K$ respectively. Hence $I + J = K$ modulo $n$. No such couplings can be zero as the residue of the canonical form is non vanishing on all vertices of the associahedron.   This proves the first statement.

Before proving the second statement, let us recall that the number of independent $\lambda_{IJK}$ is same as the number of distinct colored triangles which can be drawn using $\ceil$ colored edges corresponding to spectrum of massive fields and the un-coloured external edge that corresponds to the massless field. 


Let us consider two terms in $\Omega_{n-3}\vert_{A_{n-3}^{d}}$ which correspond to the two triangulations, $T_{0}\, =\, \{\, 13,\, 14,\, \dots,\, 1n-1\, \}$ and $T_{i}\, =\, \{\, 13,\, 14,\, \dots,\, (1,i-1),\, (i-1,i+1),\, (1,i+1),\, \dots,\, 1n-1\, \}$. The ratio of the residues of the form evaluated on the corresponding vertices of $A_{n-3}^{d}$ equals $\frac{\alpha_{13}}{\alpha_{1i}}$.  This implies that ratio of the co-efficient multiplying  $\frac{1}{\prod_{i=3}^{n-1}\, \tilde{X}_{1i}}$, with co-efficient of  
\begin{flalign}\nonumber
\frac{1}{\prod_{m_{1}=3}^{j_{1}-1}X_{1m_{1}}\, (\, X_{j_{1}-1,j_{1}+1}\, X_{1, j_{1}+1}\, )\, \prod_{m_{2}=j_{1}+1}^{j_{2}-1}\, X_{1 m_{2}}\, (\, X_{j_{2}-1,j_{2} + 1}\, X_{1, j_{2} + 1}\, )\, \dots\, X_{1,n-1}\, }
\end{flalign}
equals $(\, \frac{\alpha_{13}}{\alpha_{1,j_{1}}}\, \frac{\alpha_{13}}{\alpha_{1,j_{2}}}\, \dots\, )$. That is, up to an overall scaling a channel in which all the closed loops $\{\, (ij), (jk), (ki)\, \}$ have at least one edge of length 2 modulo $n$ are generated from the ratios $\{\, \frac{\alpha_{13}}{\alpha_{1i}}\, \}$. 

Now consider a vertex of the associahedron corresponding to a triangulation $T$, that contains precisely one closed loop $I, J, K\, >\, 2$.  The residue of $\Omega_{n-3}$ evaluated on this vertex must satisfy the constraint,
\begin{flalign}\label{app11}
\prod_{(ij)\, \in\, T}\, \alpha_{ij}\, \prod_{\triangle\, \in\, T^{\star}}\, \lambda_{\triangle}\, =\, 
\prod_{(ij)\, \in\, T_{0}}\, \alpha_{ij}\, \prod_{\triangle\, \in\, T_{0}^{\star}}\, \lambda_{\triangle}\, \end{flalign}
where $\lambda_{\triangle}$ contains precisely one coupling $\lambda_{0}$ with $I, J, K\, >\, 2$. As the deformation parameters have been mapped onto the couplings $\lambda_{IJK}$ where atleast one of the indices equals $2$, eqn.(\ref{app11}) is a constraint on $\lambda_{0}$.  This proves that $\Omega_{n-3}\vert_{A_{n-3}^{d}}$ is the $n$ point amplitude in a $\ceil$-biadjoint cubic theory where the set of all couplings satisfy  ${\cal C}_{n}$ relations. 
\end{proof} 

\bibliographystyle{unsrt}
\bibliography{Bibliography}

\end{document}